\newtheorem{definition}{Definition}
\newtheorem{theorem}{Theorem}
\newtheorem{lemma}{Lemma}
\newtheorem{example}{Example}
\begin{document}

\title{Evolving k-Threshold Visual Cryptography Schemes}

\title{Evolving k-Threshold Visual Cryptography Schemes}
\author{Xiaoli Zhuo, Xuehu Yan, Lintao Liu, and Wei Yan% <-this % stops a space
\thanks{Xiaoli Zhuo, Xuehu Yan, Lintao Liu, and Wei Yan are with the College of Electronic Engineering, National University of Defense Technology, Hefei 230037, China, and also with Anhui Province Key Laboratory of Cyberspace Security Situation Awareness and Evaluation, Hefei 230037, China~(email: yan.wei2023@nudt.edu.cn). (Corresponding author: Wei Yan)}
}

% The paper headers
\markboth{Journal of \LaTeX\ Class Files}
{Shell \MakeLowercase{\textit{et al.}}: A Sample Article Using IEEEtran.cls for IEEE Journals}

\maketitle

\begin{abstract}
In evolving access structures, the number of participants is countably infinite with no predetermined upper bound. 
While such structures have been realized in secret sharing, 
research in secret image sharing has primarily focused on visual cryptography schemes (VCS). 
However, there exists no construction for $(k,\infty)$ VCS that 
applies to arbitrary $k$ values without pixel expansion currently, 
and the contrast requires enhancement. 
In this paper, 
we first present a formal mathematical definition of $(k,\infty)$ VCS. 
Then, propose a $(k,\infty)$ VCS based on random grids that works for arbitrary $k$. 
In addition, to further improve contrast, 
we develop optimized $(k,\infty)$ VCS for $k=2$ and $3$, 
along with contrast enhancement strategies for $k\geq 4$. 
Theoretical analysis and experimental results demonstrate the superiority of our proposed schemes. 
\end{abstract}

\begin{IEEEkeywords}
Evolving secret sharing, visual cryptography schemes, contrast, random grids
\end{IEEEkeywords}

\section{Introduction}
In modern cryptographic systems, 
centralized key management often exhibits significant security vulnerabilities, 
including single points of failure and insider threats. 
Secret sharing (SS), with its distributed and decentralized characteristics, 
has emerged as a crucial solution to these issues. 
SS technology divides the secret into multiple shares.
It distributes them to participants, 
with specific subsets designated as qualified sets, 
only when participants from a qualified set collaborate can the original secret be reconstructed. 
SS has been widely applied in multiple security-sensitive domains such as 
key management~\cite{key_management}, 
cloud storage~\cite{cloud_storage}, 
and secure multi-party computation~\cite{multiparty}.

The $(k,n)$-threshold SS~\cite{shamir_k_n_SS,blakley_k_n_SS} is one of the most common types of SS schemes; 
it divides the secret into $n$ shares and defines the qualified subset as any group of participants with a size no less than $k$.  
The $(k,n)$-threshold SS fixes the number of participants to $n$. 
However, it is sometimes impossible to estimate $n$ or difficult to determine an upper bound for $n$ in reality. 
To address this, Komargodski \textit{et al.}~\cite{komargodski_1,komargodski_2} 
proposed a new subclass of SS known as evolving secret sharing (ESS). 
In ESS, the number of participants is countably infinite, 
and it's assumed that the $t$-th participant arrives at time $t$. 
Additionally, it requires that when distributing subsequent shares, 
the previously distributed shares can't be altered. 
Currently, 
there is research on the construction of schemes with minimal total share size~\cite{construction1_ESS,construction2_ESS}, 
the construction of evaluation metrics~\cite{metric_ESS}, 
evolving ramp SS~\cite{ramp_ESS_1,ramp_ESS_2}, and quantum ESS~\cite{quantum_ESS_1,quantum_ESS_2}.

Secret image sharing (SIS) is a branch of SS that applies the concept of SS to image protection. 
In SIS, 
multiple shadow images are generated from the secret image via the sharing phase 
and distributed to participants. 
Only when a qualified subset of participants collaborates can the original secret image be fully or approximately reconstructed. 
SIS technology encompasses various approaches, 
including polynomial-based SIS~\cite{PSIS1}, 
CRT-based SIS~\cite{CRTSIS1}, and visual cryptography schemes (VCS)~\cite{VCS1,VCS2}. 
Notably, VCS requires no cryptographic computations;
instead, it relies on stacking shadow images for recovery, 
where the secret becomes directly visible to the human visual system. 
This unique characteristic makes VCS particularly suitable for scenarios requiring instant recovery without computational devices. 
Three key characteristics typically receive significant attention in the field of VCS: 
pixel expansion, visual quality, and access structure.

In Shamir's original $(k,n)$-threshold VCS\cite{shamir_VCS}, 
each secret pixel is expanded into at least $m~(m \geq 2)$ subpixels, 
with $m$ representing the pixel expansion. 
This process results in shadow images $m$ times larger than the original secret image, 
leading to significant storage and transmission overhead. 
Such VCSs with pixel expansion are generally constructed using basis matrices 
and are referred to as basis matrix-based VCS (BMVCS). 
There exists another category of VCS without pixel expansion, 
known as size invariant VCS (SIVCS)~\cite{SIVCS_1,SIVCS_2}, 
whose implementations primarily follow two approaches: 
random grid-based VCS (RGVCS) and probabilistic model-based VCS (PVCS). 
The visual quality in VCS is generally measured by contrast; 
the higher the contrast, the better the discernibility of the recovered image. 
BMVCS generally requires the construction of basis matrices to develop higher-contrast schemes~\cite{contrast_BMVCS_1,contrast_BMVCS_2}. 
PVCS transforms the probability of white color appearing in the shared pixel.
As demonstrated in works like~\cite{PVCS_contrast_1,PVCS_contrast_2}, 
this approach achieves optimal contrast in PVCS by solving linear programming problems. 
RGVCS shares the secret image into multiple equal-sized random grids; 
the approaches proposed by~\cite{RGVCS_contrast_1,RGVCS_contrast_2,RGVCS_contrast_3} 
primarily optimize contrast on $(k,n)$-threshold by modifying the bit generation method. 
Studies such as~\cite{XOR_1,XOR_2} focus on recovery techniques, 
designing schemes that combine both superimposition and XOR recovery to improve contrast. 
Additionally, works like~\cite{GRG_1,GRG_2} adopt a generalized random grid (GRG) model, 
where each pixel is set to a transparent or opaque state with probability of $g~(0< g < 1)$, 
enabling contrast enhancement through adaptive adjustment of the $g$.

In SS, the access structure defines which subsets of participants are qualified to reconstruct the secret 
and which subsets are forbidden from doing so. 
Common access structures include the $(k,n)$-threshold access structure~\cite{RGVCS_contrast_1} and general access structure~\cite{general_access_structure_1,general_access_structure_2}, 
which have already been implemented in VCS. 
However, the access structure implemented in VCS cannot meet the requirements for dynamic modifications. If participants are added, 
the previous access structure will be revoked, 
requiring all shares to be recreated and redistributed, 
which significantly increases computational overhead 
and consumes additional communication resources. 
When applying evolving access structure (EAS),
which allows for a dynamic and unbounded number of participants, 
to VCS, the scheme can accommodate the dynamic addition of participants. 
Unfortunately, research on the implementation of EAS in VCS remains relatively limited. 

Pioneering work by Chen \textit{et al.}~\cite{chen2012jvcir} first introduced EAS into VCS, 
demonstrating a GRG-based $(2,\infty)$ VCS, 
which yields the contrast value of $\frac{\sqrt{2}-1}{2}$
without pixel expansion. 
Lin \textit{et al.}~\cite{lin2012tifs} developed a $(k,\infty)$ PVCS supporting EAS, 
where contrast optimization is formulated as a linear programming problem. 
However, this approach is only applicable for $2\leq k \leq 6$. 
Building upon this foundation, 
Wu \textit{et al.}~\cite{wu2023tmm} extended Lin's probabilistic model by introducing a penalty parameter to 
balance security and visual quality, 
improving contrast for $(k,\infty)$ VCS at the cost of sacrificing certain security, 
where $2\leq k \leq 7$. 
All three aforementioned schemes can implement EAS in VCS without pixel expansion, 
but they impose restrictions on the value of $k$.
Recent work~\cite{wu2025evcs} first introduced a $(k,\infty)$ VCS for arbitrary $k$ values, 
but the scheme exhibits enormous pixel expansion as $n$ approaches infinity. 
Consequently, 
current research in the field of SIS construction enabling EAS faces a limitation:  
there exists no $(k,\infty)$ SIS that applies to any $k$ value without pixel expansion. 

Additionally, another significant limitation persists: 
recovered images exhibit severely degraded visual quality, 
becoming effectively unrecognizable, especially when $k\geq 4$ 
in existing $(k,\infty)$ VCSs. 
This makes it preferable to set $k=2$ or $3$ in practical applications~\cite{lin2012tifs}, 
while cases with $k\geq 4$ remain primarily theoretical explorations~\cite{block}. 
Thus, it's meaningful to further improve the contrast when $k=2$ and $3$, 
and it's essential to develop new approaches or strategies to transform 
the unrecognizable recovered images into visually recognizable ones 
for schemes with $k\geq 4$.

In this paper, 
we propose a $(k,\infty)$ RGVCS, 
which imposes no restrictions on the value of $k$ and eliminates pixel expansion. 
Furthermore, to achieve higher contrast, 
we present enhancement $(k,\infty)$ schemes for $k=2$ and $k=3$, 
along with improvement strategies for cases where $k\geq4$. 

The main contributions of this paper can be summarized as follows: 
\begin{enumerate}
    \item A more rigorous mathematical definition for $(k,\infty)$ VCS is provided, 
            and a $(k,\infty)$ SIS construction without pixel expansion for arbitrary $k$ values is presented for the first time, 
            and its theoretical contrast formula is derived. 
    \item For $k = 2$ and $3$, the currently optimal $(2,\infty)$ VCS and $(3,\infty)$ VCS 
            are proposed. Specifically, $(2,\infty)$ VCS is relatively better than Ref.~\cite{chen2012jvcir}, and $(3,\infty)$ VCS is relatively better than Ref.~\cite{lin2012tifs}, where the definition of relatively better is given in 
            Definition \ref{defnition:better_definition}. 
    \item For $k\geq 4$, 
            two contrast enhancement methods: XOR-based recovery 
            and multiple shadow images stacking, 
            are provided for the proposed $(k,\infty)$ RGVCS. 
            Both theoretical analysis and experimental results verify that these two methods significantly enhance the contrast.
\end{enumerate}

The subsequent sections of this paper are organized as follows. 
An overview of several categories of VCS, along with concepts and definitions of ESS, is provided in Section \ref{section:Preliminaries}. 
Section \ref{section:definition_evaluation} presents the formal definition and contrast of $(k,\infty)$ VCS. 
The proposed $(k,\infty)$ RGVCS supporting arbitrary $k$ is introduced in Section \ref{section:k_infty_RGVCS}. 
Section \ref{section:better_scheme_2_3} describes the contrast-enhanced $(k,\infty)$ schemes specifically designed for $k = 2$ and $3$, 
and Section \ref{section:k_geq_4} develops improved contrast strategies for $k \geq 4$. 
Experimental results and comparative analyses are detailed in Section \ref{section:experiment and comparison}. 
Finally, 
the paper concludes in Section \ref{section:conclusion} with a summary of contributions and future work.

\section{Preliminaries}\label{section:Preliminaries}
This section introduces three types of VCS and ESS, covering their fundamental definitions. 
We first define some notations to be used later. 
The symbol $\mathbb{N}^*$ denotes the set of positive integers.
Let $R[S]$ denote an element randomly selected from the set $S$, 
and $R[S,x]$ denote $x$ elements randomly selected from the set $S$, where $x$ is an integer.
$|S|$ is defined as the cardinality of set $S$, 
with the extension that for multiset $S$, $|S|$ represents the total count of elements in $S$. 
For example, $S=\{1,1,2\}$, then $|S|=3$. 
The notation $[x]$ denotes the set of integers from $1$ to $x$, 
i.e., $\{1,2,\ldots, x\}$. 
The symbol $\otimes$ represents the OR operation in logical computation, corresponding to the stacking recovery method in VCS, 
and $\oplus$ denotes the XOR operation in logical computation. 
The notation $\left\langle {a_1}^{\!b_1}, {a_2}^{\!b_2}, \ldots, {a_n}^{\!b_n} \right\rangle$ denotes a multiset where each element $a_i~(1 \leq i \leq n)$ appears with multiplicity $b_i$.

\subsection{RGVCS} 
In RGVCS, 
the secret image $S$ is shared into several random grids (shadow images) of the same size as $S$. 
The random grid is essentially a two-dimensional array of pixels, 
where each pixel assumes a value of 0 or 1 with a probability of $\frac{1}{2}$.  
In this paper, the value 0 denotes a transparent (white) pixel that permits light to pass through, and 
the value 1 represents an opaque (black) pixel that completely blocks light. 
We present some relevant definitions, a specific sharing algorithm, and the evaluation metric for RGVCS 
in this subsection. 

\begin{definition}(Region representation)\cite{2007_shyu}
    For a binary image $S$,
    let the white and black regions of $S$ be defined as $\Omega _S^0 \triangleq \{(i,j)\mid S[i,j] = 0\}$, 
    and $\Omega _S^1 \triangleq \{(i,j)\mid S[i,j] = 1\}$, 
    where $S[i,j]$ represents the pixel value at coordinates $(i,j)$. 
     
    Consider another image $I$ with the same size as $S$, 
    let $I|_{\Omega _S^0}$ and $I|_{\Omega _S^1}$ denote 
    the corresponding positions of $S$'s white and black regions in $I$, respectively. 
\end{definition}

\begin{definition}(Light transmission)\cite{2007_shyu}
    \label{eq:average light transmission}
    The light transmission is defined as follows 
    for a single pixel and the entire image: 
    \begin{enumerate}
        \item The light transmission for a single pixel $i$ in the image $I$, denoted as $l(i)$, 
        is defined as the probability that the pixel value is $0$, 
        i.e., $l(i)=\Pr(i=0)$. 
        Particularly, if $i$ is a white pixel, then $l(i)=1$; if $i$ is a black pixel, then $l(i)=0$.
        \item The light transmission for a binary image $I$ of size $h\times w$, denoted as $L(I)$, 
        is defined as: 
        \begin{equation}
             L(I) \triangleq \frac{1}{h\times w}\sum_{i=1}^{h}\sum_{j=1}^{w} l(I[i,j]),
        \end{equation}
        where $I[i,j]$ represents the pixel value at coordinates $(i,j)$ in $I$.
    \end{enumerate}
\end{definition}
In practice, since the pixel values $(0 ~\text{or}~ 1)$ at any position 
in the binary image $I$ are completely determined, 
$L(I)$ is computationally equivalent to: 
\begin{equation*}
    L(I) \triangleq \frac{| \Omega_I^0 |}{h \times w}, 
\end{equation*}
where $| \Omega_I^0 |$ denotes the number of elements in $\Omega_I^0$, 
i.e., the count of white pixels in $I$. 
Additionally, the light transmission of a random grid equals $\frac{1}{2}$ 
due to its equal distribution of black and white pixels.

Contrast is employed as the evaluation metric in RGVCS. 
A higher contrast value of the recovered image indicates superior recovery quality. 
The contrast is defined as follows. 

\begin{algorithm}[t]
    \caption{($s,k,k$) RGVCS}
    \label{alg:k_k}
    \DontPrintSemicolon
    \SetAlgoLined
    \KwIn{a secret bit $s$; an integer $k~(k\geq 2)$}
    \KwOut{$k$ bits $b_1,b_2,\ldots,b_k$} 
      \For{$i \gets$ $1$ $to$ $k-1$}{
          $b_i = R[\{0,1\}]$\;
      }
      $b_k = s\oplus b_1 \oplus b_2 \oplus \cdots \oplus b_{k-1}$\;
      \Return{$b_1,b_2,\ldots,b_k$} 
  \end{algorithm}

\begin{definition}~\cite{2007_shyu}
    \label{definition:contrast}
    The contrast in $(k,n)$ RGVCS when stacking $t~(k\leq t \leq n)$ shadow images, 
    denoted as $\alpha_{(k,n,t)}$, is defined as:
        \begin{equation*}
            \label{eq:contrast}
            \alpha_{(k,n,t)} = \frac{L(SC^{[t]} |_{\Omega_S^0}) - L(SC^{[t]} |_{\Omega_S^1})}{1 + L(SC^{[t]} |_{\Omega_S^1})}
            = \frac{l(b^{[t]}|_{s=0}) - l(b^{[t]}|_{s=1})}{1+l(b^{[t]}|_{s=1})},
        \end{equation*}    
  where $SC^{[t]}$ denotes the stacking result of any $t$ shadow images, 
  and $b^{[t]}|_{s=x}$ represents the single-pixel in $SC^{[t]} |_{\Omega_S^x}$, where $x=0 ~\text{or} ~1$. 
\end{definition}

Two contrast-related conditions that a valid $(k,n)$ RGVCS must satisfy are given in the following definition. 
\begin{definition}
    A $(k,n)$ RGVCS is considered valid if it satisfies the following two conditions: 
  \label{defnition:conditions}
  \begin{enumerate}[label=(\arabic*)]
    \item (\textit{Visually recognizable condition})
          The recovered image obtained by stacking $k$ or more shadow images 
          can be recognized by human visual system, 
          which means $\alpha > 0$. 
  
    \item (\textit{Security condition})
          The recovered image obtained by stacking less than $k$ shadow images 
          can't reveal any secret information, 
          which means $\alpha = 0$. 
  \end{enumerate}
\end{definition}

For the sharing phase of $(k,k)$ RGVCS, 
the secret image is shared pixel by pixel according to Algorithm \ref{alg:k_k}, 
yielding $k$ shadow images upon completion. 

Then, we use the following Lemma to demonstrate 
the count of $\mathcal{K}\triangleq\{b_1,b_2,\ldots,b_k\}$ generated by Algorithm \ref{alg:k_k} 
containing a specific number of zeros.

\begin{lemma}\label{lemma:0s}
    Let $z_0^i$ and $z_1^i$ denote the count of $\mathcal{K}$ containing exactly $i$ $0$s when $s=0$ and $s=1$. 
    Then, 
       \begin{equation*}
        z_0^i = \left\{
                \begin{aligned} 
                    &\binom{k}{i}, ~\text{if}~i \equiv k \!\! \!\! \!\!\pmod2   \\
                    &0,  ~~~~~\text{otherwise}
                \end{aligned} 
            \right. ,
        z_1^i = \left\{
                \begin{aligned} 
                    &\binom{k}{i},  ~\text{if}~ i \equiv k+1 \!\! \!\! \!\!\pmod2   \\
                    &0,  ~~~~~\text{otherwise}
                \end{aligned} 
            \right. .
    \end{equation*} 
\end{lemma}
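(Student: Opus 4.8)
The plan is to reduce the lemma to a single combinatorial fact about the parity of binary strings. First I would record the defining invariant of Algorithm~\ref{alg:k_k}: since $b_k = s \oplus b_1 \oplus \cdots \oplus b_{k-1}$, every output satisfies $b_1 \oplus b_2 \oplus \cdots \oplus b_k = s$. Moreover, the map sending a choice of the free bits $(b_1,\ldots,b_{k-1}) \in \{0,1\}^{k-1}$ to the full tuple $(b_1,\ldots,b_{k-1},b_k)$ is a bijection onto the set of all length-$k$ binary tuples whose XOR equals $s$; hence each such tuple is produced exactly once, and counting outputs of the algorithm is the same as counting length-$k$ tuples with prescribed XOR. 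Writing $z_s^i$ as a uniform shorthand for $z_0^i$ and $z_1^i$, this lets me interpret $z_s^i$ as the number of tuples in $\{0,1\}^k$ that contain exactly $i$ zeros and satisfy $b_1 \oplus \cdots \oplus b_k = s$.

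Next I would translate the XOR condition into a parity condition on the number of zeros. Because the XOR of a binary tuple equals the parity of its number of $1$s, a tuple with exactly $i$ zeros (hence $k-i$ ones) satisfies $b_1 \oplus \cdots \oplus b_k = s$ if and only if $k - i \equiv s \pmod 2$. Crucially, all $\binom{k}{i}$ tuples with exactly $i$ zeros share the same number of ones $k-i$, so they either all satisfy the XOR constraint or all fail it, the outcome depending only on $i$ and $k$. Therefore $z_s^i = \binom{k}{i}$ when $k - i \equiv s \pmod 2$ and $z_s^i = 0$ otherwise.

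Finally I would rewrite the parity condition in the exact form stated in the lemma. For $s = 0$ the condition $k - i \equiv 0 \pmod 2$ is equivalent to $i \equiv k \pmod 2$, giving the first case; for $s = 1$ the condition $k - i \equiv 1 \pmod 2$ is equivalent to $i \equiv k - 1 \equiv k + 1 \pmod 2$, giving the second case. These match the two displayed formulas for $z_0^i$ and $z_1^i$.

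There is no genuine obstacle here, as the statement is an elementary counting identity. The only point requiring care is the bookkeeping step that justifies treating the $2^{k-1}$ algorithm outputs as exactly the XOR-$s$ tuples, so that no tuple is double-counted or omitted, together with the parity conversion $k - i \equiv s \pmod 2 \Leftrightarrow i \equiv k + s \pmod 2$. Once these are stated cleanly, the identity $z_s^i = \binom{k}{i}$ on the admissible parity class follows immediately from the standard count of binary strings with a fixed number of zeros.
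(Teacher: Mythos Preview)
Your argument is correct and complete: the bijection between the $2^{k-1}$ free-bit choices and the length-$k$ tuples with XOR equal to $s$, followed by the parity conversion $k-i\equiv s\pmod 2 \Leftrightarrow i\equiv k+s\pmod 2$, is exactly the right way to establish the claim. The paper states this lemma without proof, so there is nothing to compare against; your write-up simply supplies the elementary justification the paper leaves implicit.
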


Next, we characterize the single-point light transmission for varying numbers of stacked bits 
from $\mathcal{K}$ generated by Algorithm \ref{alg:k_k}. 
\begin{lemma}\cite{RGVCS_contrast_1}
    The single-point light transmission of 
    stacking $t~(1\leq t \leq k)$ bits in $\mathcal{K}$ generated by Algorithm \ref{alg:k_k} 
    with respect to the secret pixel $s$, 
    denoted as $l(b^{[t]}|_{s=0})$ and $l(b^{[t]}|_{s=1})$, 
    are given as follows, 
        \begin{equation*}
        \begin{aligned}
        l(b^{[t]}|_{s=0}) = 
        \begin{cases}
             \frac{1}{2^t} & \text{if}~ 1\leq t < k\\
             \frac{1}{2^{k-1}}& \text{if}~ t = k
        \end{cases}\!,
        l(b^{[t]}|_{s=1}) = 
        \begin{cases}
             \frac{1}{2^t} & \text{if}~ 1\leq t < k\\
             0& \text{if}~ t = k
        \end{cases}.
    \end{aligned}
    \end{equation*}

\end{lemma}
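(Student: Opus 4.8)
The plan is to reduce the claim to a single probability computation and then read it off from the structure of the $\mathbb{F}_2$ constraint imposed by Algorithm \ref{alg:k_k}. Since stacking is the OR operation $\otimes$, the stacked value $b^{[t]}$ is transparent (equal to $0$) exactly when every one of the $t$ stacked bits equals $0$; combined with $l(i)=\Pr(i=0)$ from the light-transmission definition, this gives $l(b^{[t]}|_{s=x}) = \Pr(\text{all } t \text{ stacked bits} = 0 \mid s=x)$ for $x\in\{0,1\}$. Moreover, the only coupling among $b_1,\ldots,b_k$ is the symmetric relation $b_1 \oplus \cdots \oplus b_k = s$, so the joint law is invariant under permuting indices; the probability therefore depends only on $t$ and not on which $t$ shadow images are chosen, and I may take the first $t$ bits $b_1,\ldots,b_t$ without loss of generality.

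First I would record the key distributional fact: $(b_1,\ldots,b_k)$ is uniform over the $2^{k-1}$ vectors in $\{0,1\}^k$ with $b_1 \oplus \cdots \oplus b_k = s$, since $b_1,\ldots,b_{k-1}$ are independent fair coins drawn by $R[\{0,1\}]$ and $b_k$ is then determined. For the regime $1 \leq t < k$, I would show the marginal of any $t$ coordinates is uniform on $\{0,1\}^t$: fixing those $t$ coordinates leaves $k-t \geq 1$ free coordinates, and the single linear constraint admits exactly $2^{k-t-1}$ completions regardless of the prescribed values, so each of the $2^t$ patterns is equally likely. The all-zero pattern thus occurs with probability $1/2^t$, a value independent of $s$, yielding $l(b^{[t]}|_{s=0}) = l(b^{[t]}|_{s=1}) = 1/2^t$.

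Next I would treat the boundary case $t = k$, where the free coordinate used above disappears. Now ``all stacked bits are $0$'' forces the all-zero vector, whose XOR is $0$, which is consistent with the constraint precisely when $s = 0$. Equivalently, this is the count supplied by Lemma \ref{lemma:0s} at $i = k$: one checks $z_0^k = \binom{k}{k} = 1$ (since $k \equiv k \pmod 2$) while $z_1^k = 0$ (since $k \not\equiv k+1 \pmod 2$). Dividing by the $2^{k-1}$ equiprobable outcomes gives $l(b^{[k]}|_{s=0}) = 1/2^{k-1}$ and $l(b^{[k]}|_{s=1}) = 0$, completing all four cases.

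I expect no serious difficulty; the lemma is elementary once the OR-to-probability reduction is made. The one point demanding care is the uniform-marginal claim for $t < k$, which must invoke the existence of a free coordinate ($k - t \geq 1$) --- exactly the hypothesis that breaks at $t = k$ and produces the dichotomy between the two regimes. I would keep the permutation-symmetry remark explicit so that ``stacking any $t$ shadow images'' is rigorously reduced to the fixed choice $b_1,\ldots,b_t$.
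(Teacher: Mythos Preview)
Your argument is correct and complete. The paper does not actually supply a proof of this lemma: it is stated with a citation to \cite{RGVCS_contrast_1} and no proof follows, so there is nothing in the paper to compare against. Your approach---reducing $l(b^{[t]}|_{s=x})$ to $\Pr(b_1=\cdots=b_t=0\mid s=x)$ via the OR semantics, invoking permutation symmetry of the single $\mathbb{F}_2$ constraint to fix the choice of coordinates, counting completions for $t<k$ to get the uniform marginal, and treating $t=k$ directly (or via Lemma~\ref{lemma:0s} at $i=k$)---is the standard and cleanest route, and the care you flag about needing $k-t\geq 1$ for the free-coordinate step is exactly the right hinge between the two regimes.
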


\subsection{$k$-grouped $(k,n)$ RGVCS}
\begin{algorithm}[t]
    \caption{$k$-grouped $(k,n)$ RGVCS}
    \label{algorithm:grouped k_k_n}
    \SetKwData{In}{\textbf{in}}\SetKwData{To}{to}
    \DontPrintSemicolon
    \SetAlgoLined
    \KwIn {a $m\times n$ secret image $S$;\\two positive integers $k~\text{and}~n$, where $2\leq k \leq n$}
    \KwOut {$SC_1,SC_2,\ldots,SC_{n}$, Table $Q$}
    \Begin{
        \For{$[i,j]$ where $i \in [1,m] ,j \in [1,n]$}{
            { $(b_1,b_2,\ldots,b_k) \gets $ $(S[i,j],k,k)$ RGVCS} \;
            { $(SC_1[i,j],SC_2[i,j],\ldots,SC_{k}[i,j]) \gets  (b_1,b_2,\ldots,b_k)$}  \;
            \For{$t \in [k+1,n]$}{
              \If{$t\!\!\mod k=1$}{
                  $Q[i,j]=\emptyset$
              }
              $q = R[\{1,2,\ldots,k\}\setminus Q[i,j]]$\;
              $b_t = b_q$ and distribute $b_t$ to $SC_t[i,j]$\;
              $Q[i,j] \gets Q[i,j] \cup \{q\}$\;            
            } 
        }
        \Return{$SC_1,SC_2,\ldots,SC_{n}$}, Table $Q$  
    }
  \end{algorithm}

As shown in Algorithm \ref{algorithm:grouped k_k_n},
the $k$-grouped $(k,n)$ RGVCS~\cite{zhuo} shares the secret image $S$ into $n$ shadow images $SC_1,SC_2,\ldots,SC_n$,
and the sharing phase is performed on a per-pixel basis. 
Each secret pixel $s$ is shared into $n$ share bits, 
denoted as $N_s \triangleq \{b_1,b_2,\ldots,b_n\}$,
and distributed to corresponding positions in $SC_1,SC_2,\ldots,SC_n$.
During the generation of share bits in Algorithm \ref{algorithm:grouped k_k_n}, 
$N_s$ are generated in groups of $k$, forming $\lceil \frac{n}{k} \rceil$ groups: 
$g_1,g_2,\ldots,g_{\lceil \frac{n}{k} \rceil}$, 
where each group $g_i~(1\leq i \leq \lceil \frac{n}{k} \rceil)$ is defined as:
$$g_i \triangleq \{b_{(i-1)k+1},b_{(i-1)k+2},\ldots,b_{\min(ik,n)}\}.$$ 

Then, we introduce the contrast calculation method for the $k$-grouped $(k,n)$ RGVCS, we first establish several relevant definitions.

\begin{definition}\label{definition:valid partition}(Valid partition)\cite{zhuo}
    Let $\vec{\mu} \triangleq [\mu_1,\mu_2,\ldots,\mu_{\lceil \frac{n}{k} \rceil}]$ 
    be an integer partition of integer $t$, 
    where each $\mu_i~(1\leq i \leq \lceil \frac{n}{k} \rceil)$ is a non-negative integer and 
    $\sum_{i=1}^{\lceil \frac{n}{k} \rceil}\mu_i =t$. 
    $\vec{\mu}$ is called a valid partition of $t$ if it satisfies the following conditions: 
    \begin{equation*}
        \begin{cases}
        & \max_{1\leq i \leq \lceil \frac{n}{k} \rceil-1}\mu_i \leq k \\
        & \mu_{\lceil \frac{n}{k} \rceil} \leq |G_{\lceil \frac{n}{k} \rceil}|
    \end{cases}.
    \end{equation*} 
\end{definition}

For instance, when $k = 4$ and $n = 10$, all valid partitions of $t = 3$ are 
$[3,0,0]$, $[0,3,0]$, $[2,1,0]$, $[1,2,0]$, $[0,2,1]$, $[2,0,1]$, $[0,1,2]$, $[1,0,2]$,
and $[1,1,1]$.

\begin{definition}(Partition-based index multiset)\cite{zhuo}
    Let $B(\vec{\mu})$ denote the partition-based index multiset, which is constructed via:  
    \begin{equation*}
        B(\vec{\mu}) \triangleq \bigcup_{j=1}^{{\lceil \frac{n}{k} \rceil}}R[[k],\mu_j]. 
    \end{equation*}
\end{definition}

\begin{definition}(Partition-based binary matrix)\label{definition:matrix}\cite{zhuo}
    Let the number of distinct elements contained in $B(\vec{\mu})$ be denoted as $d$, where $1\leq d \leq k$. 
    The sequence formed by the elements in $B(\vec{\mu})$ 
    can be represented as a binary matrix $\mathbf{A}$ of size $\lceil \frac{n}{k} \rceil \times d$, 
    which is called a partition-based binary matrix if it satisfies: 
    \begin{enumerate}
        \item The sum of elements in each row $i$ equals $\mu_i$ for $1\leq i \leq \lceil \frac{n}{k} \rceil$. 
        \item The sum of elements in each column is at least 1. 
    \end{enumerate}

    Let $\mathcal{A}^{\vec{\mu}}_d(F)$ denote the set of partition-based binary matrices 
    whose last row is seted to $F$, where $F$ is a $\{0,1\}^{1\times d}$ vector containing $\mu_{\lceil \frac{n}{k} \rceil}$ $1$s. 
    $\mathcal{A}^{\vec{\mu}}_d(F)$ is defined as:
	\begin{equation*}
		\label{eq:E}
		\begin{aligned}
			\mathcal{A}_{d}^{\vec{\mu}}(F) \triangleq 
			&\left\{ \!\mathbf{A}=\!\begin{pmatrix} E \\ F \end{pmatrix} \in \{0,1\}^{\lceil \frac{n}{k} \rceil \times d} \left|
			\begin{aligned}
				&\sum_{q=1}^{d} \mathbf{A}(p, q) = \mu_p, p \in \{1,2,\ldots,\lceil \frac{n}{k} \rceil\},  \\
				&\sum_{p=1}^{\lceil \frac{n}{k} \rceil} \mathbf{A}(p, q)\geq 1, q \in \{1,2,\ldots,d\}  \\
			\end{aligned}
			\right.\!\!
			\right\}, 
		\end{aligned}
	\end{equation*}
where $\mathbf{A}(p, q)$ denotes 
the element at the $p$-th row and $q$-th column of matrix $\mathbf{A}$.

\end{definition}

\begin{definition}
    Let $P$ denote the set of valid partitions of $t~(1\leq t \leq n)$, 
    which can be represented as: 
    \begin{equation*}
        P \triangleq \bigcup_{\vec{\mu}\in P}[\vec{\mu}], 
    \end{equation*}
    where $[\vec{\mu}]$ is an equivalence class of $\vec{\mu}$, 
    the equivalence relation is defined as $\vec{\mu} \sim \phi(\vec{\mu})$, 
    with $\phi$ denoting a permutation. 
\end{definition}

Then, we formally present 
the definition of contrast for the $k$-grouped $(k,n)$ RGVCS.

\begin{definition}\cite{zhuo}
    \label{definition:k-grouped k_n RGVCS}
    Suppose that there are $p$ equivalence classes in $P$, 
    and sort the contrast associated with all equivalence classes in descending order, 
    denote them by $\alpha_1,\alpha_2,\ldots,\alpha_p$. 
    Additionally, let $w_1,w_2,\ldots,w_p$ denote the occurrence probability of each equivalence class in $P$. 
    Then, 
    the contrast of the recovered image obtained by stacking any $t~(1\leq t \leq n)$ shadow images
    in $k$-grouped $(k,n)$ RGVCS, denoted as $\sigma_{(k,n)}$, is calculated as: 
    \begin{equation}
        \label{eq:alpha}
        \sigma_{(k,n)} = \sum_{i=1}^{p}w_i \alpha_i=w_1 \alpha_1+w_2 \alpha_2+\cdots+w_{p} \alpha_{p}.  
    \end{equation} 
\end{definition}

Next, we introduce the specific computation of $\alpha_i$ and $w_i$.

\begin{theorem}\cite{zhuo}
    Let $\vec{\mu}$ be the valid partition corresponding to $w_i$ and $\alpha_i$, where $1\leq i \leq p$, 
    their calculations are as follows: 
    \begin{enumerate}
        \item $\alpha_{i}$ is calculated as: 
            \begin{equation}
                \label{eq:alpha_i}
                \alpha_{i} = \frac{\frac{\Pr(\#B(\vec{\mu})=k)}{2^{k-1}}}{1+\sum_{j=1}^{k-1}\frac{\Pr(\#B(\vec{\mu})=j)}{2^j}}, 
            \end{equation}
            
            where 
                \begin{equation}
                {\Pr}(\#B(\vec{\mu}) = d)=\frac{\binom{k-\mu_{\lceil \frac{n}{k} \rceil}}{d - \mu_{\lceil \frac{n}{k}\rceil}} |\mathcal{A}^{\vec{\mu}}_d(F)|}
                {  \prod_{j=1}^{\lceil \frac{n}{k} \rceil -1}\binom{k}{\mu_j}},
                \end{equation}
     
            for $1\leq d \leq k$.
        
        \item Let $\vec{\mu}$ be represented as a a multiset in the following form: 
            \begin{equation*}
                \vec{\mu} = \left\langle {v_1}^{\!c_1}, {v_2}^{\!c_2}, \ldots, {v_f}^{\!c_f} \right\rangle, 
            \end{equation*}
            where $v_1,v_2,\ldots,v_f$ are the distinct elements in $\vec{\mu}$,  
            and $c_j~(1\leq j \leq f)$ denotes the multiplicity of element $v_j$. 
            Then, $w_{i}$ is calculated as: 

                \begin{equation}
                \label{eq:w_mu}
                w_{i} \!=\! \frac{ \sum_{q=1}^{f} \left[\binom{|g_{\lceil \frac{n}{k} \rceil}|}{v_q} \prod_{j=1}^{\lceil \frac{n}{k} \rceil-1}\!\!\!\binom{k}{\mu_j} \frac{(\lceil \frac{n}{k} \rceil-1)!}{c_1!\cdots c_{q-1}!(c_q-1)!c_{q+1}!\cdots c_f!}\right]}{\binom{n}{t}}.
                \end{equation}
    \end{enumerate}    
\end{theorem}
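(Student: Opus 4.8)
The plan is to prove the two parts separately. The guiding structural fact, from Algorithm~\ref{algorithm:grouped k_k_n}, is that every share bit is a verbatim copy of one of the $k$ base bits $b_1,\dots,b_k$ produced by Algorithm~\ref{alg:k_k}, and that the indices copied within any single group are pairwise distinct; hence stacking the $t$ selected shadows returns the OR of exactly those base bits whose index is touched, and the number of distinct such indices is precisely $\#B(\vec\mu)=d$. For $\alpha_i$ I would condition on $d$: by the preceding single-point light-transmission lemma, stacking $d$ distinct base bits gives $l(b^{[d]}|_{s=0})=l(b^{[d]}|_{s=1})=2^{-d}$ when $d<k$, and $l(b^{[k]}|_{s=0})=2^{-(k-1)}$, $l(b^{[k]}|_{s=1})=0$ when $d=k$. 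Averaging over the index randomness yields
\[ l(b^{[t]}|_{s=0})=\sum_{d=1}^{k-1}\Pr(\#B=d)\,2^{-d}+\Pr(\#B=k)\,2^{-(k-1)},\qquad l(b^{[t]}|_{s=1})=\sum_{d=1}^{k-1}\Pr(\#B=d)\,2^{-d}. \]
Subtracting leaves $l(b^{[t]}|_{s=0})-l(b^{[t]}|_{s=1})=\Pr(\#B=k)/2^{k-1}$, and inserting both quantities into Definition~\ref{definition:contrast} (whose denominator is $1+l(b^{[t]}|_{s=1})$) reproduces \eqref{eq:alpha_i} verbatim.

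For $\Pr(\#B=d)$ I would exploit symmetry to condition on the index set $L$ chosen by the last group, with $|L|=\mu_{\lceil n/k\rceil}$: because relabelling indices shows the conditional law of $\#B$ is independent of the particular $L$, I may fix $L$, equivalently fix the last matrix row to $F$. The denominator $\prod_{j=1}^{\lceil n/k\rceil-1}\binom{k}{\mu_j}$ then enumerates the equally likely index choices of the non-last groups. For the numerator I would first choose the $d-\mu_{\lceil n/k\rceil}$ new distinct indices lying outside $L$ from the $k-\mu_{\lceil n/k\rceil}$ remaining ones, contributing $\binom{k-\mu_{\lceil n/k\rceil}}{d-\mu_{\lceil n/k\rceil}}$; then, by Definition~\ref{definition:matrix}, count through $|\mathcal{A}^{\vec\mu}_d(F)|$ the non-last row patterns $E$ whose row sums equal the $\mu_j$ and whose columns each carry at least one $1$. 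The restriction to $d$ columns confines the non-last groups to exactly the chosen indices, while the column-sum-$\ge 1$ constraint forces each chosen index to be realized, so the union has exactly $d$ distinct elements; this gives the stated ratio.

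For $w_i$, observe that $w_i=N/\binom{n}{t}$, where $N$ counts the $t$-subsets of the $n$ shadows whose group occupancies form some permutation of $\vec\mu$. Since the last group has size $|g_{\lceil n/k\rceil}|\le k$ whereas the rest have size $k$, I would partition the count by which distinct value $v_q$ of $\vec\mu$ is assigned to the last group: for each $q$, the term $\binom{|g_{\lceil n/k\rceil}|}{v_q}$ selects its shadows, the remaining $\lceil n/k\rceil-1$ values can be ordered over the full-size groups in $\frac{(\lceil n/k\rceil-1)!}{c_1!\cdots(c_q-1)!\cdots c_f!}$ distinct ways, and each ordering admits $\prod_{j=1}^{\lceil n/k\rceil-1}\binom{k}{\mu_j}$ shadow selections — a product that is invariant under relabelling of the non-last groups and hence factors out of the arrangement count. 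Summing over $q$ gives the numerator of \eqref{eq:w_mu}.

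The main obstacle throughout is the careful, separate treatment of the incomplete last group. For $\Pr(\#B=d)$ the delicate points are justifying that conditioning on a fixed $L$ introduces no bias and that the column-sum condition of $\mathcal{A}^{\vec\mu}_d(F)$ is exactly necessary and sufficient to yield $d$ distinct indices without double counting; for $w_i$ it is the split over $q$ together with the invariance of $\prod\binom{k}{\mu_j}$. Establishing that these enumerations are exhaustive and mutually exclusive — every qualifying configuration counted once — is the crux, after which the remaining algebra is routine.
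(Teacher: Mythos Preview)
The paper does not prove this theorem; it is quoted verbatim from the authors' prior work \cite{zhuo} and stated without proof in the preliminaries, so there is no in-paper argument to compare your proposal against. On its own merits your outline is correct and is the natural argument: conditioning on $d=\#B(\vec\mu)$ and invoking the $(k,k)$ light-transmission lemma gives \eqref{eq:alpha_i}; fixing the last-group index set by symmetry and counting admissible patterns via $\mathcal{A}^{\vec\mu}_d(F)$ gives the stated $\Pr(\#B=d)$; and partitioning $t$-subsets by which multiset value lands in the (possibly short) last group yields \eqref{eq:w_mu}.
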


\subsection{GRGVCS}
RGVCS fixes the average light transmission at $\frac{1}{2}$,
which limits contrast enhancement. 
Wu \textit{et al.}~\cite{GRG_1} resolved this by proposing a generalized random grid model 
where the average light transmission can be adjusted instead of remaining fixed at $\frac{1}{2}$. 
The definition is given as follows. 

\begin{definition}(Generalized random grid)\cite{GRG_1}
    Let $R$ be a two-dimensional array where each element $r\in \{0,1\}$ is an independent random variable 
    with $\Pr(r = 0) = \lambda~(0< \lambda <1)$. 
    Then $R$ is termed a generalized random grid characterized by the average light transmission of $\lambda$. 
\end{definition}

The pixel value at each position in a GRG can be generated by a random bit generator. 

\begin{definition}(Random bit generator)\cite{GRG_1}
  The random bit generator is a function $g(\lambda)$ that outputs a pixel value $p\in\{0,1\}$,
  where
  $\Pr(p=0)=\lambda,~\Pr(p=1)=1-\lambda$, for $0<\lambda<1$.
\end{definition}

\subsection{ESS}
ESS is a class of SS schemes 
that do not require knowing an upper bound on the number of participants $n$ in advance, and $n$ can be infinite.                                                                                                                                                                                                        
It is evident that not all participants can be present at the same time.
Assume that the participant $P_t$ arrives at time $t$, 
the shares of the previous $t-1$ participants do not need to be changed, 
and the dealer generates a new share for participant $P_t$ 
based on the previously generated shares and time $t$. 
The set of participants is defined as 
$\mathcal{P}\triangleq \{P_1,P_2,\ldots,P_t,\ldots\}$. 
Let $2^\mathcal{P}$ denote the power set of $\mathcal{P}$, 
and $\Gamma \subseteq 2^\mathcal{P}$ 
is said to be monotone if 
for any $S_1\in \Gamma$ and $S_1\subseteq S_2$, then $S_2 \in \Gamma$. 

\begin{definition}
    Let $\Gamma\subseteq 2^{\mathcal{P}}$ be a non-empty set, 
    $\Gamma$ is called an access structure 
    if $\Gamma$ is monotone. 
    A set $A_Q$ is called a qualified set if $A_Q\in \Gamma$, 
    and a set $A_F$ is called an unqualified set if $A_F\notin \Gamma$.  
\end{definition}

\begin{definition}
    \label{definition:evolving-1}
    Assuming that $\Gamma \subseteq 2^\mathcal{P}$ is monotone, 
    $\Gamma$ is called an evolving access structure if 
    the set $\Gamma_t := \Gamma \cap 2^{\{P_1,P_2,\ldots,P_t\}}$ for ant time $t \in \mathbb{N}^*$, 
    is an access structure. 
\end{definition}
\begin{definition}
    \label{definition:evolving-2}
    $\Gamma$ is defined as an evolving $k$-threshold access structure 
    if it consists only the set in $2^\mathcal{P}$ that is at least $k$ in size, i.e., 
    \begin{equation*}
        \Gamma \triangleq \{\varSigma \in 2^\mathcal{P} \mid |\varSigma|\geq k\}. 
    \end{equation*}
\end{definition}

\begin{definition}
    \label{definition:evolving-3}
    Let $\Gamma$ be an evolving access structure, and 
    $\mathcal{S}$ denotes the domain of secret values. 
    The ESS scheme based on $\Gamma$ and $\mathcal{S}$ 
    consists of two probabilistic algorithms $(\mathbb{S},\mathbb{R})$, 
    representing the sharing algorithm and recovery algorithm, respectively, 
    which satisfy: 
    \begin{enumerate}
        \item When the participant $P_t$ arrives at time $t$, 
        the sharing algorithm $\mathbb{S}$ generates a new share $\mathcal{H}_{t}$ for $P_t$ 
        based on the secret $s\in \mathcal{S}$ 
        and the shares generated at previous times $\mathcal{H}_1,\mathcal{H}_2,\cdots,\mathcal{H}_{t-1}$. 
        i.e., 
        \begin{equation*}
            \mathbb{S}(s,\{\mathcal{H}_i\}_{i\in[t-1]})=\mathcal{H}_t.
        \end{equation*}

        \item For any secret $s\in \mathcal{S}$, at any time $t\in \mathbb{N}^*$, 
        each qualified set $A_Q\in \Gamma_t$ can recover the secret with probability of $1$, i.e.,  
        \begin{equation*}
            \Pr(\mathbb{R}(A_Q, \{\mathcal{H}_i\}_{i\in A_Q})=s)=1.
        \end{equation*}
        \item For any secret $s\in \mathcal{S}$, at any time $t\in \mathbb{N}^*$, 
        every unqualified set $A_F \notin \Gamma_t$ can't reveal the secret.
        In other words, for two distinct secret $s_1,s_2\in \mathcal{S}$, any unqualified set $A_F\notin \Gamma_t$, 
        and each kind of shares $R$ distributed to $A_F$, 
        \begin{equation*}
            \Pr(\{\mathcal{H}_i\}_{i\in A_F}\!=\!R |{s=s_1}) \!= \!
            \Pr(\{\mathcal{H}_i\}_{i\in A_F}\!=\!R |{s=s_2}).
        \end{equation*}
   
    \end{enumerate}
    
\end{definition}

\section{On the definition and contrast of~$(k,\infty)$~VCS} \label{section:definition_evaluation}
We provide the formal mathematical definition of $(k,\infty)$ VCS, 
the definition of contrast, and some contrast-related definitions in this section. 

We extend Definition \ref{definition:evolving-3} naturally to VCS,
and give a strict mathematical definition of $(k,\infty)$ VCS as follows.
\begin{definition}\label{definition:k_infity_scheme}
    Let $\Gamma$ be an evolving $k$-threshold access structure, 
    and $S$ denotes a secret image.
    A $(k,\infty)$ VCS based on $\Gamma$ and $S$ consists of two probabilistic algorithms $(\mathbb{S},\mathbb{R})$,
    where $\mathbb{S}$ denotes the sharing algorithm and $\mathbb{R}$ denotes the recovery algorithm.
    $(\mathbb{S},\mathbb{R})$ satisfies: 
    \begin{enumerate}
        \item When the participant $P_t$ arrives at time $t$,
        $\mathbb{S}$ generates a new shadow image $SC_t$ for $P_t$ based on $S$ and 
        the shadow images generated at previous times $SC_1,SC_2,\cdots,SC_{t-1}$, i.e., 
            \begin{equation*}
                \mathbb{S}(S,\{SC_i\}_{i\in [t-1]}) = SC_t. 
            \end{equation*}
        \item 
        For any secret image $S$, at any time $t\in \mathbb{N}^*$, and each qualified set $A_{Q}\in \Gamma_t$,  
        the recovered image obtained by $\mathbb{R}$ can reveal the secret.
        In other words, the contrast of recovered image is greater than $0$. i.e., 
            \begin{equation*}
                \begin{aligned}
                    \alpha\big(\mathbb{R}(A_{Q},\{SC_i\}_{i\in A_{Q}})\big) > 0.\footnotemark{}
                \end{aligned}
            \end{equation*}
        \item 
        For any secret image $S$, at any time $t\in \mathbb{N}^*$, and every unqualified set $A_{F}\notin \Gamma_t$,
        the recovered image obtained by $\mathbb{R}$ can't reveal the secret.
        In other words, the contrast of recovered image equals to $0$. i.e., 
            \begin{equation*}
                \begin{aligned}
                    \alpha\big( \mathbb{R}(A_{{F}},\{SC_i\}_{i\in A_{F}})\big) = 0.  
                \end{aligned}
            \end{equation*}
    \end{enumerate}
\end{definition}
\footnotetext{
In VCS, the general recovery method employs stacking recovery, 
i.e., 
$\mathbb{R}(A_{Q},\{SC_i\}_{i\in A_{Q}})=SC_{i_1}\otimes SC_{i_2}\otimes\cdots\otimes SC_{i_n}$, where $A_Q=\{i_1,i_2,\ldots,i_n\}$.
Some algorithms alternatively use XOR recovery, 
resulting in $\mathbb{R}(A_{Q},\{SC_i\}_{i\in A_{Q}})=SC_{i_1}\oplus SC_{i_2}\oplus\cdots\oplus SC_{i_n}$, where $A_Q=\{i_1,i_2,\ldots,i_n\}$.
}
Essentially, 
a $(k,\infty)$ VCS constitutes a $(k,t)$ VCS at any time $t \geq k$. 
We denote the contrast of such $(k,t)$ VCS when stacking $k$ shadow images as 
$\alpha'_{(k,t)}$ for distinction. 
The formal definition of the contrast for $(k,\infty)$ VCS is given as follows. 

\begin{definition}
    \label{infinite_contrast}
    The contrast in $(k,\infty)$ VCS when stacking $k$ shadow images, denoted as $\alpha_{(k,\infty)}$,  
    is given by: 
    \begin{equation}\label{eq:alpha_k_infity_pie}
        \alpha_{(k,\infty)} = \lim_{t \to \infty} \alpha'_{(k,t)}, 
    \end{equation}
    where the calculation of $\alpha'_{(k,t)}$ 
    is the same as that of $\alpha_{(k,t,k)}$ defined in Definition \ref{definition:contrast}. 
   
\end{definition}

In practical scenarios, $t$ typically evolves from small values and gradually approaches infinity. 
Thus, in addition to comparing the contrast when $t$ tends to infinity of different schemes, 
it is essential to evaluate their contrast performance at finite $t$ values. 
At the end of this section, 
we provide more detailed definitions of evaluation criteria to determine three cases where 
one $(k,\infty)$ VCS is superior to another: 
strictly better, better, and relatively better. 
\begin{definition}\label{defnition:better_definition}
    Let $A$ and $B$ denote two $(k,\infty)$ VCS, 
    their corresponding contrast are denoted as $\alpha^A_{(k,\infty)}$
    and $\alpha^B_{(k,\infty)}$, respectively. 
    The contrast of the $(k,t)$ VCS constituted by $A$ and $B$ at any time $t~(t\geq k)$ are denoted as $\alpha'^A_{(k,t)}$ and $\alpha'^B_{(k,t)}$, respectively.
    Scheme $A$ is considered superior to scheme $B$ in the following three cases: 
    \begin{enumerate}
        \item Strictly better: $\alpha^A_{(k,\infty)} > \alpha^B_{(k,\infty)}$, 
        and for any time $t~(t\geq k)$, there is $\alpha'^A_{(k,t)}\geq\alpha'^B_{(k,t)}$. 
        Additionally, there exists infinitely many $n_0\in \mathbb{N}^*$ such that 
        $\alpha'^A_{(k,n_0)}>\alpha'^B_{(k,n_0)}$. 
        \item Better: $\alpha^A_{(k,\infty)} > \alpha^B_{(k,\infty)}$.
        \item Relatively better: $\alpha^A_{(k,\infty)} = \alpha^B_{(k,\infty)}$, 
        and for any time $t~(t\geq k)$, there is $\alpha'^A_{(k,t)}\geq\alpha'^B_{(k,t)}$. 
        Additionally, there exists infinitely many $n_0\in \mathbb{N}^*$ such that 
        $\alpha'^A_{(k,n_0)}>\alpha'^B_{(k,n_0)}$. 
    \end{enumerate}
\end{definition}

\section{The proposed $(k, \infty)$ RGVCS} \label{section:k_infty_RGVCS}
We propose a novel $(k,\infty)$ RGVCS for arbitrary $k$ and without pixel expansion. 
In fact, the proposed scheme naturally extends from our prior work~\cite{zhuo}, 
which introduces a $k$-grouped $(k,n)$ RGVCS that generates subsequent shadow images in groups, 
while preserving all previously generated shadow images unchanged. 
This generation mechanism allows for natural extension to an unlimited number of participants. 
Leveraging this property, we present a $(k,\infty)$ RGVCS. 

In this section, 
we first detail the specific procedures for the sharing and recovery phases of the proposed scheme, 
followed by a theoretical analysis of its performance.

\subsection{The sharing and recovery phase}
Algorithm \ref{algorithm:k_infinite} and Algorithm \ref{algorithm:new share k} 
detail the sharing phase of the proposed $(k,\infty)$ RGVCS, 
which operates in two phases: 
the first phase involves generating shadow images for the earliest $n~(n\geq k)$ arriving participants, 
and the second phase concerns generating shadow images for newly arrived participants. 
Notably, the $c$-th shadow image generated in the proposed $(k,\infty)$ RGVCS is denoted as $SC_c^\infty$. 
As for the recovery phase, the stacking method is adopted. 

\begin{algorithm}[t]
    \caption{$(k,\infty)$ RGVCS}
    \label{algorithm:k_infinite}
    \DontPrintSemicolon
      \SetAlgoLined
      \KwIn {an $h\times w$ binary secret image $S$; an integer $n~(n\geq k)$
      }
      \KwOut {shadow images $SC_{1}^\infty,SC_2^\infty,\ldots$} 
      Execute Algorithm $2$ to get 
      \begingroup \small{$\mathcal{D}\!\triangleq\!
      \{SC_1,SC_2,\ldots,SC_n\}$} \endgroup and Table $Q$\\
      $SC_{1}^\infty,SC_2^\infty,\cdots,SC_{n}^\infty \gets \mathcal{D}$\;
      $t = n+1$\;
      keep\_running $\gets$ True\footnote{}\\
      \While{keep\_running}{
      \!\!\!Execute {\small Algorithm \ref{algorithm:new share k}} to generate \begingroup {\small$SC_t^\infty$} and new {\small $Q$} \endgroup\\
      \!\!\!Use the new $Q$ as the input for the next execution of Algorithm \ref{algorithm:new share k}\;
      \!\!\!$t = t+1$\;
      \!\!\!If the user enters "quit", keep\_running $\gets$ False\;
      }
      
      \Return{$SC_{1}^\infty,SC_2^\infty,\ldots$}
  \end{algorithm}
  \footnotetext{
  The keep\_running serves as a control flag that determines whether the execution should continue. 
  When keep\_running $=$ True, 
  it indicates that participants are still arriving; 
  when keep\_running $=$ False, 
  it signifies that no more participants are arriving. 
  The termination of this algorithm is triggered by 
  the user entering "quit".

  }

\begin{algorithm}[t]
    \caption{Share generation for the $t$-th participant}
    \label{algorithm:new share k}
    \DontPrintSemicolon
      \SetAlgoLined
      \KwIn {Table $Q$; $t$ }
      \KwOut {new shadow image $SC_t^\infty$; new table $Q$} 
      \For{$(i,j)$ $\in$ $\{1\leq i \leq h,1\leq j \leq w\}$}{
        \If{$t\!\!\mod k = 1$}{
            $Q[i,j] = \emptyset$\;
        }
        $q = R[\{1,2,\cdots,k\}\setminus Q[i,j]]$\;
        $SC_t^\infty[i,j] = SC_q[i,j]$\;
        $Q[i,j] \gets Q[i,j] \cup \{q\}$\;
      }
      \Return{$SC_t^\infty$; new table $Q$}
  \end{algorithm}

In the first phase, 
we employ Algorithm \ref{algorithm:grouped k_k_n} to generate $n$ 
shadow images, and assign them to the earliest $n$ participants. 
In the second phase, when the $t$-th $(t>n)$ participant arrives, 
our scheme generates $SC_t^\infty$ according to Algorithm \ref{algorithm:new share k}. 
If $t\!\!\mod k=1$, the pixels in $SC_t^\infty$ are generated as follows:   
randomly select one index number from the set $\{1,2,...,k\}$, noted as $q$, 
and distribute $SC_q[i,j]$ to $SC_t^\infty[i,j]$. 
In other cases, 
the available candidate index number pool excludes numbers that have previously been selected. 
The role of table $Q$ is to record the index numbers that have been selected. 
It should be noted that $Q$ must reset to an empty set when $t\!\!\mod k=1$, 
ensuring that the next round of index number selection starts from $\{1,2,...,k\}$.
It is worth noting that the value of 
$t$ can be infinite, and the 
$t$-th shadow image can be generated directly 
using Algorithm \ref{algorithm:new share k}.

Essentially, 
at the current stage up to the arrival of the $n$-th participant, 
$(k,\infty)$ RGVCS is equivalent to $k$-grouped $(k,n)$ RGVCS. 
Note that the procedural steps for generating the new shadow image are identical between Algorithm \ref{algorithm:new share k} and Algorithm \ref{algorithm:grouped k_k_n} (Line $6$-Line $11$). 
Thus, 
when the process reaches time $t~(t>n)$, 
the execution of Algorithm \ref{algorithm:k_infinite} and Algorithm \ref{algorithm:new share k} is fundamentally equivalent to Algorithm \ref{algorithm:grouped k_k_n}, 
resulting in 
the generated $t$ shadow images being divided into groups of size $k$, 
where each complete group essentially constitutes a permuted version of 
$\{SC_1,SC_2,\ldots,SC_k\}$. 
In this case, $(k,\infty)$ RGVCS effectively becomes a $k$-grouped $(k,t)$ RGVCS.

To facilitate understanding of the shadow images generated in our scheme, 
an example is provided below.

\begin{example}
    Let $k = 3$, $n = 4$, $t=7$ and considering the scenario of sharing a secret pixel $S[i,j]$. 

    In the first phase, 
    the pixels $SC_1^\infty[i,j]$, $SC_2^\infty[i,j]$, $SC_3^\infty[i,j]$ and $SC_4^\infty[i,j]$ are assigned as 
    $b_1, b_2, b_3, b_2$ via Algorithm \ref{algorithm:grouped k_k_n}, 
    and $Q[i,j]=2$.

    In the second phase, 
    for the pixels $SC_5^\infty[i,j]$ and $SC_6^\infty[i,j]$, 
    to prevent bit index collisions within the same group, 
    they are constrained: 
    \begin{itemize}
        \item $q = R[\{1,2,3\}\setminus 2] = 1$, 
        $SC_5^\infty[i,j]=SC_1[i,j]$,
        \item $q = R[\{1,2,3\}\setminus \{2,1\}]= 3$, 
        $SC_6^\infty[i,j]=SC_3[i,j]$; 
    \end{itemize}
    
    for the pixel $SC_7^\infty[i,j]$, $Q[i,j]=\emptyset$ at this point, 
    \begin{itemize}  
        \item  $q = R[\{1,2,3\}] = 3$,  $SC_7^\infty[i,j]=SC_3[i,j]$. 
    \end{itemize}  
    
\end{example}

\subsection{Theoretical analysis}

In this subsection, we first present the contrast of the $(k,\infty)$ RGVCS 
when stacking $k$ shadow images, 
and then we demonstrate that the proposed scheme 
constitutes a $(k,\infty)$ VCS.

\begin{theorem}
    \label{theorem:k_infity_contrast}
    The contrast of the recovered image obtained by stacking any $k$ shadow images in $(k,\infty)$ RGVCS 
    is given as: 
    \begin{equation}
    \label{eq:a_b}
        \alpha_{(k,\infty)} \overset{(a)}{=} \lim\limits_{t \to \infty} \sigma_{(k,t)} 
         \overset{(b)}{=} \alpha_p. 
    \end{equation}
\end{theorem}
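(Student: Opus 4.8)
The plan is to treat the two labelled equalities separately: equality $(a)$ is essentially definitional once the structural equivalence already noted in the text is invoked, whereas equality $(b)$ is the genuine asymptotic statement and carries all the work. Throughout I write $m \triangleq \lceil t/k\rceil$ for the number of groups present at time $t$, and I recall from Definition \ref{definition:k-grouped k_n RGVCS} that $\sigma_{(k,t)} = \sum_{i=1}^{p} w_i \alpha_i$ is a convex combination: the $w_i$ are occurrence probabilities of the equivalence classes in $P$, so $w_i \geq 0$ and $\sum_i w_i = 1$, and the $\alpha_i$ are sorted in descending order $\alpha_1 \geq \alpha_2 \geq \cdots \geq \alpha_p$, with $\alpha_p$ the minimum.

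For equality $(a)$ I would start from Definition \ref{infinite_contrast}, which gives $\alpha_{(k,\infty)} = \lim_{t\to\infty}\alpha'_{(k,t)}$ with $\alpha'_{(k,t)}$ computed exactly as $\alpha_{(k,t,k)}$ of Definition \ref{definition:contrast}. I then invoke the structural equivalence established in the paragraph preceding the worked example: at any time $t > n$ the joint execution of Algorithms \ref{algorithm:k_infinite}--\ref{algorithm:new share k} reproduces the group-wise sharing of Algorithm \ref{algorithm:grouped k_k_n}, so the $(k,\infty)$ RGVCS restricted to its first $t$ shadow images is a genuine $k$-grouped $(k,t)$ RGVCS. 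Hence the contrast of stacking any $k$ of these shadow images is precisely $\sigma_{(k,t)}$, giving $\alpha'_{(k,t)} = \sigma_{(k,t)}$ for every $t \geq k$; passing to the limit yields $(a)$.

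For equality $(b)$, since $\sigma_{(k,t)}$ is a convex combination of the $\alpha_i$ we always have $\alpha_p \leq \sigma_{(k,t)} \leq \alpha_1$, so it suffices to show that the total weight collapses onto the minimal class, i.e.\ $w_p \to 1$. I would first identify that minimal class as the fully-spread partition $\vec{\mu}^{\ast} = \langle 1^{k}, 0^{\,m-k}\rangle$, in which the $k$ stacked bits come one apiece from $k$ distinct groups. This partition minimizes $\Pr(\#B(\vec{\mu})=k)$: reading one index per group makes each index uniform and independent over $[k]$, so collecting all $k$ distinct values has probability $k!/k^{k}$, the smallest attainable, whereas concentrating several draws inside one group forces distinctness (draws within a group are without replacement). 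Feeding this into \eqref{eq:alpha_i} shows the spread partition also places the most mass on the events $\#B(\vec{\mu})=j$ with $j<k$, hence yields the smallest contrast, so its value is exactly $\alpha_p$. I would then estimate its weight from \eqref{eq:w_mu}: selecting $k$ shadow images in $k$ distinct groups, one per group, contributes on the order of $\binom{m}{k}k^{k}$ admissible choices out of $\binom{t}{k}$ total, and with $t = mk + O(k)$ the ratio $\binom{m}{k}k^{k}/\binom{mk}{k} \to 1$ as $m \to \infty$ by the standard occupancy (birthday) estimate. Thus $w_p \to 1$ and every other $w_i \to 0$; since each $\alpha_i$ is bounded in $[0,\,2^{-(k-1)}]$, the off-diagonal contributions vanish and $\lim_{t\to\infty}\sigma_{(k,t)} = \alpha_p$.

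The hard part will be the asymptotic weight computation in $(b)$: one must handle the possibly incomplete last group of size $|g_m| < k$ together with the floor/ceiling in $m = \lceil t/k\rceil$, and confirm from \eqref{eq:w_mu} that only the class $\vec{\mu}^{\ast}$ survives in the limit while every partition containing a part $\mu_j \geq 2$ carries a weight of strictly lower order in $m$. A secondary point needing care is the monotonicity claim that the fully-spread partition genuinely attains the minimum of the $\alpha_i$ rather than merely a small value; this I would pin down through the multinomial structure of $B(\vec{\mu})$ in Definition \ref{definition:matrix}, using that the contrast in \eqref{eq:alpha_i} is increasing in $\Pr(\#B(\vec{\mu})=k)$ and decreasing in the accumulated low-$j$ mass.
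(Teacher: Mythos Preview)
Your plan matches the paper's proof: part $(a)$ via the structural equivalence of the $(k,\infty)$ scheme at time $t$ with the $k$-grouped $(k,t)$ RGVCS, and part $(b)$ by showing that the weight $w_p$ of the fully-spread partition $\langle 1^{k},0^{\,m-k}\rangle$ tends to $1$, whence all other $w_i\to 0$ and the convex combination collapses. The only differences are cosmetic---the paper computes $w_p$ directly from \eqref{eq:w_mu} and passes to the limit, whereas you invoke an occupancy-type estimate, and the paper simply \emph{asserts} (rather than argues, as you attempt) that this partition realizes the minimum contrast $\alpha_p$.
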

\begin{proof}
    When $t>n$, $(k,\infty)$ RGVCS constitutes a $k$-grouped $(k,t)$ RGVCS;  
    thus, $(a)$ in Eq.~\eqref{eq:a_b} holds according to Eq.~\eqref{eq:alpha_k_infity_pie}. 
    Then, we prove that $(b)$ in Eq.~\eqref{eq:a_b} holds.

   Let the equivalence class corresponding to $\alpha_p$ be $[\vec{\mu}]$, 
   where $\vec{\mu}$ is a valid partition containing $k$ instances of $1$ and $\lceil \frac{t}{k} \rceil - k$ instances of $0$. 
   Represent $\vec{\mu}$ in the form of a multiset, i.e., 
   \begin{equation*}
        \vec{\mu} = \left\langle {1}^{k}, {0}^{\lceil \frac{t}{k} \rceil - k} \right\rangle.  
   \end{equation*}

   Let $u = |g_{\lceil \frac{t}{k} \rceil}|$ and $m = \lceil \frac{t}{k} \rceil$, then, $w_p$ is calculated as follows: 
    \begin{equation*}
    \begin{aligned}
        w_p &= \frac{ \sum_{q=1}^{f} \left[\binom{|g_{\lceil \frac{t}{k} \rceil}|}{v_q} \prod_{j=1}^{\lceil \frac{t}{k} \rceil-1}\!\!\!\binom{k}{\mu_j} \frac{(\lceil \frac{t}{k} \rceil-1)!}{c_1!\cdots c_{q-1}!(c_q-1)!c_{q+1}!\cdots c_f!}\right]}{\binom{t}{k}}\\
        &=\! [\frac{\binom{u}{1}{\binom{k}{1}}^{k-1}{\binom{k}{0}}^{m -k}(m-1)!}{(k-1)!(m - k)!}
        \!+\! \frac{\binom{u}{0}{\binom{k}{1}}^{k}{\binom{k}{0}}^{m -k-1}(m-1)!}{k!(m - k-1)!}]/\binom{t}{k}\\
        &=\! [\frac{(m-1)(m-2)\ldots(m-(k-1))}{(k-1)!}k^{k-1}(u+m-k)]/\binom{t}{k}.
    \end{aligned}
   \end{equation*}

   When $t\to \infty$, $\lim\limits_{t\to \infty}w_p$ is calculated as follows, 
   \begin{equation*}
    \begin{aligned}
        \lim\limits_{t\to \infty}w_p &=\lim\limits_{t\to \infty} [\frac{(\frac{t}{k})^{k-1}}{(k-1)!}k^{k-1}\frac{t}{k}]/\binom{t}{k}\\
        &= \lim\limits_{t\to \infty}\frac{t^k}{k!} / \frac{t^k}{k!} = 1. 
    \end{aligned}
   \end{equation*}

   By the definition of equivalence class of valid partitions
    and the occurrence probability of each equivalence class in $P$, 
    we obtain that, 
    \begin{equation*}
        w_1 + w_2 + \ldots + w_p = 1. 
    \end{equation*}

    Thus, 
    \begin{equation*}
        \lim\limits_{t \to \infty}w_1 = \lim\limits_{t \to \infty}w_2 = \ldots = \lim\limits_{t \to \infty}w_{p-1} = 0. 
    \end{equation*}

    Therefore, 
    \begin{equation*}
        \alpha_{(k,\infty)} = \lim\limits_{t \to \infty} \sigma_{(k,t)}= 
         \lim\limits_{t \to \infty} w_1\alpha_1 + w_2\alpha_2 + \ldots + w_p\alpha_p = \alpha_p.
    \end{equation*}
\end{proof}

We present the theoretical contrast values for $k=2,3,$ and $4$ as $n$ tends to infinity in Table \ref{tab:or_k_infity} according to Theorem \ref{theorem:k_infity_contrast}. 
Then, we prove that the RGVCS proposed by Algorithm \ref{algorithm:k_infinite}
is a $(k,\infty)$ VCS. 

\begin{theorem}
    The RGVCS proposed by Algorithm \ref{algorithm:k_infinite} is a $(k,\infty)$ VCS. 
\end{theorem}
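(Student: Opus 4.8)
The plan is to verify that the RGVCS produced by Algorithm~\ref{algorithm:k_infinite} satisfies all three conditions of Definition~\ref{definition:k_infity_scheme}. The first condition is essentially structural and follows immediately from the design of the algorithm: when participant $P_t$ arrives, Algorithm~\ref{algorithm:new share k} produces $SC_t^\infty$ from the stored table $Q$ and the already-fixed shadow images, so $\mathbb{S}(S,\{SC_i\}_{i\in[t-1]})=SC_t$ by construction, and no previously issued shadow image is ever modified. I would state this briefly and move on, since the genuine content lies in the contrast requirements.

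For the \emph{visually recognizable} condition, I would invoke Theorem~\ref{theorem:k_infity_contrast} together with the equivalence, established earlier in this section, that $(k,\infty)$ RGVCS truncated at any time $t\ge k$ is exactly a $k$-grouped $(k,t)$ RGVCS. For any qualified set $A_Q\in\Gamma_t$ we have $|A_Q|\ge k$, and stacking $|A_Q|\ge k$ shadow images of a valid $k$-grouped $(k,t)$ RGVCS yields strictly positive contrast; in particular the limiting contrast is $\alpha_{(k,\infty)}=\alpha_p$, which one checks is positive because the numerator $\frac{\Pr(\#B(\vec\mu)=k)}{2^{k-1}}$ in Eq.~\eqref{eq:alpha_i} is strictly positive for the partition $\langle 1^k,0^{\lceil t/k\rceil-k}\rangle$. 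Hence $\alpha\big(\mathbb{R}(A_Q,\{SC_i\}_{i\in A_Q})\big)>0$.

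For the \emph{security} condition I would argue that any unqualified set $A_F\notin\Gamma_t$ has $|A_F|=t'<k$, so recovery stacks fewer than $k$ bits drawn from the multiset $\mathcal{K}$ generated by Algorithm~\ref{alg:k_k}. The key point is that the lemma on single-point light transmission gives $l(b^{[t']}|_{s=0})=l(b^{[t']}|_{s=1})=\tfrac{1}{2^{t'}}$ whenever $t'<k$, so the numerator of the contrast in Definition~\ref{definition:contrast} vanishes and $\alpha=0$. The subtlety I would be careful about is that the $t'$ stacked bits must actually be independent copies behaving like the first $t'$ bits of a single $(k,k)$ block: because the grouping mechanism assigns distinct indices within each group of size $k$, any $t'<k$ selected shadow images correspond to $t'$ \emph{distinct} share-bit indices from $\{b_1,\dots,b_k\}$, and by Lemma~\ref{lemma:0s} the distribution of such a sub-collection is identical for $s=0$ and $s=1$.

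The main obstacle I anticipate is precisely this last step—making rigorous that an arbitrary unqualified set of fewer than $k$ shadow images, possibly straddling several groups, still has a secret-independent joint distribution. I would handle it by observing that within each group the selected indices are distinct, and across groups the random index selections (and the underlying $(k,k)$ RGVCS instances) are mutually independent; since a strict sub-collection of any single $(k,k)$ block reveals nothing about $s$, the union over groups of such sub-collections also reveals nothing, giving the required equality of conditional probabilities and hence $\alpha=0$. Once these three conditions are verified, I conclude that Algorithm~\ref{algorithm:k_infinite} realizes a $(k,\infty)$ VCS.
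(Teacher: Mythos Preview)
Your proof follows the same three-condition verification as the paper. Conditions~1 and~2 match the paper's argument closely; for condition~2 the paper simply cites Definition~\ref{definition:k-grouped k_n RGVCS} to conclude $\alpha_p>0$, while you give the slightly more explicit reason that the numerator in Eq.~\eqref{eq:alpha_i} is positive for the partition $\langle 1^k,0^{\lceil t/k\rceil-k}\rangle$.

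For the security condition you take a different route. The paper argues directly from the contrast formula: for any valid partition $\vec\gamma$ of $q<k$, the multiset $B(\vec\gamma)$ has only $q$ elements and therefore cannot contain $k$ distinct indices, so $\Pr(\#B(\vec\gamma)=k)=0$ and the numerator of Eq.~\eqref{eq:alpha_i} vanishes. You instead appeal to the single-point light-transmission lemma and the security of the underlying $(k,k)$ block. That is a legitimate alternative, but your handling of the multi-group case contains a misstatement: there is only \emph{one} $(k,k)$ RGVCS instance per pixel in this scheme---every shadow image, in every group, is a copy of one of the same bits $b_1,\dots,b_k$. It is therefore incorrect to speak of ``mutually independent $(k,k)$ RGVCS instances'' across groups, and it is also not true that $t'<k$ shadow images always yield $t'$ \emph{distinct} indices (indices can repeat across groups). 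The correct salvage of your approach is simply that any $t'<k$ selected shadow images determine a multiset of indices of size $t'$, hence at most $t'<k$ distinct indices, hence a strict subset of $\{b_1,\dots,b_k\}$, whose joint distribution is independent of $s$ by the security of Algorithm~\ref{alg:k_k}. Phrased this way your argument is valid; the paper's partition-based argument reaches the same conclusion more directly by plugging into the already-established formula~\eqref{eq:alpha_i}.
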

\begin{proof}
    We provide the proof by verifying whether the scheme satisfies the three conditions specified in the sharing and recovery algorithms 
    given in Definition \ref{definition:k_infity_scheme}.
    \begin{enumerate}
        \item According to the sharing phase described in Algorithm \ref{algorithm:k_infinite} and Algorithm \ref{algorithm:new share k},
        the shadow image generated for the $t$-th participant 
        is functionally dependent on the secret image $S$ and 
        the previously $t-1$ generated shadow images. 
        
        \item Due to Theorem \ref{theorem:k_infity_contrast}, 
        we obtain that the contrast of the proposed RGVCS when stacking $k$ shadow images is equal to $\alpha_p$. 
        According to Definition \ref{definition:k-grouped k_n RGVCS}, 
        it follows that $\alpha_p>0$.

        \item Let $B[\vec{\gamma}]$ represent the partition-based bit multiset 
    corresponding to a valid partition $\vec{\gamma}$ of $q$, where $q<k$. 
    By definition, it's easy to obtain that $\Pr(\#(B[\vec{\gamma}])=k) = 0$. 
    Thus, the contrast of stacking $q$ shadow images equals $0$ according to Eq.~\eqref{eq:alpha_i}, 
    which indicates that the stacking result of any fewer than $k$ shadow images cannot disclose the secret information. 
        
    \end{enumerate} 
\end{proof}

\begin{table}[t]
    \centering
    \caption{The theoretical contrast for the proposed $(k,\infty)$ RGVCS with 
    OR-based recovery \protect\footnotemark}
    \setlength{\tabcolsep}{2pt}
    \resizebox{0.7\columnwidth}{!}{
    \begin{tabular}{cccccccccccc}
      \toprule
      $k/n$ & 2 & 3 & 4 & 5   & $\cdots$&10 & $\cdots$&50 & $\cdots$&100 &$\infty$\\
      \midrule
      2 & 1/2 &3/10&3/10&13/50& &0.2333 & &          0.2061& &0.2030 &1/5 \\
      3 & /   &1/4&13/112&5/56& &0.0642 & &     0.0487& &0.0470  &1/22 \\
      4 & /   &/   &1/8&67/1400& &0.0193 & &     0.0115& &0.0108 &1/99 \\
      \bottomrule
    \end{tabular}
    \label{tab:or_k_infity}
    }
\end{table}
\footnotetext{In this Table, for $n=10,50,$ and $100$, the theoretical contrast values 
are presented with four decimal places due to the complexity of their fractional representations. }

\section{Better scheme for $k=2$ and $k=3$}\label{section:better_scheme_2_3}
Lin \textit{et al.}~\cite{lin2012tifs} mentioned in their proposal that 
in practical applications, it is recommended to set $k$ to $2$ or $3$,
because the contrast values become quite low when $k\geq4$,
with the contrast around $0.015$ for $k=4$, which is already unrecognizable. 
Lin \textit{et al.}~\cite{block} also mentioned that the $(k,\infty)$ schemes for $k\geq4$ remain primarily at theoretical exploration.
Additionally,  
the contrast is $\frac{1}{5}(=0.2)$ for $(2,\infty)$ RGVCS, 
$\frac{1}{22}(\approx0.04545)$ for $(3,\infty)$ RGVCS,
and $\frac{1}{99}\approx0.0101$ for $(4,\infty)$ RGVCS as derived from Table \ref{tab:or_k_infity}, 
which confirms the statements in \cite{lin2012tifs} and \cite{block}, 
where $k=2$ and $k=3$ are the actual applicable threshold values, 
while for schemes with $k\geq4$ cannot be practically deployed due to their low contrast. 
Therefore, it is highly necessary to enhance the contrast for the cases when $k = 2$ and $k = 3$.  
It is also crucial to consider cases where $k\geq 4$ 
in order to expand the available values of $k$. 
In this section, we propose schemes with higher contrast and no pixel expansion  
for these two thresholds, respectively, 
denoted as better $(2,\infty)$ VCS and better $(3,\infty)$ VCS. 
In the next section, we consider the contrast enhancement strategies for 
$k\geq 4$.

Before presenting the specific schemes, 
we first introduce some symbols. 
In better $(2,\infty)$ VCS, 
we denote the shadow images as $SC_1^{(2,\infty)}, SC_2^{(2,\infty)}, \ldots$ , and group them in pairs. 
Particularly, each group of shadow images is denoted as $G_i^{(2)} \triangleq \{SC_{2i-1}^{(2,\infty)},SC_{2i}^{(2,\infty)}\}$, where $i \geq 1$.
Let $\mathcal{G}^{(2,\infty)} \triangleq \{G_1^{(2)},G_2^{(2)},G_3^{(2)},\ldots\}$ 
be the grouped set of shadow images.  
In better $(3,\infty)$ VCS, 
we denote the shadow images as $SC_1^{(3,\infty)}, SC_2^{(3,\infty)}, \ldots$ ,
and divide them into groups of four, with each group 
denoted as $G_i^{(4)} \triangleq \{SC_{4i-3}^{(3,\infty)},SC_{4i-2}^{(3,\infty)},SC_{4i-1}^{(3,\infty)},SC_{4i}^{(3,\infty)}\}$, where $i \geq 1$.
Let $\mathcal{G}^{(3,\infty)} \triangleq \{G_1^{(4)},G_2^{(4)},G_3^{(4)},\ldots\}$ 
denote the grouped set of shadow images. 
Additionally, 
let $l(SC_c^{(a,\infty)}[i,j]|_{S[i,j]=0})$ denote the light transmission at shadow image pixel $SC_c^{(a,\infty)}[i,j]$, where $a=2~\text{or} ~3$ and $c\geq 1$, 
when the corresponding position in secret image $S$ is $0$ (i.e., $S[i,j]=0$). 
Let $l(SC^{[t]}[i,j]|_{S[i,j]=0})$ denote the light transmission at position $[i,j]$ 
in the recovered image generated by stacking any $t~(1\leq t \leq k)$ shadow images, when $S[i,j]=0$, in both schemes.

\subsection{Better $(2,\infty)$ VCS}
The proposed better $(2,\infty)$ VCS is based on GRG. 
Specifically, we adjusted the average light transmission of the shadow images to be $\sqrt{2}-1$ 
to enhance the contrast performance. 
The detailed steps of the scheme are presented in 
Algorithm \ref{algorithm:better_k_2} and Algorithm \ref{algorithm:new_share_k_2}. 
Note that in this section, we let $\lambda=\sqrt{2}-1$.

Our scheme begins by constructing $SC_1^{(2,\infty)}$ with an average light transmission of 
$\sqrt{2}-1$. 
The subsequent shadow images generation follows a sophisticated conditional probability model 
where each pixel value in the shadow images depends fundamentally on both the corresponding secret pixel's transparency and 
the parity of the shadow image index. 
For transparent secret pixels, 
it directly replicates the corresponding pixel value in $SC_1^{(2,\infty)}$, 
while for opaque pixels, 
it implements a parity-sensitive probabilistic mechanism: 
the corresponding pixel in odd-indexed shadow image $SC_t^{(2,\infty)}~(t \equiv 1\!\!\pmod 2)$ 
is set to 0 with probability of $\lambda$, 
whereas the corresponding pixel in even-indexed shadow image $SC_t^{(2,\infty)}~(t \equiv 0\!\!\pmod 2)$ take the value 1 
when the corresponding pixel in $SC_{t-1}^{(2,\infty)}$ is 0, 
and otherwise take the value 0 with probability of $\frac{\lambda}{1-\lambda}$.

In the algorithm, 
the function of table $P$ is to record the bit values at all positions of the most recent odd-indexed shadow image. 
Notably, P must be promptly updated when the current shadow image index is odd. 

Next, we present analysis for the single-pixel light transmission of the better $(2,\infty)$ VCS. 

\begin{algorithm}[t]
    \caption{Better $(2,\infty)$ VCS}
    \label{algorithm:better_k_2}
    \DontPrintSemicolon
    \SetAlgoLined
    \KwIn {an $h\times w$ binary secret image $S$; an integer $n$}
    \KwOut {shadow images $SC_1^{(2,\infty)},SC_2^{(2,\infty)}, \ldots$; Table $P$}
    \For{$(i,j)$ $\in$ $\{1\leq i \leq h,1\leq j \leq w\}$}{
        $SC_1^{(2,\infty)}[i,j]=g(\lambda)$ \hspace{1.3cm}\;
        $P[i,j]=SC_1^{(2,\infty)}[i,j]$\;}
    $t=2$\;
    keep\_running = True\\
    \While{keep\_running}{
    \!\!\!Execute Algorithm \ref{algorithm:new_share_k_2} to generate $SC_t^{(2,\infty)}$, new $P$\;
    \!\!\!Use the new $P$ as the input for the next execution of Algorithm \ref{algorithm:new_share_k_2}\;
    \!\!\!$t=t+1$\;
    \!\!\!If the user enters "quit", keep\_running $\gets$ False\;
    }
    \Return{$SC_1^{(2,\infty)},SC_2^{(2,\infty)},\ldots$; Table $P$}
\end{algorithm}

\begin{algorithm}[t]
    \caption{Share generation for the $t$-th participant}
    \label{algorithm:new_share_k_2}
    \DontPrintSemicolon
    \SetAlgoLined
    \KwIn {a $h \times w$ binary secret image $S$; Table $P$; $t$}
    \KwOut {a new shadow image $SC_t^{(2,\infty)}$; new table $P$}
    \For{$(i,j)$ $\in$ $\{1\leq i \leq h, 1\leq j \leq w\}$}{
        \If{$S[i,j] = 0$}{
            $SC_{t}^{(2,\infty)}[i,j] = P[i,j]$\;
        }
        \Else{
            \If{$t \!\!\mod 2 =1$}{
                $SC_{t}^{(2,\infty)}[i,j] = g(\lambda)$\;
                $P[i,j] = SC_{t}^{(2,\infty)}[i,j]$\;
            }
            \Else{
                \If{$P[i,j]=0$}{
                    $SC_{t}^{(2,\infty)}[i,j] = 1$\;
                }
                \Else{
                    $SC_{t}^{(2,\infty)}[i,j] = g(\frac{\lambda}{1-\lambda})$\;
                }
            }
        }
    }
    \Return{$SC_{t}^{(2,\infty)}$; new table $P$}
\end{algorithm}

\begin{lemma}
    \label{lemma:better_2_light_transmission}
     All bits generated from either a transparent or an opaque secret pixel exhibit identical light transmission 
     $\lambda$. 
\end{lemma}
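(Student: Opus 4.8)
The plan is to verify the claim directly from the generation rules in Algorithm~\ref{algorithm:better_k_2} and Algorithm~\ref{algorithm:new_share_k_2}, showing that every shadow-image bit at a fixed position $[i,j]$ takes the value $0$ with probability exactly $\lambda$. Since light transmission of a single pixel is by definition $l(\cdot)=\Pr(\cdot=0)$, I would split the argument according to the colour of the underlying secret pixel $S[i,j]$ and, in the opaque case, further according to the parity of the shadow-image index $t$.

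First I would dispose of the transparent case $S[i,j]=0$. Here Algorithm~\ref{algorithm:new_share_k_2} always takes the branch $SC_t^{(2,\infty)}[i,j]=P[i,j]$ and never the branch that rewrites $P$, so $P[i,j]$ keeps its initialized value $SC_1^{(2,\infty)}[i,j]=g(\lambda)$ for all $t$. Every bit at this position is therefore an identical copy of $SC_1^{(2,\infty)}[i,j]$, and because $\Pr(g(\lambda)=0)=\lambda$, each such bit has light transmission $\lambda$.

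Next I would treat the opaque case $S[i,j]=1$, splitting on the parity of $t$. For $t=1$ and for every odd $t$, the bit is produced directly by $g(\lambda)$, giving light transmission $\lambda$ at once; moreover in these steps $P[i,j]$ is refreshed to the freshly drawn value. For an even index $t$, the bit depends on $P[i,j]$, which by the immediately preceding odd step equals a draw from $g(\lambda)$, hence is $0$ with probability $\lambda$ and $1$ with probability $1-\lambda$. Conditioning on $P[i,j]$ and applying the law of total probability then yields
\begin{equation*}
l\big(SC_t^{(2,\infty)}[i,j]\big)=\Pr(P[i,j]=0)\cdot 0+\Pr(P[i,j]=1)\cdot\frac{\lambda}{1-\lambda}=(1-\lambda)\cdot\frac{\lambda}{1-\lambda}=\lambda,
\end{equation*}
where the conditional probability vanishes when $P[i,j]=0$ because the algorithm forces $SC_t^{(2,\infty)}[i,j]=1$, and equals $\frac{\lambda}{1-\lambda}$ when $P[i,j]=1$ by the call $g(\frac{\lambda}{1-\lambda})$. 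This settles the even case and completes the verification for both colours.

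I expect the main obstacle to be bookkeeping rather than computation: one must track precisely when table $P$ is written and when it is read, so as to identify the distribution of $P[i,j]$ at each even step as exactly that of $g(\lambda)$. Once this is pinned down, the even-index identity collapses because the factor $(1-\lambda)$ cancels, which is precisely the design purpose of the probability $\frac{\lambda}{1-\lambda}$ used in the algorithm. I would also note in passing that the identity uses only $0<\lambda<1$ together with $\frac{\lambda}{1-\lambda}\le 1$, so the argument is insensitive to the specific choice $\lambda=\sqrt{2}-1$, which is reserved for optimizing contrast later.
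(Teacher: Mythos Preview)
Your proposal is correct and follows essentially the same approach as the paper's own proof: split on the colour of the secret pixel, and in the opaque case further split on the parity of $t$, reducing the even-index computation to $(1-\lambda)\cdot\frac{\lambda}{1-\lambda}=\lambda$. Your version is slightly more explicit about the bookkeeping of table $P$ and adds a side remark on the generality of $\lambda$, but the structure and key identity are identical.
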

\begin{proof}
    For any pixel location $[i,j]$ in the secret image $S$, 
    when $S[i,j] = 0$, 
    all bits generated from it will be identical to $SC_1^{(2,\infty)}{[i,j]}$. 
    Given that $SC_1^{(2,\infty)}{[i,j]}$ yields 0 with probability of $\lambda$, 
    it follows that for any $t$,
    the light transmission of $SC_t^{(2,\infty)}[i,j]$, 
    is equal to $\lambda$, i.e., 
    \begin{equation}
        l(SC_t^{(2,\infty)}[i,j]|_{S[i,j]=0}) = \lambda. \notag
    \end{equation}

    When $S[i,j] = 1$, 
    the pixel values at corresponding positions in subsequent shadow images are generated as follows: 
        \begin{equation}
            \label{eq:2_transmission}
            SC_t^{(2,\infty)}[i,j] = 
            \begin{cases}
                g(\lambda), & c \equiv 1 \pmod 2\\
                1, & c \equiv 0 \pmod 2 ~\text{and}~ SC_{t-1}^{(2,\infty)}[i,j] = 0\\
                g(\frac{\lambda}{1-\lambda}), & c \equiv 0 \pmod 2 ~\text{and}~ SC_{t-1}^{(2,\infty)}[i,j] = 1 
            \end{cases}. 
        \end{equation}
    
    Thus, when $t \equiv 1 \pmod 2$, 
    the light transmission is given by $l(SC_t^{(2,\infty)}[i,j]|_{S[i,j]=1}) = \lambda$, 
    and when $t \equiv 0 \pmod 2$, 
    the light transmission is given by 
    $l(SC_t^{(2,\infty)}[i,j]|_{S[i,j]=1}) = (1-\lambda) \times \frac{\lambda}{1-\lambda} = \lambda$, 
    since $\Pr(SC_{t-1}^{(2,\infty)}[i,j]=1) = 1-\lambda$. 
\end{proof}

Then, we analyze the contrast of the better $(2,\infty)$ VCS. 
We first discuss the case when the number of participants is $n$, 
and extend the result to the infinite case. 

\begin{lemma}
    \label{lemma: stacking_2_contrast}
    Let $\mathcal{R}$ denote the set of the selected two shadow images used for recovery, 
    then the contrast of the image recovered by $\mathcal{R}$ 
    yields different values in two cases:  
    \begin{equation}
        \alpha = 
        \begin{cases}
            \lambda, & \text{if}~~\exists~ G_i^{(2)} \in \mathcal{G}^{(2,\infty)}, |\mathcal{R} \cap G_i^{(2)}| = 2 \\
            \frac{\lambda - \lambda ^ 2}{1+ \lambda^2}, & \text{otherwise}       
        \end{cases}. \notag
    \end{equation}
\end{lemma}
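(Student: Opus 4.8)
The plan is to apply the contrast formula of Definition~\ref{definition:contrast} with $t=2$. Since recovery is by stacking (OR), the composite pixel $SC^{[2]}[i,j]$ is transparent (value $0$) precisely when \emph{both} constituent bits are $0$; hence its light transmission equals the joint probability that the two selected bits simultaneously vanish. The strategy is therefore to compute $l(SC^{[2]}[i,j]|_{S[i,j]=0})$ and $l(SC^{[2]}[i,j]|_{S[i,j]=1})$ separately in each of the two cases, and then substitute into $\alpha = \frac{l(SC^{[2]}|_{S=0}) - l(SC^{[2]}|_{S=1})}{1+l(SC^{[2]}|_{S=1})}$.

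First I would dispose of the transparent region, which behaves identically in both cases. When $S[i,j]=0$, Algorithm~\ref{algorithm:new_share_k_2} copies $P[i,j]$ into every shadow bit, and $P[i,j]$ is never altered for a transparent pixel; thus all bits coincide with $SC_1^{(2,\infty)}[i,j]=g(\lambda)$. The stacked pair is $0$ iff this common value is $0$, giving $l(SC^{[2]}|_{S=0}) = \lambda$ irrespective of which two shadow images are chosen.

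Next I would treat the opaque region in the same-group case, $|\mathcal{R}\cap G_i^{(2)}|=2$, where $\mathcal{R}$ is the consecutive pair $(SC_{2i-1}^{(2,\infty)}, SC_{2i}^{(2,\infty)})$ with $2i-1$ odd and $2i$ even. Reading off Eq.~\eqref{eq:2_transmission}: whenever the odd bit equals $0$, the even bit is forced to $1$, so the event that both bits are $0$ is impossible. Hence $l(SC^{[2]}|_{S=1})=0$, and the formula collapses to $\alpha=\frac{\lambda-0}{1+0}=\lambda$, as claimed.

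The remaining case, where the two chosen shadow images lie in distinct groups, is where the main work lies. I would argue that for an opaque pixel the only statistical coupling introduced by the algorithm is between the two members of a single group: each odd-indexed bit is drawn freshly from $g(\lambda)$ independently of all earlier bits (resetting $P[i,j]$ to this new value), while each even-indexed bit depends only on its immediate predecessor within the same group. Consequently two bits from different groups are independent, and by Lemma~\ref{lemma:better_2_light_transmission} each is $0$ with marginal probability $\lambda$; thus $l(SC^{[2]}|_{S=1})=\lambda^2$, yielding $\alpha=\frac{\lambda-\lambda^2}{1+\lambda^2}$. The delicate point — the one I would write out most carefully — is justifying this cross-group independence uniformly over the three sub-configurations (both indices odd, both even, or mixed), since it rests on the precise reset behavior of the table $P$ at odd time steps rather than on a merely intuitive appeal to "fresh randomness".
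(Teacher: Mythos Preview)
Your proposal is correct and follows essentially the same approach as the paper: compute the single-pixel light transmissions $l(SC^{[2]}|_{S=0})$ and $l(SC^{[2]}|_{S=1})$ in each of the two cases and substitute into the contrast formula. If anything you are slightly more careful than the paper, which in the cross-group case simply multiplies the two marginal transmissions to get $\lambda^2$ without explicitly arguing independence, and which in the same-group case asserts that ``exactly one must have the pixel value~$1$'' (whereas both can be~$1$; your formulation that the event ``both bits are~$0$'' is impossible is the precise statement needed).
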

\begin{proof}
    \begin{enumerate}
        \item In the case of $|\mathcal{R} \cap G_i^{(2)}| = 2$, 
        the two shadow images are selected from the same group. 
        When $S[i,j]=0$, 
        the pixel values at corresponding positions in all shadow images maintain consistency with $SC_1^{(2,\infty)}[i,j]$, 
        consequently yielding that $l(SC^{[2]}[i,j] |_ {S[i,j]=0}) = \lambda$. 
        When $S[i,j]=1$, due to Eq.~\eqref{eq:2_transmission}, 
        two shadow images in the same group are constrained such that exactly one must have 
        the pixel value $1$ at the position $[i,j]$, 
        consequently the recovered image obtained by stacking shadow images within the same group 
        yields a light transmission of $0$ at pixel position $[i,j]$. 
        Hence, the contrast is:  

        \begin{align*}
            \alpha &= \frac{l(SC^{[2]}[i,j]|_{S[i,j]=0}) - l(SC^{[2]}[i,j]|_{S[i,j]=1})}
            {1+l(SC^{[2]}[i,j]|_{S[i,j]=1})} \\
            &= \frac{\lambda - 0}{1+0} = \lambda. 
        \end{align*}

        \item In other cases, where the two shadow images are selected from different groups, 
        when $S[i,j] = 0$, the light transmission is $l(SC^{[2]}[i,j]|_ {S[i,j]=0}) = \lambda$. 
        When $S[i,j] = 1$, due to Lemma \ref{lemma:better_2_light_transmission}, 
        we obtain that $l(SC_t^{(2,\infty)}[i,j]|_{S[i,j]=1}) = \lambda$. 
        Thus, the light transmission at position $[i,j]$ in the recovered image 
        obtained by stacking two shadow images belonging to different groups is 
        $l(SC^{[2]}[i,j]|_{S[i,j] = 1}) = \lambda^2$. 
        Hence, the contrast is: 
        
        \begin{align*}
            \alpha &= \frac{l(SC^{[2]}[i,j]|_{S[i,j] = 0}) - l(SC^{[2]}[i,j]|_{S[i,j] = 1})}
            {1+ l(SC^{[2]}[i,j]|_{S[i,j] = 1})}\\
            &= \frac{\lambda - \lambda^2}{1+\lambda^2}.
        \end{align*}
    \end{enumerate}
\end{proof}

\begin{theorem}
    When the number of participants reaches $n$, 
    the contrast of the recovered image obtained by stacking any $2$ shadow images is 

        \begin{equation}
            \label{eq:stacking_2_contrast}
            \alpha_{(2,n)} = 
            \begin{cases}
                \frac{\lambda-\lambda^2}{1+\lambda^2} \Pr + \lambda (1-\Pr), & n \equiv 0\!\!\! \pmod 2\\\\
                \frac{\lambda-\lambda^2}{1+\lambda^2} \hat{\Pr} + \lambda (1-\hat{\Pr}), & n \equiv 1\!\!\! \pmod 2
            \end{cases},
        \end{equation}
    where $\Pr = \frac{n-2}{n-1}$ and $\hat{\Pr} = \frac{n-1}{n}$. 
   
\end{theorem}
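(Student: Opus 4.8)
The plan is to express $\alpha_{(2,n)}$ as the expected contrast over a uniformly random choice of two shadow images drawn from the $n$ available, in the spirit of the weighted-average formulation in Definition~\ref{definition:k-grouped k_n RGVCS}, and then to invoke Lemma~\ref{lemma: stacking_2_contrast} to collapse the whole computation into a counting problem. Since every unordered pair of shadow images is equally likely to be the recovering set $\mathcal{R}$, and by Lemma~\ref{lemma: stacking_2_contrast} the contrast takes exactly one of two values according to whether the pair lies inside a single group $G_i^{(2)}\in\mathcal{G}^{(2,\infty)}$ or straddles two distinct groups, I can write
\begin{equation*}
    \alpha_{(2,n)} = \lambda \cdot \Pr(\text{same group}) + \frac{\lambda-\lambda^2}{1+\lambda^2}\cdot \Pr(\text{different groups}).
\end{equation*}
Hence it suffices to determine the probability that the two chosen shadow images originate from different groups.

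First I would fix the total number of unordered pairs at $\binom{n}{2}=\tfrac{n(n-1)}{2}$ and then count the same-group pairs. The decisive point is that the number of complete groups $G_i^{(2)}$ depends on the parity of $n$: when $n$ is even there are $n/2$ full pairs, each contributing a single same-group selection, whereas when $n$ is odd the trailing shadow image forms an incomplete group and contributes nothing, leaving $(n-1)/2$ same-group selections. In both cases the same-group count equals $\lfloor n/2\rfloor$, but dividing by $\binom{n}{2}$ yields $\Pr(\text{same group})=\tfrac{1}{n-1}$ when $n$ is even and $\tfrac{1}{n}$ when $n$ is odd, so that $\Pr(\text{different groups})$ becomes $\Pr=\tfrac{n-2}{n-1}$ and $\hat{\Pr}=\tfrac{n-1}{n}$ respectively.

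Substituting these probabilities into the weighted average, and using $\Pr(\text{same group})=1-\Pr$ in the even case and $1-\hat{\Pr}$ in the odd case, reproduces exactly the two branches of Eq.~\eqref{eq:stacking_2_contrast}. The only genuine obstacle is the parity-dependent bookkeeping in the counting step: one must track that the incomplete trailing group contributes no same-group pair when $n$ is odd, and observe that the cancellation in $\binom{n}{2}=\tfrac{n(n-1)}{2}$ removes the factor $n$ when $n$ is even but the factor $n-1$ when $n$ is odd, which is precisely what produces the two different denominators. Everything beyond this is a direct substitution, so I expect no further difficulty.
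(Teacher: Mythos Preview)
Your proposal is correct and follows essentially the same approach as the paper: both argue by averaging the two contrast values from Lemma~\ref{lemma: stacking_2_contrast} according to the probability that a uniformly random pair of shadow images falls in the same group versus different groups, and both reduce the computation to counting the $\lfloor n/2\rfloor$ same-group pairs among the $\binom{n}{2}$ total pairs, treating the even and odd cases separately.
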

\begin{proof}
    \begin{enumerate}
        \item When $n$ is even, 
        we first consider the probability that the two shadow images come from two distinct groups. 
        The selection ways for randomly choosing two from $n$ shadow images are $\binom{n}{2}$. 
        There are $\frac{n}{2}$ groups, and the number of images pairs from the same group is $\frac{n}{2}$, 
        thus the number of image pairs from distinct groups is $\binom{n}{2} - \frac{n}{2}$. 
        Therefore, the probability that the two shadow images come from two distinct groups is: 
        \begin{equation*}
            \Pr = \frac{\binom{n}{2} - \frac{n}{2}}{\binom{n}{2}} = \frac{n-2}{n-1}. 
        \end{equation*}
        and the probability that they come from the same group is $1-\Pr$. 
        By Lemma \ref{lemma: stacking_2_contrast}, 
        the contrast corresponding to these two cases are $\frac{\lambda - \lambda^2}{1+ \lambda^2}$ and $\lambda$. 
        Therefore, the contrast is $\frac{\lambda-\lambda^2}{1+\lambda^2} \Pr + \lambda (1-\Pr)$. 
        \item When $n$ is odd, 
        the number of images pairs from the same group is $\frac{n-1}{2}$, 
        thus the number of image pairs from distinct groups is $\binom{n}{2} - \frac{n-1}{2}$. 
        Therefore, the probability that the two shadow images come from two distinct groups is: 
        \begin{equation*}
            \hat{\Pr} = \frac{\binom{n}{2} - \frac{n-1}{2}}{\binom{n}{2}} = \frac{n-1}{n}. 
        \end{equation*}
        and the probability that they come from the same group is $1-\hat{\Pr}$. 
        Similarly, the contrast is $\frac{\lambda-\lambda^2}{1+\lambda^2} \hat{\Pr} + \lambda (1-\hat{\Pr})$. 
    \end{enumerate}
\end{proof}

\begin{theorem}\label{theorem:better_2_contrast}
    The contrast of the recovered image obtained by stacking any $2$ shadow images  
    in better $(2,\infty)$ VCS is:   
    \begin{equation}
        \label{eq:stacking_infinite_contrast}
        \alpha_{(2,\infty)} = \frac{\lambda-\lambda^2}{1+\lambda^2}=\frac{\sqrt{2}-1}{2}. 
    \end{equation}
\end{theorem}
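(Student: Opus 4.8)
The plan is to reduce this infinite-participant statement to a limit of the finite-participant contrast already established in the preceding theorem. By Definition \ref{infinite_contrast}, the quantity $\alpha_{(2,\infty)}$ is exactly $\lim_{t\to\infty}\alpha'_{(2,t)}$, and since $\alpha'_{(2,t)}$ is computed in the same way as $\alpha_{(2,t,2)}$, it coincides with the contrast $\alpha_{(2,n)}$ given in Eq.~\eqref{eq:stacking_2_contrast} upon setting $n=t$. Consequently the entire proof amounts to passing to the limit $n\to\infty$ in that piecewise formula, and no new probabilistic analysis is required beyond what Lemma \ref{lemma: stacking_2_contrast} and the finite theorem already supply.

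First I would treat the two parities separately, since $\alpha_{(2,n)}$ is given by distinct expressions for even and odd $n$. For even $n$ the relevant weight is $\Pr=\frac{n-2}{n-1}$, and for odd $n$ it is $\hat{\Pr}=\frac{n-1}{n}$; both converge to $1$ as $n\to\infty$, so the complementary weights $1-\Pr$ and $1-\hat{\Pr}$ converge to $0$. Substituting these limits into the two branches of Eq.~\eqref{eq:stacking_2_contrast}, the term carrying the factor $\lambda$ is annihilated while the term carrying $\frac{\lambda-\lambda^2}{1+\lambda^2}$ survives with full weight, so each branch converges to $\frac{\lambda-\lambda^2}{1+\lambda^2}$.

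The one point that genuinely needs checking is that the even-indexed and odd-indexed subsequences share a common limit; because both tend to $\frac{\lambda-\lambda^2}{1+\lambda^2}$, the full limit over $n$ exists and equals this value, which establishes the middle expression of Eq.~\eqref{eq:stacking_infinite_contrast}. To finish I would specialize to $\lambda=\sqrt{2}-1$: computing $\lambda^2=3-2\sqrt{2}$ gives numerator $\lambda-\lambda^2=3\sqrt{2}-4$ and denominator $1+\lambda^2=4-2\sqrt{2}$, and rationalizing by $4+2\sqrt{2}$ yields $\frac{4\sqrt{2}-4}{8}=\frac{\sqrt{2}-1}{2}$. There is no real obstacle here—the substance is a routine limit together with the observation that the two parity branches agree—so the part most worth getting right is the closing algebraic simplification to the stated closed form.
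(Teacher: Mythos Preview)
Your proposal is correct and follows essentially the same approach as the paper: invoke Definition~\ref{infinite_contrast} to reduce $\alpha_{(2,\infty)}$ to $\lim_{n\to\infty}\alpha_{(2,n)}$, observe that both $\Pr$ and $\hat{\Pr}$ tend to $1$, and then substitute $\lambda=\sqrt{2}-1$. Your write-up is in fact slightly more careful than the paper's own proof, since you explicitly note that the even- and odd-indexed subsequences share a common limit and you spell out the rationalization step.
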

\begin{proof}
    When $n$ tends to infinity, we obtain: 
    \begin{equation*}
        \lim_{n \to \infty} \Pr = \lim_{n \to \infty} \frac{n-2}{n-1} = 1, 
        \lim_{n \to \infty} \hat{\Pr} = \lim_{n \to \infty} \frac{n-1}{n} = 1. 
    \end{equation*}
    Due to Definition \ref{infinite_contrast}, 
    the contrast is 
    \begin{equation}
        \alpha_{(2,\infty)} = \lim_{n \to \infty} \alpha_{(2,n)} = \frac{\lambda-\lambda^2}{1+\lambda^2}=\frac{\sqrt{2}-1}{2}. 
    \end{equation}
\end{proof}

Finally, we prove that the VCS proposed by Algorithm \ref{algorithm:better_k_2} 
is a $(2,\infty)$ VCS. 
\begin{theorem}
     The VCS proposed by Algorithm \ref{algorithm:better_k_2} 
is a $(2,\infty)$ VCS. 
\end{theorem}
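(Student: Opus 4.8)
The plan is to verify directly the three conditions of Definition \ref{definition:k_infity_scheme}, following the same strategy already used for the $(k,\infty)$ RGVCS. The first condition is purely structural: inspecting Algorithm \ref{algorithm:better_k_2} and Algorithm \ref{algorithm:new_share_k_2}, the pixel $SC_t^{(2,\infty)}[i,j]$ is produced either as a copy of $P[i,j]$ (when $S[i,j]=0$) or through the parity-dependent rule keyed on $S[i,j]$ and on the previously stored bit $P[i,j]$, which records the most recent odd-indexed shadow image. Hence $SC_t^{(2,\infty)}$ is a function of $S$ and of the shadow images already generated, so that $\mathbb{S}(S,\{SC_i^{(2,\infty)}\}_{i\in[t-1]})=SC_t^{(2,\infty)}$, which establishes condition (1).

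For condition (2), I would show that every qualified set, i.e. any subset of size at least $2$, yields positive contrast. The case of exactly two shadow images is already settled by Lemma \ref{lemma: stacking_2_contrast} and Theorem \ref{theorem:better_2_contrast}: the recovered contrast equals either $\lambda$ or $\frac{\lambda-\lambda^2}{1+\lambda^2}$, both strictly positive since $\lambda=\sqrt{2}-1\in(0,1)$. To cover qualified sets of size larger than two, I would argue by a stacking-monotonicity observation: for a transparent secret pixel ($S[i,j]=0$) all bits coincide with $SC_1^{(2,\infty)}[i,j]$, so the stacked white-region transmission stays exactly $\lambda$ no matter how many shadows are combined; for an opaque pixel ($S[i,j]=1$) the OR operation can only turn more positions opaque, so the black-region transmission of any $t\geq 2$ stacked shadows is at most that of two stacked shadows, hence strictly below $\lambda$. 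The numerator of the contrast therefore remains positive, giving $\alpha>0$ for all qualified sets.

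For condition (3), the only unqualified sets are those of size at most one. By Lemma \ref{lemma:better_2_light_transmission}, every single bit generated by the scheme has light transmission exactly $\lambda$ regardless of whether its secret pixel is transparent or opaque. Consequently a single shadow image exhibits the same white-region and black-region transmission $\lambda$, and its contrast is $\frac{\lambda-\lambda}{1+\lambda}=0$, while the empty set trivially reveals nothing. This verifies condition (3) and completes the argument.

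The main obstacle I anticipate is not any single computation, since the exact-two-shadow case is already proven, but rather the clean treatment of qualified sets of arbitrary size $\geq 2$, because Definition \ref{definition:k_infity_scheme} demands positivity of contrast for \emph{every} qualified set rather than only the minimal ones. The delicate point is to phrase the monotonicity of the black-region transmission under stacking rigorously, noting in particular that if the chosen shadows include a full group $G_i^{(2)}$ the black transmission collapses to $0$, whereas otherwise it equals a product of $\lambda$ factors. Getting this right guarantees strict positivity of the contrast uniformly over all qualified sets and at every finite time $t$.
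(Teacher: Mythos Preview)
Your proposal is correct and, for conditions (1) and (3), matches the paper's proof essentially verbatim: the paper also argues condition (1) by pointing to the structure of Algorithms \ref{algorithm:better_k_2}--\ref{algorithm:new_share_k_2}, and condition (3) by invoking Lemma \ref{lemma:better_2_light_transmission} to get $l(SC_t^{(2,\infty)}[i,j]\mid_{S[i,j]=0}) = l(SC_t^{(2,\infty)}[i,j]\mid_{S[i,j]=1}) = \lambda$, hence contrast $0$ for any single shadow.

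The one place you diverge is condition (2). The paper simply cites Theorem \ref{theorem:better_2_contrast} (Eq.~\eqref{eq:stacking_infinite_contrast}) to assert that the contrast when stacking two shadow images equals $\tfrac{\sqrt{2}-1}{2}>0$, and stops there; it does not separately treat qualified sets of size larger than two. You go further and supply the monotonicity argument (white-region transmission pinned at $\lambda$, black-region transmission non-increasing under OR) to cover all qualified sets of size $\geq 2$. Your addition is a genuine strengthening relative to what the paper writes down, since Definition \ref{definition:k_infity_scheme} asks for positive contrast on \emph{every} qualified set; the paper implicitly relies on the same monotonicity but does not spell it out. What your approach buys is a self-contained verification of the full definition; what the paper's shortcut buys is brevity, at the cost of leaving the $|A_Q|>2$ case unstated.
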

\begin{proof}
    We provide the proof by verifying whether the scheme satisfies the three conditions specified in the sharing and recovery algorithms 
    given in Definition \ref{definition:k_infity_scheme}.
    \begin{enumerate}
        \item According to the sharing phase described in Algorithm \ref{algorithm:better_k_2} and Algorithm \ref{algorithm:new_share_k_2}, 
        the shadow image generated for the $t$-th participant 
        is functionally dependent on the secret image $S$ and 
        the previously $t-1$ generated shadow images. 
        
        \item Due to Eq.~\eqref{eq:stacking_infinite_contrast}, 
        we obtain that the contrast of the proposed VCS when stacking $2$ shadow images is $\frac{\sqrt{2}-1}{2}>0$. 

        \item By Lemma \ref{lemma:better_2_light_transmission}, there is 
        $l(SC_t^{(2,\infty)}[i,j] | _{S[i,j]=0}) = l(SC_t^{(2,\infty)}[i,j]| _{S[i,j]=1}) = \lambda$, 
        where $t \geq 1$.  
        Thus, the contrast of any individual shadow image is 0 by Definition \ref{definition:contrast}, 
        which indicates that the stacking result of any fewer than $2$ shadow images cannot disclose the secret information.

    \end{enumerate} 
\end{proof}

\subsection{Better $(3,\infty)$ VCS} 
The specific steps of the better $(3,\infty)$ VCS are described in 
Algorithm \ref{algorithm:better_k_3} and Algorithm \ref{algorithm:new_share_k_3}.
The first 4 shadow images are generated based on the following two matrics:
\[
\begin{array}{cc}
B^0 =
\begin{bmatrix}
0 & 0 & 0 & 0 \\
0 & 0 & 0 & 0 \\
0 & 1 & 1 & 1 \\
1 & 0 & 1 & 1 \\
1 & 1 & 0 & 1 \\
1 & 1 & 1 & 0 \\
\end{bmatrix},
& \hspace{1cm} 
B^1 = 
\begin{bmatrix}
1 & 1 & 1 & 1 \\
1 & 1 & 1 & 1 \\
0 & 0 & 0 & 1 \\
0 & 0 & 1 & 0 \\
0 & 1 & 0 & 0 \\
1 & 0 & 0 & 0 \\
\end{bmatrix}.
\end{array}
\]

\begin{algorithm}[t]
    \caption{Better $(3,\infty)$ VCS}
    \label{algorithm:better_k_3}
    \DontPrintSemicolon
      \SetAlgoLined
      \KwIn {a $h\times w$ binary secret image $S$; an integer $n$}
      \KwOut {shadow images $SC_{1}^{(3,\infty)},SC_2^{(3,\infty)},\ldots$} 
      \For{$(i,j)$ $\in$ $\{1\leq i \leq h,1\leq j \leq w\}$}{
          Randomly select one row from $B^{S[i,j]}$ and distribute each bit to 
          \begingroup \small{
          $SC_1^{(3,\infty)}[i,j],SC_2^{(3,\infty)}[i,j],SC_3^{(3,\infty)}[i,j],SC_4^{(3,\infty)}[i,j]$} \endgroup
      }
      $t=5$\;
      keep\_running = True\;
      \While{keep\_running}{
      \!\!\!Execute Algorithm \ref{algorithm:new_share_k_3} to generate $SC_t^{(3,\infty)}$, new $P$\;
        \!\!\!Use the new $P$ as the input for the next execution of Algorithm \ref{algorithm:new_share_k_3}\;
        \!\!\!$t=t+1$\;
        \!\!\!If the user enters "quit", keep\_running $\gets$ False\;
      }
      \Return{$SC_1^{(3,\infty)},SC_2^{(3,\infty)},\ldots$}
  \end{algorithm}

\begin{algorithm}[t]
  \caption{Share generation for the $t$-th participant}
  \label{algorithm:new_share_k_3}
  \DontPrintSemicolon
    \SetAlgoLined
    \KwIn {Table $P$; $t$}
    \KwOut {a new shadow image $SC_{t}^{(3,\infty)}$; new table $P$} 
    \For{$(i,j)$ $\in$ $\{1\leq i \leq h,1\leq j \leq w\}$}{
        \If{$t\mod 4 = 1$}{
            $P[i,j] = \emptyset$\;
        }
        $p = \{\{1,2,3,4\}\setminus P[i,j]\}$\;
        $SC_t^{(3,\infty)}[i,j] = SC_p^{(3,\infty)}[i,j]$\;
        $P[i,j] \gets P[i,j] \cup \{p\}$\;
    }
    \Return{$SC_{t}^{(3,\infty)}$; a new table $P$}
\end{algorithm}

Specifically, the dealer randomly selects one row from $B^s~(s\in \{0,1\})$ 
according to the secret bit $s$, 
and distributes them to the corresponding positions in $SC_1^{(3,\infty)},SC_2^{(3,\infty)},SC_3^{(3,\infty)},SC_4^{(3,\infty)}$,  
thus produces the first four shadow images as $G_1^{(4)}$. 
Subsequent shadow images are produced in groups of size $4$. 
When the current time $t$ satisfies $t\equiv 1 \pmod 4$, 
then each pixel value in $SC_t^{(3,\infty)}$ is randomly selected from the corresponding positions of shadow images in $G_1^{(4)}$. 
For other values of $t$, 
random selection is subject to an additional constraint: 
for each pixel in $SC_t^{(3,\infty)}$, 
the candidate pool of selectable bit indices must exclude those already chosen by other shadow images within the same group. 

Similarly, the purpose of table $P$ is to record the selected bit indices 
at identical positions within the same shadow image group. 
When $t \equiv 1 \pmod 4$, the table $P$ needs to be cleared. 

Next, we conduct a theoretical analysis of the proposed better $(3,\infty)$ VCS, 
focusing on its contrast performance. 
We begin by introducing some notations that will be used later.

\begin{definition}
    \label{definition: W_L}
    Let $W^{[OR,x]}$ and $L^{[OR,x]}$ denote the vectors 
    obtained by stacking any $x~(1\leq x \leq 4)$ distinct columns from $B^0$ and $B^1$, 
    and let $Zero(\cdot )$ be the function that counts the number of 0s in a vector. 
    Then, there are: 
    \[
    \left\{
    \begin{array}{l}
        Zero(W^{[OR,1]}) = 3 \\
        Zero(W^{[OR,2]}) = 2 \\
        Zero(W^{[OR,3]}) = 2
    \end{array}
    \right.,
    \left\{
    \begin{array}{l}
        Zero(L^{[OR,1]})= 3 \\
        Zero(L^{[OR,2]})= 2 \\
        Zero(L^{[OR,3]})= 1
    \end{array}
    \right..
    \]
\end{definition}

\begin{definition}
    For any three distinct shadow images $SC_a^{(3,\infty)}, SC_b^{(3,\infty)}, SC_c^{(3,\infty)}$, 
    we define three different types of their stacking result. 

    \begin{enumerate}
        \item Let $SC^{[3]}$ denote the stacking result of $SC_a^{(3,\infty)}$, $SC_b^{(3,\infty)}$, $SC_c^{(3,\infty)}$, satisfying: 
        \begin{equation*}
            \{SC_a^{(3,\infty)},SC_b^{(3,\infty)},SC_c^{(3,\infty)}\} \subseteq G_i^{(4)}.
        \end{equation*}

        \item Let $SC^{[2,1]}$ denote the stacking result of $SC_a^{(3,\infty)}$, $SC_b^{(3,\infty)}$, $SC_c^{(3,\infty)}$,
        satisfying: 
        \[
        \begin{cases}
            |\{SC_a^{(3,\infty)},SC_b^{(3,\infty)},SC_c^{(3,\infty)}\}\cap G_{i_1}^{(4)}| = 2\\
            |\{SC_a^{(3,\infty)},SC_b^{(3,\infty)},SC_c^{(3,\infty)}\}\cap G_{i_2}^{(4)}| = 1
        \end{cases},
        \]
        where $i_1 \neq i_2$. 

        \item Let $SC^{[1,1,1]}$ denote the stacking result of $SC_a^{(3,\infty)}$, $SC_b^{(3,\infty)}$, $SC_c^{(3,\infty)}$, 
        satisfying:
        \begin{equation*}
            SC_a^{(3,\infty)} \in G_{i_1}^{(4)}, SC_b^{(3,\infty)} \in G_{i_2}^{(4)}, SC_c^{(3,\infty)} \in G_{i_3}^{(4)},
        \end{equation*} 
        where $i_1\neq i_2 \neq i_3$. 
    \end{enumerate}

\end{definition}

Then, we analyze the contrast under these three distinct stacking results, 
followed by the contrast when the number of participants reaches $n$ and infinity. 

\begin{theorem}
    \label{theorem:single_layer_contrast_3_infity}
    Let $\alpha_1$, $\alpha_2$, and $\alpha_3$ denote the contrast of  
    $SC^{[3]}$, $SC^{[2,1]}$, and $SC^{[1,1,1]}$, respectively. 
    Their values are as follows, 
    \begin{equation}
        \alpha_1 = \frac{1}{7},\quad \alpha_2 = \frac{1}{15},\quad \alpha_3 = \frac{2}{41}. \notag
    \end{equation} 
\end{theorem}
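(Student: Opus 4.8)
The plan is to reduce every three-image stack to an OR of some number $d$ of \emph{distinct columns} of the single row selected from $B^{S[i,j]}$, and then read off the light transmissions directly from the constants recorded in Definition~\ref{definition: W_L}. The crucial structural fact I would establish first is that Algorithm~\ref{algorithm:new_share_k_3} never injects fresh secret-dependent randomness after $G_1^{(4)}$ is formed: every $SC_t^{(3,\infty)}$ is, pixel by pixel, a copy $SC_p^{(3,\infty)}$ with $p\in\{1,2,3,4\}$, so all shadow images trace back to the same row of $B^{S[i,j]}$ chosen when the first group was built. Within any single group the four images are a \emph{bijective} relabelling of the four base columns (the index $p$ ranges without repetition over $\{1,2,3,4\}$ inside a group), while across different groups the relabelling permutations are independent and uniform. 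Consequently, once I know how many \emph{distinct} base-column indices a chosen triple refers to, the stacked pixel's light transmission equals $Zero(W^{[OR,d]})/6$ when $S[i,j]=0$ and $Zero(L^{[OR,d]})/6$ when $S[i,j]=1$, independent of which specific columns appear---exactly the content of Definition~\ref{definition: W_L}.

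With this reduction the three cases differ only in the distribution of $d$. For $SC^{[3]}$ the three images lie in one group, so by bijectivity they hit three distinct base columns and $d=3$ deterministically; substituting $Zero(W^{[OR,3]})=2$ and $Zero(L^{[OR,3]})=1$ into Definition~\ref{definition:contrast} gives $\alpha_1=\tfrac{1/3-1/6}{1+1/6}=\tfrac17$. For $SC^{[2,1]}$ the two images from one group contribute two fixed distinct base columns, while the single image from the other group contributes an independent \emph{uniform} base index that coincides with one already present with probability $\tfrac12$, so $d=2$ and $d=3$ each occur with probability $\tfrac12$. For $SC^{[1,1,1]}$ the three base indices are three i.i.d.\ uniform draws from $\{1,2,3,4\}$, and I would compute $\Pr[d{=}3]=\tfrac{3}{8}$, $\Pr[d{=}2]=\tfrac{9}{16}$, $\Pr[d{=}1]=\tfrac{1}{16}$ by the standard coincidence count.

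I would then average the per-$d$ transmissions against these weights. For $SC^{[2,1]}$ this yields $l(\,\cdot\,|_{S=0})=\tfrac12\cdot\tfrac13+\tfrac12\cdot\tfrac13=\tfrac13$ and $l(\,\cdot\,|_{S=1})=\tfrac12\cdot\tfrac13+\tfrac12\cdot\tfrac16=\tfrac14$, so $\alpha_2=\tfrac{1/3-1/4}{1+1/4}=\tfrac{1}{15}$. For $SC^{[1,1,1]}$, using $Zero(W^{[OR,1]})=Zero(L^{[OR,1]})=3$ and $Zero(W^{[OR,2]})=Zero(L^{[OR,2]})=2$, the weighted sums give $l(\,\cdot\,|_{S=0})=\tfrac{11}{32}$ and $l(\,\cdot\,|_{S=1})=\tfrac{9}{32}$, whence $\alpha_3=\tfrac{2/32}{1+9/32}=\tfrac{2}{41}$.

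The main obstacle is not the arithmetic but the dependence structure: one must resist the incorrect temptation to treat shadow images from different groups as independent encodings of the secret. Because later groups are permuted \emph{copies} of $G_1^{(4)}$ rather than fresh row selections, the only randomness surviving across groups is \emph{which base column each image points to}, and pinning down this referenced-column distribution---justifying that it is uniform and independent across groups while bijective within a group---is the delicate step. Once it is in place, Definition~\ref{definition: W_L} makes the light transmissions, and hence all three contrasts, immediate.
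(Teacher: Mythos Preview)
Your proposal is correct and follows essentially the same route as the paper's proof: reduce each stacking configuration to the distribution of the number $d$ of distinct base columns referenced, read off the transmissions from Definition~\ref{definition: W_L}, and plug into Definition~\ref{definition:contrast}. Your probabilities for $d$ in the three cases ($d=3$ deterministically; $d\in\{2,3\}$ each with probability $\tfrac12$; and $\Pr[d{=}1]=\tfrac{1}{16}$, $\Pr[d{=}2]=\tfrac{9}{16}$, $\Pr[d{=}3]=\tfrac{3}{8}$) and the resulting contrasts all match the paper exactly. If anything, your explicit articulation of the dependence structure---that later groups are permuted copies of $G_1^{(4)}$ sharing the \emph{same} row selection, so only the column-index randomness survives across groups---is a clearer justification of the reduction than the paper's somewhat terse ``can be regarded as'' phrasing.
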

\begin{proof}
    \begin{enumerate}
        \item Calculation of $\alpha_1$: 
            By Definition \ref{definition: W_L}, we obtain 
            $Zero(W^{[OR,3]}) = 2$ and $Zero(L^{[OR,3]})= 1$. 
            Thus, the light transmission for any pixel in $SC^{[3]}$, i.e., $SC^{[3]}[i,j]$, 
            where $S[i,j] = 0$ and $S[i,j] = 1$, are 
            $l(SC^{[3]}[i,j] |_{S[i,j] = 0}) = \frac{1}{3}$ and $l(SC^{[3]}[i,j] |_{S[i,j] = 1}) = \frac{1}{6}$, respectively. 
            Thus, 

                \begin{align*}
                \alpha_1 &= \frac{l(SC^{[3]}[i,j] |_{S[i,j] = 0}) - l(SC^{[3]}[i,j] |_{S[i,j] = 1})}{1+l(SC^{[3]}[i,j] |_{S[i,j] = 1})} \\
                &= \frac{1/3 - 1/6}{1 + 1/6} = \frac{1}{7}. 
            \end{align*}

        \item Calculation of $\alpha_2$: 
            For any pixel in $SC^{[2,1]}$, i.e., $SC^{[2,1]}[i,j]$, 
            it can be regarded as the stacking result of three columns, 
            where two of them are simultaneously selected from $B^s$, 
            and the remaining column is selected again from $B^s$. 
            This leads to a $\frac{1}{2}$ chance of two distinct columns 
            and a $\frac{1}{2}$ chance of three distinct columns among the three selected columns. 
            Due to Definition \ref{definition: W_L}, 
            the light transmission with respect to $S[i,j]$ is:

                \begin{align*}
                    l(SC^{[2,1]}[i,j] |_{S[i,j] = 0}) = \frac{1}{2}\times \frac{2}{6} + \frac{1}{2}\times \frac{2}{6} = \frac{1}{3},\\
                    l(SC^{[2,1]}[i,j] |_{S[i,j] = 1}) = \frac{1}{2}\times \frac{2}{6} + \frac{1}{2}\times \frac{1}{6} = \frac{1}{4}.
                \end{align*}
            Thus, 
                \begin{align*}
                    \alpha_2 &= \frac{l(SC^{[2,1]}[i,j] |_{S[i,j] = 0}) - l(SC^{[2,1]}[i,j] |_{S[i,j] = 1})}{1+l(SC^{[2,1]}[i,j] |_{S[i,j] = 1})} \\
                    &= \frac{1/3 - 1/4}{1+ 1/4} = \frac{1}{15}.
                \end{align*}

        \item Calculation of $\alpha_3$: 
            For any pixel in $SC^{[1,1,1]}$, i.e., $SC^{[1,1,1]}[i,j]$, 
            it can be regarded as the stacking result of three columns, 
            where each column is independently selected from $B^s$. 
            This leads to a $\frac{1}{16}$ chance of one column, $\frac{9}{16}$ chance of two distinct columns, 
            and $\frac{3}{8}$ chance of three distinct columns among the three selected columns. 
            By Definition \ref{definition: W_L}, 
            the light transmission with respect to $S[i,j]$ is: 

                \begin{align*}
                    l(SC^{[1,1,1]}[i,j] |_{S[i,j] = 0}) = \frac{1}{16}\times \frac{3}{6} + \frac{9}{16}\times \frac{2}{6} + \frac{3}{8}\times \frac{2}{6} = \frac{11}{32},\\
                    l(SC^{[1,1,1]}[i,j] |_{S[i,j] = 1}) = \frac{1}{16}\times \frac{3}{6} + \frac{9}{16}\times \frac{2}{6} + \frac{3}{8}\times \frac{1}{6} = \frac{9}{32}.
                \end{align*}
            Thus, 
                \begin{align*}
                    \alpha_3 &= \frac{l(SC^{[1,1,1]}[i,j] |_{S[i,j] = 0}) - l(SC^{[1,1,1]}[i,j] |_{S[i,j] = 1})}{1+l(SC^{[1,1,1]}[i,j] |_{S[i,j] = 1})} \\
                    &= \frac{11/32 - 9/32}{1+ 9/32} = \frac{2}{41}.
                \end{align*}
    \end{enumerate}
\end{proof}

\begin{theorem}
    When the number of participants reaches $n$, 
    the contrast of the recovered image obtained by stacking any 3 shadow images is: 
    \begin{equation}\label{eq:better_3_infty_contrast}
        \alpha = \sum_{i=1}^{3} w_i \alpha_i  = w_1\alpha_1  + w_2\alpha_2  + w_3\alpha_3 , 
    \end{equation}
    where 

        \begin{equation}
            \begin{cases}
            w_1 = \dfrac{\binom{u}{3} + 4(\lceil \frac{n}{4} \rceil -1)}{\binom{n}{3}}, \\[10pt]
            w_2 = \dfrac{(\lceil \frac{n}{4} \rceil -1) \left[4\binom{u}{2} + 6u + 24(\lceil \frac{n}{4} \rceil -2)\right]}{\binom{n}{3}}, \\[10pt]
            w_3 = \dfrac{8(\lceil \frac{n}{4} \rceil-1)(\lceil \frac{n}{4} \rceil-2)[3u + 4(\lceil \frac{n}{4} \rceil-3)]}{3\binom{n}{3}}, \notag
        \end{cases}
        \end{equation}
    with $u = (n-1)\!\!\mod 4+1$. 
\end{theorem}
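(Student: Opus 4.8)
The plan is to work within the framework of Definition \ref{definition:k-grouped k_n RGVCS}: when any $3$ of the $n$ shadow images are selected, each of the $\binom{n}{3}$ equiprobable selections falls into exactly one of the three stacking types $SC^{[3]}$, $SC^{[2,1]}$, $SC^{[1,1,1]}$ already analyzed in Theorem \ref{theorem:single_layer_contrast_3_infity}, whose contrasts are $\alpha_1 = \frac{1}{7}$, $\alpha_2 = \frac{1}{15}$, and $\alpha_3 = \frac{2}{41}$. By Eq.~\eqref{eq:alpha}, the overall contrast is the occurrence-probability-weighted sum $w_1\alpha_1 + w_2\alpha_2 + w_3\alpha_3$. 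Consequently the theorem reduces entirely to computing each weight $w_i$ as the number of $3$-image selections of type $i$ divided by $\binom{n}{3}$, which is a pure combinatorial counting problem.

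First I would fix notation: write $m = \lceil \frac{n}{4}\rceil$ for the number of groups, note that the first $m-1$ groups are complete (each containing $4$ shadow images), and that the last group contains exactly $u = (n-1)\bmod 4 + 1$ shadow images. The elementary quantities $\binom{4}{j}$ and $\binom{u}{j}$ for selecting $j$ images from a complete group or from the incomplete group serve as the building blocks, with $\binom{4}{3}=4$, $\binom{4}{2}=6$, and $\binom{4}{1}=4$.

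Then I would count each type in turn. For $w_1$ (all three images from a single group), the $m-1$ complete groups contribute $4(m-1)$ selections and the incomplete group contributes $\binom{u}{3}$, yielding the stated numerator. For $w_2$ (two images from one group and one from a distinct group), I would split into three subcases according to the role of the incomplete group: two from the incomplete group and one from a complete group gives $4(m-1)\binom{u}{2}$; two from a complete group and one from the incomplete group gives $6u(m-1)$; and two from a complete group with one from another complete group gives $24(m-1)(m-2)$. Factoring out $(m-1)$ reproduces the bracket in $w_2$. For $w_3$ (one image from each of three distinct groups), I would distinguish whether the incomplete group is chosen: three complete groups give $\binom{m-1}{3}4^3$, while two complete groups together with the incomplete group give $\binom{m-1}{2}4^2 u$; expanding the binomial coefficients and collecting terms over the common factor $\frac{8(m-1)(m-2)}{3}$ recovers the numerator $8(m-1)(m-2)[3u+4(m-3)]$.

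Dividing each count by $\binom{n}{3}$ produces the three weights, and substitution into Eq.~\eqref{eq:better_3_infty_contrast} completes the proof. The main obstacle is not any single difficult step but the disciplined bookkeeping of subcases: one must treat the unique incomplete last group separately within every type, distinguish ordered from unordered group choices to avoid double counting, and confirm that the three types are genuinely mutually exclusive and exhaustive (the multiset of source groups of any $3$-selection is necessarily of shape $\{3\}$, $\{2,1\}$, or $\{1,1,1\}$). The only genuine computation is the final algebraic consolidation in $w_3$, where $\binom{m-1}{3}$ and $\binom{m-1}{2}$ must be combined into the stated closed form.
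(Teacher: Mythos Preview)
Your proposal is correct and follows essentially the same approach as the paper: both reduce the theorem to computing the occurrence probabilities $w_1,w_2,w_3$ of the three stacking types and then invoke Theorem~\ref{theorem:single_layer_contrast_3_infity} for the $\alpha_i$. The only cosmetic difference is that the paper plugs the valid partitions $\langle 3^1,0^{m-1}\rangle$, $\langle 2^1,1^1,0^{m-2}\rangle$, $\langle 1^3,0^{m-3}\rangle$ into the general formula Eq.~\eqref{eq:w_mu} (with the group size parameter replaced by $4$) and expands $w_3$ as the representative example, whereas you carry out the same enumeration directly by elementary case analysis on the role of the incomplete last group; both routes produce identical counts.
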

\begin{proof}
     According to Theorem \ref{theorem:single_layer_contrast_3_infity}, 
     the contrast of the recovered image will fall into three distinct values. 
     Then, we analyze the probability of each type of contrast occurring. 
     Consistent with Definition \ref{definition:k-grouped k_n RGVCS}, 
     the probability can be calculated using Eq.~\eqref{eq:w_mu}. 
     It should be noted that in Eq.~\eqref{eq:w_mu}, 
     $|\mathcal{G}_{\lceil \frac{n}{k} \rceil}|$ represents the number of bits in the last group, 
     which is denoted by $u$ in this scheme. 
     Additionally, the parameter $k$ in Eq.~\eqref{eq:w_mu} represents the number of bits in complete groups, 
     which is $4$ in this scheme. 

     The valid partitions corresponding to $w_1,~w_2,~$ and $w_3$ are 
     $\left\langle 3^1, 0^{\lceil \frac{n}{4} \rceil -1} \right\rangle$, 
     $\left\langle 2^1, 1^1, 0^{\lceil \frac{n}{4} \rceil -2} \right\rangle$, 
     $\left\langle 1^3, 0^{\lceil \frac{n}{4} \rceil -3} \right\rangle$, respectively. 
     Taking the computation of $w_3$ as an example: 
        \begin{equation*}
        \begin{aligned}
            w_3 &= \frac{\binom{u}{1}{\binom{4}{1}}^2{\binom{4}{0}}^{\lceil \frac{n}{4} \rceil -3}\frac{(\lceil \frac{n}{4} \rceil -1)!}{2!~(\lceil \frac{n}{4} \rceil-3)!} 
            + \binom{u}{0}{\binom{4}{0}}^{\lceil \frac{n}{4} \rceil -4}{\binom{4}{1}}^3\frac{(\lceil \frac{n}{4} \rceil -1)!}{3!~(\lceil \frac{n}{4} \rceil-4)!}}
            {\binom{n}{3}}\\
            &=\frac{8u(\lceil \frac{n}{4} \rceil -1)(\lceil \frac{n}{4} \rceil -2) + \frac{32}{3}(\lceil \frac{n}{4} \rceil -1)(\lceil \frac{n}{4} \rceil -2)(\lceil \frac{n}{4} \rceil -3)}{\binom{n}{3}}\\
            &=\frac{8(\lceil \frac{n}{4} \rceil -1)(\lceil \frac{n}{4} \rceil -2)[3u+4(\lceil \frac{n}{4} \rceil -3)]}{3\binom{n}{3}}. 
        \end{aligned}
     \end{equation*}

     The calculation for $w_2$ and $w_2$ are analogous.  

\end{proof}

\begin{theorem}\label{theorem:3_infity_contrast}
    As the participant count tends to infinity, 
    the contrast of the recovered image obtained by 
    stacking any 3 shadow images is $\frac{2}{41}$. 
\end{theorem}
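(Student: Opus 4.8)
The plan is to reduce this limit to the weighted average in Eq.~\eqref{eq:better_3_infty_contrast} and to mirror the argument used in the proof of Theorem~\ref{theorem:k_infity_contrast}. The three stacking types $SC^{[3]}$, $SC^{[2,1]}$, and $SC^{[1,1,1]}$ exhaust all ways that three chosen shadow images can be distributed among the groups of size four, so $w_1,w_2,w_3$ are the occurrence probabilities of the three equivalence classes of valid partitions of $t=3$. By the same reasoning as in Theorem~\ref{theorem:k_infity_contrast} (and Definition~\ref{definition:k-grouped k_n RGVCS}), they satisfy $w_1+w_2+w_3=1$ for every finite $n$. Hence it suffices to identify which weight survives as $n\to\infty$.

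First I would determine the polynomial degree in $n$ of each numerator, using that $u=(n-1)\bmod 4+1$ is bounded by $4$ and that $\lceil n/4\rceil=n/4+O(1)$. The numerator of $w_1$ is linear in $n$ (dominated by $4(\lceil n/4\rceil-1)$), the numerator of $w_2$ is quadratic (dominated by $24(\lceil n/4\rceil-1)(\lceil n/4\rceil-2)$), and the numerator of $w_3$ is cubic. Since the shared denominator satisfies $\binom{n}{3}\sim n^3/6$, dividing through gives $w_1=O(n^{-2})$ and $w_2=O(n^{-1})$, so $\lim_{n\to\infty}w_1=\lim_{n\to\infty}w_2=0$.

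Next I would confirm $\lim_{n\to\infty}w_3=1$, which then follows immediately from $w_3=1-w_1-w_2$; alternatively one can extract the leading term $\tfrac{32}{3}(\lceil n/4\rceil)^3\sim \tfrac{32}{3}\cdot\tfrac{n^3}{64}=\tfrac{n^3}{6}$ of its numerator and observe that it matches $\binom{n}{3}\sim n^3/6$. Applying Definition~\ref{infinite_contrast} and substituting $\alpha_1=\tfrac17$, $\alpha_2=\tfrac1{15}$, $\alpha_3=\tfrac{2}{41}$ from Theorem~\ref{theorem:single_layer_contrast_3_infity} then yields
\[
\alpha_{(3,\infty)}=\lim_{n\to\infty}\bigl(w_1\alpha_1+w_2\alpha_2+w_3\alpha_3\bigr)
=0\cdot\tfrac17+0\cdot\tfrac1{15}+1\cdot\tfrac{2}{41}=\tfrac{2}{41}.
\]

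The main obstacle will be the bookkeeping around the ceiling function and the residue $u$: I must argue that the $O(1)$ fluctuations of $u$ and of $\lceil n/4\rceil-n/4$ do not contribute to the top-degree coefficients, so that the degree count and the limit of $w_3$ are governed solely by the leading terms. Once this is justified the remaining computation is routine, and the identity $w_1+w_2+w_3=1$ provides a convenient cross-check on the weight formulas.
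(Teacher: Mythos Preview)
Your proposal is correct and follows essentially the same approach as the paper: both arguments compare the polynomial degrees of the numerators of $w_1,w_2,w_3$ against $\binom{n}{3}\sim n^3/6$ to conclude that $w_1,w_2\to 0$ and $w_3\to 1$, and then read off $\alpha_{(3,\infty)}=\alpha_3=\tfrac{2}{41}$. Your explicit use of $w_1+w_2+w_3=1$ and the bookkeeping around $u$ and $\lceil n/4\rceil$ are slight refinements, but the underlying strategy is the same.
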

\begin{proof}
    When $n \to \infty$, 
        \begin{equation}
            \begin{cases}
                \lim\limits_{n \to \infty} w_1 \approx \frac{n}{n^3/6} = \frac{6}{n^2} = 0,\\\\
                \lim\limits_{n \to \infty} w_2 \approx \frac{\frac{3n^2}{2}}{n^3/6} = \frac{9}{n} = 0,\\\\
                \lim\limits_{n \to \infty} w_3 \approx \frac{n^3}{n^3} =  1 .\notag
            \end{cases}
        \end{equation}
    Thus, $\lim\limits_{n \to \infty} \alpha= w_1\alpha_1 + w_2\alpha_2 + w_3\alpha_3 = \alpha_3 = \frac{2}{41}$. 
\end{proof}

Finally, we prove that the VCS proposed by Algorithm \ref{algorithm:better_k_3} 
is a $(3,\infty)$ VCS.

\begin{theorem}
    The VCS proposed by Algorithm \ref{algorithm:better_k_3} 
is a $(3,\infty)$ VCS.
\end{theorem}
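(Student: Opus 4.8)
The plan is to follow the same three-part template used for the earlier ``is a $(k,\infty)$ VCS'' theorems, verifying in turn the three conditions of Definition \ref{definition:k_infity_scheme}. The first condition is immediate: by inspection of Algorithm \ref{algorithm:better_k_3} and Algorithm \ref{algorithm:new_share_k_3}, the shadow image $SC_t^{(3,\infty)}$ assigned to the $t$-th participant is obtained either from a row of $B^{S[i,j]}$ (for the first group) or by copying a pixel from a previously generated shadow image in the same group under the exclusion constraint recorded in table $P$. Hence $SC_t^{(3,\infty)}$ is a function of $S$ and $SC_1^{(3,\infty)},\ldots,SC_{t-1}^{(3,\infty)}$ only, which is exactly what condition~(1) requires.

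For the second (visually recognizable) condition, I would invoke Theorem \ref{theorem:single_layer_contrast_3_infity}: stacking any three shadow images yields one of the three stacking types $SC^{[3]}$, $SC^{[2,1]}$, $SC^{[1,1,1]}$, whose contrasts are $\frac{1}{7}$, $\frac{1}{15}$, and $\frac{2}{41}$, each strictly positive. Therefore every qualified set (three or more shares) recovers the secret with positive contrast, and Theorem \ref{theorem:3_infity_contrast} confirms that this remains positive in the limit, with $\alpha_{(3,\infty)}=\frac{2}{41}>0$.

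The third (security) condition is where the real work lies, and I expect the two-share case to be the main obstacle, since—unlike the $k=2$ setting—two inequivalent sub-cases arise here. For a single shadow image I would use the column counts of Definition \ref{definition: W_L}: every column of both $B^0$ and $B^1$ contains exactly three $0$s (so $Zero(W^{[OR,1]})=Zero(L^{[OR,1]})=3$), giving light transmission $\frac{1}{2}$ independent of the secret and hence contrast $0$ by Definition \ref{definition:contrast}. For two stacked shares I would split on whether they lie in the same group. If they share a group, the stack corresponds to two distinct columns of a single $B^s$, and $Zero(W^{[OR,2]})=Zero(L^{[OR,2]})=2$ yields identical light transmission $\frac{1}{3}$ for $S=0$ and $S=1$. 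The delicate sub-case is two shares from \emph{different} groups: here the two selected columns are drawn independently, so the counts in Definition \ref{definition: W_L} do not directly apply, and I would instead argue that each individual column is white with probability $\frac{1}{2}$ (three $0$s out of six) in both $B^0$ and $B^1$, whence the stacked pixel is white with probability $\frac{1}{2}\cdot\frac{1}{2}=\frac{1}{4}$ regardless of the secret. In every sub-case the light transmissions for $S=0$ and $S=1$ coincide, forcing contrast $0$, so no unqualified set reveals any secret information. Assembling the three verifications completes the proof.
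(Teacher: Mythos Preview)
Your treatment of conditions (1) and (2) is fine and matches the paper's approach. The gap is in the security argument for the ``two shares from different groups'' sub-case. You write that ``each individual column is white with probability $\frac{1}{2}$ \ldots\ whence the stacked pixel is white with probability $\frac{1}{2}\cdot\frac{1}{2}=\frac{1}{4}$''. This multiplication presumes the two pixel values are independent, but they are not: every group $G_i^{(4)}$ is a pixel-wise permutation of $G_1^{(4)}$, so both selected pixels are read from the \emph{same} random row $r$ of $B^s$. Only the column indices $c_1,c_2$ are independent; the values $B^s[r,c_1]$ and $B^s[r,c_2]$ are correlated through $r$. The actual light transmission is $\frac{3}{8}$, not $\frac{1}{4}$.

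The paper's fix is to condition on whether the two independently drawn column indices coincide (probability $\tfrac{1}{4}$) or differ (probability $\tfrac{3}{4}$), and then reuse the $Zero$ counts of Definition~\ref{definition: W_L}: in the coincidence case the stack is a single column ($Zero(W^{[OR,1]})=Zero(L^{[OR,1]})=3$), in the distinct case it is two distinct columns ($Zero(W^{[OR,2]})=Zero(L^{[OR,2]})=2$). This gives $\tfrac{1}{4}\cdot\tfrac{3}{6}+\tfrac{3}{4}\cdot\tfrac{2}{6}=\tfrac{3}{8}$ for both $s=0$ and $s=1$, hence contrast $0$. Your independence shortcut happens to produce equal (though incorrect) numbers for both values of $s$ only because the marginals coincide; the argument as written would not distinguish a pair $B^0,B^1$ with identical column marginals but different row-level joint structure, so it does not actually establish the security condition.
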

\begin{proof}
    We provide the proof by verifying whether the scheme satisfies the three conditions specified in the sharing and recovery algorithms 
    given in Definition \ref{definition:k_infity_scheme}.
    \begin{enumerate}
        \item According to the sharing phase described in Algorithm \ref{algorithm:better_k_3} and Algorithm \ref{algorithm:new_share_k_3}, 
        the shadow image generated for the $t$-th participant 
        is functionally dependent on the secret image $S$ and 
        the previously $t-1$ generated shadow images. 
        
        \item Due to Theorem \ref{theorem:3_infity_contrast}, 
        the contrast of the proposed VCS when stacking $3$ shadow images is $\frac{2}{41}>0$. 

        \item 
        Due to Definition \ref{definition: W_L}, we obtain that
        $Zero(W^{[OR,1]}) = Zero(L^{[OR,1]}) = 3$, 
        thus, $l(SC^{[1]}[i,j] |_{S[i,j] = 0}) = l(SC^{[1]}[i,j] |_{S[i,j] = 1}) = \frac{1}{2}$. 
        By Eq.~\eqref{eq:contrast}, the contrast of a single shadow image is $0$. 
    
        Then, we analyze whether stacking any two shadow images leaks secret information. 
        When these two shadow images are selected from the same group, 
        due to Definition \ref{definition: W_L}, we obtain that
        $Zero(W^{[OR,2]}) = Zero(L^{[OR,2]}) = 2$, 
        thus, $l(SC^{[2]}[i,j] |_{S[i,j] = 0}) = l(SC^{[2]}[i,j] |_{S[i,j] = 1}) = \frac{1}{3}$. 
        When these two shadow images are selected from two different groups, 
        for any pixel in the stacking result of these two shadow images,  
        there is a $\frac{3}{4}$ probability of selecting two distinct columns 
        and a $\frac{1}{4}$ probability of selecting two identical columns. 
        Thus, the light transmission is: 

            \begin{align*}
            l(SC^{[1,1]}[i,j] |_{S[i,j] = 0}) = \frac{3}{4}\times \frac{2}{6} + \frac{1}{4} \times \frac{3}{6}=\frac{3}{8},\\
            l(SC^{[1,1]}[i,j] |_{S[i,j] = 1}) = \frac{3}{4}\times \frac{2}{6} + \frac{1}{4} \times \frac{3}{6}=\frac{3}{8}. 
        \end{align*}
        
        Therefore, whether these two shadow images are selected from the same group or different groups, 
        the contrast of their stacking result is $0$. 
    \end{enumerate} 
\end{proof}

\section{Contrast enhancement methods for $k\geq 4$}\label{section:k_geq_4}
The previous section proposed $(k,\infty)$ schemes with better contrast when $k=2$ and $k=3$. 
This section primarily examines the contrast enhancement strategies for $(k,\infty)$ RGVCS, where $k\geq 4$, in order to expand the range of available $k$ values. 
Specifically, we propose two contrast enhancement methods in this section: 
XOR-based recovery and stacking multiple shadow images. 

\subsection{XOR-based recovery}
In conventional VCS that rely on OR-based recovery,
an inherent limitation exists where the recovered image progressively darkens with increasing numbers of shadow images, 
resulting in significant visual quality degradation. 
The multi-decryption VCS, 
incorporating both OR-based and XOR-based recovery capabilities, effectively resolves this limitation.
Notably, the XOR-based recovery feature enables lossless secret reconstruction, 
thereby dramatically improving the visual fidelity of recovered images. 
However, 
this approach requires computational devices, unlike simple stacking operations. 
Consequently, designing VCS with multiple recovery capabilities offers distinct advantages: 
secret recovery can be achieved through stacking in computation-limited environments, 
while the availability of computational devices enables higher-quality image reconstruction via XOR recovery.

The proposed $(k,\infty)$ RGVCS enables XOR recovery, 
which effectively improves the visual quality of recovered images for $k\geq4$. 
The theoretical analysis of XOR recovery are presented as follows.

\begin{lemma}
    Let $\vec{m} \triangleq [m_1,m_2,\ldots,m_{\lceil \frac{n}{k} \rceil}]$ be a valid partition of $k$.  
    Suppose $m_i~(1\leq i\leq \lceil \frac{n}{k} \rceil)$ bits are selected from $\mathcal{K} \triangleq \{b_1,b_2,\ldots,b_k\}$, 
    and let $\mathcal{X}$ denote the set containing the selected bits, 
    i.e., 
    \begin{equation*}
        \mathcal{X} = \bigcup_{i=1}^{\lceil \frac{n}{k} \rceil}R[\mathcal{K},m_i]. 
    \end{equation*}
    Then, the probability that $\mathcal{X}$ contains an even number of 1s is: 
    \begin{equation}
        \label{eq:P_even}
        {\Pr}_{even} = \frac{1 + \prod_{i=1}^{\lceil \frac{n}{k} \rceil} \left( \frac{\sum_{h=0}^{m_{i}} (-1)^{h} \binom{k-n_{0}}{h} \binom{n_{0}}{m_{i}-h}}{\binom{k}{m_{i}}} \right)}{2},
    \end{equation}
    where $n_0$ denotes the number of 0s in $\mathcal{K}$. 
\end{lemma}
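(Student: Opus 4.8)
The plan is to exploit the group structure of the draw. I would write $Y_i$ for the number of $1$s among the $m_i$ bits returned by the $i$-th selection $R[\mathcal{K},m_i]$, so that the total number of $1$s in $\mathcal{X}$ (counted with multiplicity, since $\mathcal{X}$ is a multiset) is $Y \triangleq \sum_{i=1}^{\lceil n/k \rceil} Y_i$, and the quantity of interest is $\Pr(Y \text{ even})$. Two structural facts underpin everything. First, because each $R[\mathcal{K},m_i]$ is an \emph{independent} draw from the entire set $\mathcal{K}$ (a bit already chosen in one group may reappear in another), the variables $Y_1,\ldots,Y_{\lceil n/k \rceil}$ are mutually independent. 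Second, within a single group the $m_i$ bits are drawn without replacement from a pool of $k$ bits containing exactly $k-n_0$ ones and $n_0$ zeros, so $Y_i$ is hypergeometric.

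First I would convert the parity event into an expectation through the indicator identity $\mathbf{1}[Y \text{ even}] = \tfrac12\bigl(1 + (-1)^{Y}\bigr)$, which gives ${\Pr}_{even} = \tfrac12\bigl(1 + \mathbb{E}[(-1)^{Y}]\bigr)$. Invoking the independence of the $Y_i$ then factorises the expectation as $\mathbb{E}[(-1)^{Y}] = \prod_{i=1}^{\lceil n/k \rceil}\mathbb{E}[(-1)^{Y_i}]$, which already produces the product and the leading $\tfrac12(1+\cdots)$ of Eq.~\eqref{eq:P_even}.

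It remains to identify each factor. Using the hypergeometric law $\Pr(Y_i = h) = \binom{k-n_0}{h}\binom{n_0}{m_i-h}/\binom{k}{m_i}$ and summing $(-1)^h$ against it, I would obtain
\begin{equation*}
\mathbb{E}[(-1)^{Y_i}] = \frac{\sum_{h=0}^{m_i}(-1)^{h}\binom{k-n_0}{h}\binom{n_0}{m_i-h}}{\binom{k}{m_i}},
\end{equation*}
which is precisely the bracketed term in Eq.~\eqref{eq:P_even}; substituting it into the factorised expectation finishes the proof.

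I expect the only real obstacle to be conceptual rather than computational: one must justify that the per-group count is hypergeometric (sampling without replacement \emph{inside} a group) while the cross-group counts are independent (sampling with possible repetition \emph{across} groups). Conflating these — for instance treating a whole group's parity as a product of independent single-bit draws — would replace the alternating hypergeometric sum by an incorrect expression, so pinning down the distribution of each $Y_i$ is the step I would take most care over. Once it is fixed, the parity and factorisation steps are immediate.
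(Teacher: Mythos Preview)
Your proposal is correct and mirrors the paper's own proof almost exactly: the paper also decomposes the total one-count as $O=\sum_i o_i$, applies the parity indicator $\tfrac12(1+(-1)^O)$, factorises $\mathbb{E}[(-1)^O]$ by independence of the groups, and then evaluates each $\mathbb{E}[(-1)^{o_i}]$ via the hypergeometric probabilities $\Pr(o_i=h)=\binom{k-n_0}{h}\binom{n_0}{m_i-h}/\binom{k}{m_i}$. Your explicit justification that the per-group draw is hypergeometric while the cross-group draws are independent is the one point the paper leaves implicit, so your write-up is, if anything, slightly more careful on that front.
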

\begin{proof}
    Let $O$ denotes the total number of 1s in $\mathcal{X}$, 
    which can be computed via $O = o_1 + o_2 +\cdots+ o_{\lceil \frac{n}{k} \rceil}$, 
    where $o_i~(1\leq i \leq {\lceil \frac{n}{k} \rceil})$ represents the number of 1s in $R[\mathcal{K},m_i]$. 

    Using the indicator function $\frac{1+(-1)^O}{2}$ to characterize the parity of $O$, 
    the probability that $O$ is even can be expressed as the expectation of $\frac{1+(-1)^O}{2}$, 
    as given by: 
        \begin{equation*}
            {\Pr}_{even} = E[\frac{1+(-1)^O}{2}] = \frac{1+E[(-1)^O]}{2} = \frac{1+ \prod_{i=1}^{{\lceil \frac{n}{k} \rceil}}E[(-1)^{o_i}]}{2}, 
        \end{equation*}        
    where $E[(-1)^{o_i}]$ is the weighted sum of all possible values of $(-1)^{o_i}$ multiplied by their respective probabilities, 
    which can be calculated by: 

        \begin{equation*}
            E[(-1)^{o_i}] = \sum_{h=0}^{m_i} (-1)^h \Pr (o_i = h) = \sum_{h=0}^{m_i} (-1)^h \frac{\binom{k-n_o}{h}\binom{n_o}{m_i - h}}{\binom{k}{m_i}}. 
        \end{equation*}        
    
    Hence, ${\Pr}_{even}$ can be represented as Eq.~\eqref{eq:P_even}. 
\end{proof}

\begin{lemma}
    Assuming that the number of participants reaches $n$, 
    the light transmission of the XOR result for $\mathcal{X}$ are: 
        \begin{align}
        &l(b^{[XOR,\vec{m}]}|_{s=0}) = \sum_{\substack{n_0 \equiv k \!\!\!\!\! \pmod 2\\ 0 \leq n_0 \leq k}}\frac{\binom{k}{n_0}}{2^{k-1}}{\Pr}_{even}, \label{eq:xor_0}\\
        &l(b^{[XOR,\vec{m}]}|_{s=1}) = \sum_{\substack{n_0 \equiv k+1 \!\!\!\!\! \pmod 2\\ 0 \leq n_0 \leq k}}\frac{\binom{k}{n_0}}{2^{k-1}}{\Pr}_{even}.\label{eq:xor_1}
        \end{align}
\end{lemma}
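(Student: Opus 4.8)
The plan is to observe that, by Definition~\ref{eq:average light transmission}, the light transmission $l(b^{[XOR,\vec{m}]}|_{s})$ is exactly the probability that the XOR of the bits in $\mathcal{X}$ equals $0$. Since the XOR of a collection of bits is $0$ precisely when the number of $1$s among them is even, this probability coincides with the quantity ${\Pr}_{even}$ derived in the preceding lemma. The only subtlety is that ${\Pr}_{even}$ is written as a function of $n_0$, the number of $0$s in $\mathcal{K}$, which is itself random and governed by the secret bit $s$. Accordingly, the first step is to fix $n_0$ and note that, conditioned on this value, $\Pr\bigl(b^{[XOR,\vec{m}]}=0 \mid n_0\bigr) = {\Pr}_{even}$.

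Next I would invoke the law of total probability over the admissible values of $n_0$:
\[
l(b^{[XOR,\vec{m}]}|_{s}) = \sum_{n_0} \Pr(n_0 \mid s)\,{\Pr}_{even}.
\]
The conditional distribution $\Pr(n_0 \mid s)$ is supplied by Lemma~\ref{lemma:0s}. Recall that Algorithm~\ref{alg:k_k} generates each of the $2^{k-1}$ admissible sign patterns of $\mathcal{K}$ with equal probability, so $\Pr(n_0 = i \mid s) = z_s^i / 2^{k-1}$, writing $z_s^i$ for $z_0^i$ or $z_1^i$ as appropriate. By Lemma~\ref{lemma:0s}, for $s = 0$ this equals $\binom{k}{i}/2^{k-1}$ and is supported on indices with $i \equiv k \pmod 2$, while for $s = 1$ it equals $\binom{k}{i}/2^{k-1}$ and is supported on $i \equiv k+1 \pmod 2$; the normalization is guaranteed by the elementary identity $\sum_{i \equiv k \,(\mathrm{mod}\,2)} \binom{k}{i} = 2^{k-1}$. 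Substituting these into the total-probability sum reproduces Eq.~\eqref{eq:xor_0} and Eq.~\eqref{eq:xor_1} verbatim.

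The main conceptual hurdle is the clean translation of the analog notion of light transmission into the combinatorial parity probability ${\Pr}_{even}$, and then keeping the conditioning straight: one must treat ${\Pr}_{even}$ strictly as a conditional quantity for a fixed number of zeros $n_0$ in $\mathcal{K}$, and ensure that the outer summation inherits exactly the parity-restricted support dictated by Lemma~\ref{lemma:0s}. Once this two-stage decomposition---first conditioning on the XOR parity, then averaging over $n_0$---is in place, the remainder is a direct substitution requiring no further estimation.
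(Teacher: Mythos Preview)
Your proposal is correct and follows essentially the same approach as the paper: both arguments identify the light transmission with the probability that the XOR of $\mathcal{X}$ is $0$, equate this with the parity probability ${\Pr}_{even}$ conditioned on $n_0$, and then average over $n_0$ using the distribution from Lemma~\ref{lemma:0s} together with the fact that $\mathcal{K}$ is uniform over $2^{k-1}$ patterns. Your write-up is in fact slightly more explicit about the law-of-total-probability structure than the paper's proof, but the ingredients and logic are identical.
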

\begin{proof}
    As can be seen from Algorithm \ref{alg:k_k}, $s = b_1\oplus b_2 \oplus \cdots  \oplus b_k$, 
    thus $\mathcal{K}$ has a total of $2^{k-1}$ distinct possible values. 
    Due to Lemma \ref{lemma:0s}, 
    the number of occurrences of $n_0$ $0$s in $\mathcal{K}$ is $\binom{k}{n_0}$, 
    where $n_0 \equiv k(\!\!\!\mod 2)$ and $n_0 \equiv k+1(\!\!\!\mod 2)$ correspond to $s=0$ and $s=1$, respectively. 
    Thus, the probability that there are $n_0$ $0$s in $\mathcal{K}$ is $\frac{\binom{k}{n_0}}{2^{k-1}}$. 

    Moreover, $b^{[XOR,k]} = 0$ holds if and only if the $k$ bits selected for recovery contain an even number of $1$s. 
    Given that $\mathcal{K}$ contains $n_0$ $0$s, 
    the probability that $\mathcal{K}$ contains an even number of $1$s is ${\Pr}_{even}$. 
    Hence, the light transmission of the XOR result of $\mathcal{K}$ 
     can be expressed as Eq.~\eqref{eq:xor_0} and Eq.~\eqref{eq:xor_1}. 
\end{proof}

Next, we analyze the contrast of the proposed $(k,\infty)$ RGVCS when using XOR recovery.

\begin{theorem}
    As the number of participants tends to infinity, 
    the contrast of the XOR result for any $k$ shadow images is: 
    \begin{equation*}
    \begin{aligned}
        \alpha_{\infty} = \frac{
            \sum_{\substack{n_0 \equiv k \!\!\!\!\! \pmod 2  \\ 0 \leq n_0 \leq k}} f(n_0) - 
            \sum_{\substack{n_0 \equiv k+1 \!\!\!\!\! \pmod 2\\ 0 \leq n_0 \leq k}} f(n_0)
        }{
            3\cdot 2^{k-1} + \sum_{\substack{n_0 \equiv k+1 \!\!\!\!\! \pmod 2  \\ 0 \leq n_0 \leq k}} f(n_0)
        }, 
    \end{aligned}  
    \end{equation*}
    where $f(n_0) = \binom{k}{n_0}(\frac{2n_0 - k}{k})^k$. 
\end{theorem}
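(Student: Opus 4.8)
The plan is to collapse the problem onto a single dominant integer partition in the limit, exactly mirroring the OR-recovery argument of Theorem \ref{theorem:k_infity_contrast}, and then evaluate the XOR light-transmission formulas for that partition. First I would invoke the weighting framework of Definition \ref{definition:k-grouped k_n RGVCS}: for finite $n$ the XOR contrast decomposes as a weighted average over the equivalence classes of valid partitions $\vec{m}$ of $k$, with occurrence probabilities $w_i$ given by Eq.~\eqref{eq:w_mu}. The identical limiting computation performed in Theorem \ref{theorem:k_infity_contrast} then shows that as $n\to\infty$ every weight vanishes except the one attached to the all-ones partition $\vec{m} = \left\langle 1^k, 0^{\lceil \frac{n}{k}\rceil - k}\right\rangle$, whose weight tends to $1$. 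Hence $\alpha_\infty$ equals the XOR contrast of this dominant partition alone, and the remainder of the argument is a direct evaluation.

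The second step specializes $\Pr_{even}$ from Eq.~\eqref{eq:P_even} to the dominant partition, where every nonzero part equals $m_i = 1$. Substituting $m_i = 1$ collapses the inner alternating sum to $\binom{n_0}{1} - \binom{k-n_0}{1} = 2n_0 - k$ divided by $\binom{k}{1} = k$, so each nonzero factor equals $\frac{2n_0-k}{k}$ while the $m_i = 0$ factors each equal $1$. This yields $\Pr_{even} = \tfrac{1}{2}\bigl(1 + (\frac{2n_0-k}{k})^k\bigr)$. Plugging this into Eq.~\eqref{eq:xor_0} and Eq.~\eqref{eq:xor_1} expresses $l(b^{[XOR,\vec{m}]}|_{s=0})$ and $l(b^{[XOR,\vec{m}]}|_{s=1})$ as sums of $\binom{k}{n_0}\bigl(1 + (\frac{2n_0-k}{k})^k\bigr)/2^{k}$ over the respective parity classes of $n_0$.

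Finally I would assemble the contrast through Definition \ref{definition:contrast}. The key simplification is that the constant ``$1$'' inside each summand contributes $\sum_{n_0 \equiv k}\binom{k}{n_0} = \sum_{n_0 \equiv k+1}\binom{k}{n_0} = 2^{k-1}$, by the standard identity that the even-indexed and odd-indexed binomial coefficients each sum to $2^{k-1}$. These constant contributions therefore cancel in the numerator $l(b^{[XOR,\vec{m}]}|_{s=0}) - l(b^{[XOR,\vec{m}]}|_{s=1})$, leaving only the $f(n_0)$ terms, and in the denominator $1 + l(b^{[XOR,\vec{m}]}|_{s=1})$ they combine with the leading $1$ into $\tfrac{3}{2}$. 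Clearing a common factor of $2^{k}$ then produces the stated expression with $3\cdot 2^{k-1}$ in the denominator and $f(n_0) = \binom{k}{n_0}(\frac{2n_0 - k}{k})^k$.

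The main obstacle is the first step: rigorously transferring the limiting weight argument from the OR setting to the XOR setting. One must confirm that the XOR contrast genuinely decomposes over the same equivalence classes with the same probabilities $w_i$, so that the computation $\lim_{n\to\infty} w_p = 1$ from Theorem \ref{theorem:k_infity_contrast} applies verbatim. Equivalently, and perhaps more cleanly, I would argue directly that the probability of drawing the $k$ recovery shadow images from $k$ distinct groups tends to $1$ as the number of groups grows, whence in the limit the selected bits are $k$ independent uniform draws from $\mathcal{K}$, precisely the $m_i = 1$ configuration. Once this reduction is secured, the remaining algebra is routine.
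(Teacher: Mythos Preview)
Your proposal is correct and follows essentially the same route as the paper: reduce to the all-ones partition via the limiting weight argument of Theorem~\ref{theorem:k_infity_contrast}, specialize $\Pr_{even}$ to $m_i=1$ to obtain $\tfrac{1}{2}\bigl(1+(\tfrac{2n_0-k}{k})^k\bigr)$, and then assemble the contrast from Eqs.~\eqref{eq:xor_0}--\eqref{eq:xor_1}. You are in fact slightly more explicit than the paper on two points---the binomial identity $\sum_{n_0\equiv k}\binom{k}{n_0}=\sum_{n_0\equiv k+1}\binom{k}{n_0}=2^{k-1}$ that makes the constant pieces cancel in the numerator and fold into $3\cdot 2^{k-1}$ in the denominator, and the observation that the weights $w_i$ depend only on the partition combinatorics and not on the recovery operation, which is exactly what licenses reusing Theorem~\ref{theorem:k_infity_contrast} verbatim in the XOR setting.
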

\begin{proof}
    Due to Theorem \ref{theorem:k_infity_contrast}, 
    when $n$ approaches infinity, 
    the contrast converges to the value specified by the valid partition: 
    \begin{equation*}
        \vec{\mu} = \left\langle {1}^{k}, {0}^{\lceil \frac{n}{k} \rceil - 1} \right\rangle.  
   \end{equation*}
    Thus, 
    \begin{equation*}
        {\Pr}_{even} \!=\! \frac{1+\prod_{i=1}^{k}(\frac{\sum_{h=0}^{1}(-1)^h\binom{k-n_0}{h}\binom{n_0}{1-h}}{\binom{k}{1}})}{2}
        \!=\!\frac{1+(\frac{2n_0 - k}{k})^k}{2}, 
    \end{equation*}
    and the corresponding light transmission are: 
        \begin{equation*}
    \begin{aligned}
        &l(b^{[XOR,\vec{\mu}]}|_{s=0}) = \frac{1}{2^k}\sum_{\substack{n_0 \equiv k \!\!\!\!\! \pmod 2\\ 0 \leq n_0 \leq k}}\binom{k}{n_0}[1+(\frac{2n_0 - k}{k})^k], \notag\\
        &l(b^{[XOR,\vec{\mu}]}|_{s=1}) = \frac{1}{2^k}\sum_{\substack{n_0 \equiv k+1 \!\!\!\!\! \pmod 2\\ 0 \leq n_0 \leq k}}\binom{k}{n_0}[1+(\frac{2n_0 - k}{k})^k].\notag
    \end{aligned}    
    \end{equation*}

    Let $f(n_0) = \binom{k}{n_0}(\frac{2n_0 - k}{k})^k$. 
    Consequently, the contrast can be calculated as: 
        \begin{equation*}
        \begin{aligned}
        \alpha_{\infty} &= \frac{l(b^{[XOR,\vec{\mu}]}|_{s=0}) - l(b^{[XOR,\vec{\mu}]}|_{s=1})}{1 + l(b^{[XOR,\vec{\mu}]}|_{s=1})} \notag\\\notag\\
        &= \frac{\frac{1}{2^k}(\sum_{\substack{n_0 \equiv k \!\!\!\!\! \pmod 2  \\ 0 \leq n_0 \leq k}} f(n_0) - 
            \sum_{\substack{n_0 \equiv k+1 \!\!\!\!\! \pmod 2\\ 0 \leq n_0 \leq k}} f(n_0))}
            {1+\frac{1}{2^k}(2^{k-1} + \sum_{\substack{n_0 \equiv k+1 \!\!\!\!\! \pmod 2\\ 0 \leq n_0 \leq k}} f(n_0))}\notag\\\notag\\
        &=  \frac{
            \sum_{\substack{n_0 \equiv k \!\!\!\!\! \pmod 2  \\ 0 \leq n_0 \leq k}} f(n_0) - 
            \sum_{\substack{n_0 \equiv k+1 \!\!\!\!\! \pmod 2\\ 0 \leq n_0 \leq k}} f(n_0)
        }{
            3\cdot 2^{k-1} + \sum_{\substack{n_0 \equiv k+1 \!\!\!\!\! \pmod 2  \\ 0 \leq n_0 \leq k}} f(n_0)
        }.\notag
        \end{aligned}
    \end{equation*}
    
\end{proof}

Table \ref{tab:xor_k_infity} presents the theoretical contrast values of $(k,\infty)$ RGVCS with XOR-based recovery,  
where $2\leq k \leq 6$. 

\begin{table}[t]
    \centering
    \caption{The theoretical contrast for $(k,\infty)$ RGVCS with XOR-based recovery}
    \setlength{\tabcolsep}{4.5pt}
    \resizebox{0.6\columnwidth}{!}{ 
    \begin{tabular}{ccccccccccc}
    \toprule
                $k$&  & 2   &  & 3    &  & 4    &  & 5      &  & 6       \\ 
    \midrule            
                 &  & 1/3  &  & 4/25 &  & 3/49 &  & 16/617 &  & 15/1474 \\ 
                 &  & $\approx~$0.3333  &  & = 0.16 &  & $\approx~$0.0612 &  & $\approx~$0.0259 &  & $\approx~$0.0102 \\
    \bottomrule
    \end{tabular}
    }
    \label{tab:xor_k_infity}
\end{table}

\subsection{Stacking with multiple shadow images}
In this subsection, we enhance the contrast by stacking multiple shadow images. 
The following theorem presents the theoretical contrast calculation formula 
for stacking $t$ shadow images in $(k,\infty)$ RGVCS.

\begin{table}[t]
    \centering
    \caption{The theoretical contrast for $(k,\infty)$ RGVCS when stacking $t~(t>k)$ shadow images\protect\footnotemark}
    \setlength{\tabcolsep}{4.5pt}
    \resizebox{0.6\columnwidth}{!}{
    \begin{tabular}{ccccccccc}
    \toprule
    \multirow{2}{*}{k}   & \multicolumn{8}{c}{t}                                                                                                                                                                 \\ \cline{2-9} 
                        & 4                    & 5                    & 6                    & 7                    & 8                    & 9                    & 10                   & 11                   \\ \midrule
    \multicolumn{1}{l}{} & \multicolumn{1}{l}{} & \multicolumn{1}{l}{} & \multicolumn{1}{l}{} & \multicolumn{1}{l}{} & \multicolumn{1}{l}{} & \multicolumn{1}{l}{} & \multicolumn{1}{l}{} & \multicolumn{1}{l}{} \\
    4                    & 0.0101               & 0.0262               & 0.0437               & 0.0601               & 0.0742               & 0.0857               & 0.0949               & 0.1021               \\
    \multicolumn{1}{l}{} & \multicolumn{1}{l}{} & \multicolumn{1}{l}{} & \multicolumn{1}{l}{} & \multicolumn{1}{l}{} & \multicolumn{1}{l}{} & \multicolumn{1}{l}{} & \multicolumn{1}{l}{} & \multicolumn{1}{l}{} \\
    5                    &                      & 0.0022               & 0.0066               & 0.0126               & 0.0191               & 0.0256               & 0.0316               & 0.0369               \\ \bottomrule
    \end{tabular}
    }
    \label{tab:stacking_with_t_shadow_images}
\end{table}
\footnotetext{For clearer contrast comparison, 
the theoretical values in the table are displayed with four decimal places.}

\begin{theorem}
    The contrast of the recovered image obtained by 
    stacking any $t~(t>k)$ shadow images in $(k,\infty)$ RGVCS,
    denoted as $\alpha_{(k,\infty,t)}$, is given as: 

    \begin{equation}
        \label{eq:alpha_k_infinite_t}
        \alpha_{(k,\infty,t)} = \frac{\frac{\Pr(\#B(\vec{\mu})=k)}{2^{k-1}}}{1+\sum_{j=1}^{k-1}\frac{\Pr(\#B(\vec{\mu})=j)}{2^j}}, 
    \end{equation}
    where 
    $\vec{\mu} = \left\langle 1^t \right\rangle$, and 
    \begin{equation}
        {\Pr}(\#B(\vec{\mu}) = d)=\frac{\binom{k-1}{d - 1} |\mathcal{A}^{\vec{\mu}}_d(F)|}{k^{t-1}},
    \end{equation}
    for $1\leq d \leq k$.
\end{theorem}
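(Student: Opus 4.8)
The plan is to reduce the whole computation to the single valid partition $\vec{\mu}=\langle 1^t\rangle$ and then read off the contrast from the light-transmission lemma. Since $(k,\infty)$ RGVCS constitutes a $k$-grouped $(k,t)$ RGVCS at every finite stage (the equivalence already used in Theorem \ref{theorem:k_infity_contrast}), stacking $t$ shadow images decomposes over valid partitions of $t$ exactly as in Definition \ref{definition:k-grouped k_n RGVCS}. Repeating the limiting argument of Theorem \ref{theorem:k_infity_contrast}, as $n\to\infty$ the occurrence probability of the equivalence class in which all $t$ selected shadow images lie in distinct groups tends to $1$, while every other class has weight tending to $0$. Hence $\alpha_{(k,\infty,t)}$ equals the contrast $\alpha_i$ attached to $\vec{\mu}=\langle 1^t\rangle$; in this configuration each contributing group supplies a single index, and because the within-group selection is without replacement (each marginal draw uniform) and groups use independent randomness, $B(\vec{\mu})$ is a multiset of $t$ i.i.d.\ uniform indices from $[k]$.

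Next I would condition on $\#B(\vec{\mu})=d$, the number of distinct indices appearing, which necessarily satisfies $1\le d\le k$. Stacking the corresponding shadow images is an OR over exactly $d$ distinct bits of $\mathcal{K}$ (repeated copies do not change the OR), so I invoke the single-point light transmission lemma: for $d<k$ both $l(\cdot|_{s=0})$ and $l(\cdot|_{s=1})$ equal $1/2^{d}$, whereas for $d=k$ they equal $1/2^{k-1}$ and $0$ respectively. Averaging over $d$ with weights $\Pr(\#B(\vec{\mu})=d)$ and substituting into Eq.~\eqref{eq:contrast}, the two transmissions coincide on every term with $d<k$ and differ only at $d=k$; the numerator therefore collapses to $\Pr(\#B(\vec{\mu})=k)/2^{k-1}$, while the denominator becomes $1+\sum_{j=1}^{k-1}\Pr(\#B(\vec{\mu})=j)/2^{j}$, yielding Eq.~\eqref{eq:alpha_k_infinite_t}.

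It then remains to evaluate $\Pr(\#B(\vec{\mu})=d)$ for $\vec{\mu}=\langle 1^t\rangle$, which I would obtain by specializing the general probability formula accompanying Eq.~\eqref{eq:alpha_i}. With $\mu_{\lceil n/k\rceil}=1$ the factor $\binom{k-\mu_{\lceil n/k\rceil}}{d-\mu_{\lceil n/k\rceil}}$ becomes $\binom{k-1}{d-1}$, and since the $t-1$ remaining contributing groups each have $\mu_j=1$ while all empty groups contribute $\binom{k}{0}=1$, the product $\prod_j\binom{k}{\mu_j}$ reduces to $k^{t-1}$, giving $\Pr(\#B(\vec{\mu})=d)=\binom{k-1}{d-1}|\mathcal{A}^{\vec{\mu}}_d(F)|/k^{t-1}$. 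As an independent check one can match this against the elementary count $\binom{k}{d}\,d!\,S(t,d)/k^{t}$ of length-$t$ strings over $[k]$ using exactly $d$ symbols (with $S(t,d)$ the Stirling number of the second kind), which forces $|\mathcal{A}^{\vec{\mu}}_d(F)|=(d-1)!\,S(t,d)$ — exactly the number of surjections $[t]\to[d]$ sending the fixed last row $F$ to a prescribed column, consistent with the definition of $\mathcal{A}^{\vec{\mu}}_d(F)$ in Definition \ref{definition:matrix}.

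The step I expect to be most delicate is this probability computation, specifically justifying the normalization $k^{t-1}$ and the factor $\binom{k-1}{d-1}$ rather than the naive $\binom{k}{d}$ over $k^{t}$. The resolution is to track the role of the fixed last row $F$: designating one shadow image's index as a reference removes one factor of $k$ from the sample space and simultaneously occupies one of the $d$ columns, leaving only $d-1$ of the remaining $k-1$ base indices free to be chosen, which is precisely what the Stirling-number cross-check confirms. The remaining obligations—monotonicity of the access structure and strict positivity of the resulting contrast—are inherited from the already-established fact that the scheme is a $(k,\infty)$ VCS, so no additional argument is required there.
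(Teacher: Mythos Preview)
Your argument is correct and tracks the paper's proof closely: both invoke the limiting step of Theorem~\ref{theorem:k_infity_contrast} to collapse onto the all-ones equivalence class, then specialize Eq.~\eqref{eq:alpha_i}. The only substantive divergence is in justifying the stated expression for $\Pr(\#B(\vec{\mu})=d)$. The paper begins with the ambient representative $\vec{\mu}_0=\langle 1^t,0^{\lceil n/k\rceil-t}\rangle$ having last entry $0$ (so the raw formula reads $\binom{k}{d}\,|\mathcal{A}_d^{\vec{\mu}_0}(F_0)|/k^t$) and converts it to the length-$t$ form by proving the identity $|\mathcal{A}_d^{\vec{\mu}_0}(F_0)|=d\cdot|\mathcal{A}_d^{\vec{\mu}_1}(F_1)|$, arguing that the unpinned $t$-th row may land in any of the $d$ columns. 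You instead select a representative with $\mu_{\text{last}}=1$ at the outset---legitimate because $\Pr(\#B(\vec{\mu})=d)$ depends only on the multiset of parts---and substitute directly, then confirm the answer against the Stirling count $\binom{k}{d}\,d!\,S(t,d)/k^t$. Your route is marginally shorter and yields the closed form $|\mathcal{A}^{\vec{\mu}}_d(F)|=(d-1)!\,S(t,d)$ as a by-product; the paper's route makes explicit why the last-row convention in Definition~\ref{definition:matrix} does not affect the outcome.
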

\begin{proof}
    Due to Theorem \ref{theorem:k_infity_contrast}, in $(k,\infty)$ RGVCS, 
    the contrast of the recovered image obtained by stacking any $k$ shadow images equals 
    the contrast corresponding to valid partition 
    $\left\langle 1^k, 0^{\lceil \frac{n}{k} \rceil - k} \right\rangle$. 
    Thus, when stacking $t$ shadow images, 
    we can obtain that the contrast is equal to the contrast corresponding to 
    $\left\langle 1^t, 0^{\lceil \frac{n}{k} \rceil - t} \right\rangle$. 

    Then, we provide the proof that the contrast corresponding to 
    $\vec{\mu}_0 = \left\langle 1^t, 0^{\lceil \frac{n}{k} \rceil - t} \right\rangle$, denoted as $\alpha_{\vec{\mu}_0}$, 
    and the contrast corresponding to 
    $\vec{\mu}_1 = \left\langle 1^t\right\rangle$, denoted as $\alpha_{\vec{\mu}_1}$, are equal. 

    We obtain: 
    \begin{equation}
        \begin{aligned}\label{eq:u0_u1}
            \Pr(\#B(\vec{\mu}_0)=d) &= \frac{\binom{k}{d}|\mathcal{A}_d^{\vec{\mu}_0}(F_0)|}{k^t}\\
            &=\frac{\binom{k-1}{d-1}\frac{k}{d}|\mathcal{A}_d^{\vec{\mu}_0}(F_0)|}{k^{t}}\\
            &\overset{(a)}{=}\frac{\binom{k-1}{d-1}\frac{k}{d}|\mathcal{A}_d^{\vec{\mu}_1}(F_1)|d}{k^{t}}\\
            &=\frac{\binom{k-1}{d-1}|\mathcal{A}_d^{\vec{\mu}_1}(F_1)|}{k^{t-1}}\\
            &=\Pr(\#B(\vec{\mu}_1)=d), 
        \end{aligned}
    \end{equation}
    for $1\leq d \leq k$, 
    where $F_0$ and $F_1$ denote an all-zero vector with size $1\times d$ 
    and a $\{0,1\}^{1\times d}$ vector containing one $1$, respectively. 

    $(a)$ is derived as follows: 
    The last $\lceil \frac{n}{k} \rceil - t$ rows of matrices in $\mathcal{A}_d^{\vec{\mu}_0}(F_0)$ are consist of $0$s, 
    we only need to consider the first $t$ rows. 
    Compared with the matrices in $\mathcal{A}_d^{\vec{\mu}_1}(F_1)$, 
    the difference lies in that the $t$-th row of matrices in $\mathcal{A}_d^{\vec{\mu}_1}(F_1)$ is fixed as $F_1$, 
    while in $\mathcal{A}_d^{\vec{\mu}_0}(F_0)$, 
    the $t$-th row only needs to satisfy condition $1)$ in Definition \ref{definition:matrix}, 
    namely, ensuring there exists exactly one $1$ in this row. 
    Consequently, this row has $d$ possible configurations. 
    Thus, we can derive: $|\mathcal{A}_d^{\vec{\mu}_0}(F_0)| = |\mathcal{A}_d^{\vec{\mu}_1}(F_1)| \times d$. 

    Due to Eq.~\eqref{eq:alpha_i} and Eq.~\eqref{eq:u0_u1}, we obtain $\alpha_{\vec{\mu}_0} = \alpha_{\vec{\mu}_1}$. 
    Thus, $\alpha_{(k,\infty,t)} = \alpha_{\vec{\mu}_0} = \alpha_{\vec{\mu}_1}$. 
\end{proof}

Table \ref{tab:stacking_with_t_shadow_images} illustrates the theoretical contrast for 
$(4,\infty)$ RGVCS and $(5,\infty)$ RGVCS when stacking multiple shadow images.

\section{experiments and comparisons}\label{section:experiment and comparison}
The superiority of the proposed schemes in this paper are validated through both 
experimental results and comparative analysis with existing schemes.

\begin{table}[t]
\caption{Theoretical contrast corresponding to different valid partitions in the proposed schemes}
\label{tab:theoretical_contrast}
\begin{adjustbox}{center}
\begin{tabular}{cccccccc}
\toprule[1.2pt]
\multicolumn{1}{l}{}    & \multicolumn{1}{l}{} & \multicolumn{1}{l}{} & \multicolumn{1}{l}{} & \multicolumn{1}{l}{} & \multicolumn{1}{l}{} & \multicolumn{1}{l}{} & \multicolumn{1}{l}{} \\
\scalebox{1.3}{Schemes}                 & \multicolumn{7}{c}{\scalebox{1.3}{Theoretical contrast corresponding to valid partitions\footnotemark}}                                                                                                 \\
\multicolumn{1}{l}{}    & \multicolumn{1}{l}{} & \multicolumn{1}{l}{} & \multicolumn{1}{l}{} & \multicolumn{1}{l}{} & \multicolumn{1}{l}{} & \multicolumn{1}{l}{} & \multicolumn{1}{l}{} \\ \toprule[1.2pt]  
                        & [2]                    & [1,1]                  &                      &                      &                      &                      &                      \\ \cline{2-8} 
\multicolumn{1}{l}{}    & \multicolumn{1}{l}{} & \multicolumn{1}{l}{} & \multicolumn{1}{l}{} & \multicolumn{1}{l}{} & \multicolumn{1}{l}{} & \multicolumn{1}{l}{} & \multicolumn{1}{l}{} \\
$(2,\infty)$ RGVCS      & 1/2                  & 1/5                  &                      &                      &                      &                      &                      \\
Better $(2,\infty)$ VCS & $(\sqrt{2}-1)/2$     & $\sqrt{2}-1$         &                      &                      &                      &                      &                      \\
                        &                      &                      &                      &                      &                      &                      &                      \\ \hline
                        & [3]                    & [2,1]                  & [1,1,1]                &                      &                      &                      &                      \\ \cline{2-8} 
\multicolumn{1}{l}{}    & \multicolumn{1}{l}{} & \multicolumn{1}{l}{} & \multicolumn{1}{l}{} & \multicolumn{1}{l}{} & \multicolumn{1}{l}{} & \multicolumn{1}{l}{} & \multicolumn{1}{l}{} \\
$(3,\infty)$ RGVCS      & 1/4                  & 1/14                 & 1/22                 &                      &                      &                      &                      \\
Better $(3,\infty)$ VCS & 1/7                  & 1/15                 & 2/41                 &                      &                      &                      &                      \\
                        &                      &                      &                      &                      &                      &                      &                      \\ \hline
                        & [4]                    & [3,1]                  & [2,2]                  & [2,1,1]                & [1,1,1,1]              &                      &                      \\ \cline{2-8} 
\multicolumn{1}{l}{}    & \multicolumn{1}{l}{} & \multicolumn{1}{l}{} & \multicolumn{1}{l}{} & \multicolumn{1}{l}{} & \multicolumn{1}{l}{} & \multicolumn{1}{l}{} & \multicolumn{1}{l}{} \\
$(4,\infty)$ RGVCS(OR)  & 1/8                  & 1/35                 & 1/54                 & 1/73                 & 1/99                 &                      &                      \\
$(4,\infty)$ RGVCS(XOR) & 1                    & 2/11                 & 1/9                  & 1/12                 & 3/49                 &                      &                      \\
                        &                      &                      &                      &                      &                      &                      &                      \\ \hline
                        & [5]                    & [4,1]                  & [3,2]                  & [3,1,1]                & [2,2,1]                & [2,1,1,1]              & [1,1,1,1,1]            \\ \cline{2-8} 
\multicolumn{1}{l}{}    & \multicolumn{1}{l}{} & \multicolumn{1}{l}{} & \multicolumn{1}{l}{} & \multicolumn{1}{l}{} & \multicolumn{1}{l}{} & \multicolumn{1}{l}{} & \multicolumn{1}{l}{} \\
$(5,\infty)$ RGVCS(OR)  & 1/16                 & 1/84                 & 1/172                & 1/216                & 3/874                & 3/1100               & 1/462                \\
$(5,\infty)$ RGVCS(XOR) & 1                    & 1/7                  & 2/29                 & 4/73                 & 2/49                 & 4/123                & 16/617               \\ \bottomrule[1.2pt]  
\end{tabular}
\end{adjustbox}
\end{table}
\footnotetext{When $n$ tends to infinity, the zero terms in valid partitions are omitted, 
with the same omission applied in Table \ref{tab:experimental_contrast}. }

\begin{table}[t]
\caption{Experimental contrast corresponding to different valid partitions in the proposed schemes}
\label{tab:experimental_contrast}
\begin{adjustbox}{center}
\begin{tabular}{cccccccc}
\toprule[1.2pt]
\multicolumn{1}{l}{}    & \multicolumn{1}{l}{} & \multicolumn{1}{l}{} & \multicolumn{1}{l}{} & \multicolumn{1}{l}{} & \multicolumn{1}{l}{} & \multicolumn{1}{l}{} & \multicolumn{1}{l}{} \\
\scalebox{1.3}{Schemes}                 & \multicolumn{7}{c}{\scalebox{1.3}{Experimental contrast corresponding to valid partitions}}                                                                                                 \\
\multicolumn{1}{l}{}    & \multicolumn{1}{l}{} & \multicolumn{1}{l}{} & \multicolumn{1}{l}{} & \multicolumn{1}{l}{} & \multicolumn{1}{l}{} & \multicolumn{1}{l}{} & \multicolumn{1}{l}{} \\ \toprule[1.2pt]  
                        & [2]                    & [1,1]                  &                      &                      &                      &                      &                      \\ \cline{2-8} 
\multicolumn{1}{l}{}    & \multicolumn{1}{l}{} & \multicolumn{1}{l}{} & \multicolumn{1}{l}{} & \multicolumn{1}{l}{} & \multicolumn{1}{l}{} & \multicolumn{1}{l}{} & \multicolumn{1}{l}{} \\
$(2,\infty)$ RGVCS      & 0.5002                  & 0.2010                  &                      &                      &                      &                      &                      \\
Better $(2,\infty)$ VCS & 0.4138     & 0.2070         &                      &                      &                      &                      &                      \\
                        &                      &                      &                      &                      &                      &                      &                      \\ \hline
                        & [3]                    & [2,1]                  & [1,1,1]                &                      &                      &                      &                      \\ \cline{2-8} 
\multicolumn{1}{l}{}    & \multicolumn{1}{l}{} & \multicolumn{1}{l}{} & \multicolumn{1}{l}{} & \multicolumn{1}{l}{} & \multicolumn{1}{l}{} & \multicolumn{1}{l}{} & \multicolumn{1}{l}{} \\
$(3,\infty)$ RGVCS      & 0.2508                  & 0.0715                 & 0.0454                 &                      &                      &                      &                      \\
Better $(3,\infty)$ VCS & 0.1433                  & 0.0664                 & 0.0488                 &                      &                      &                      &                      \\
                        &                      &                      &                      &                      &                      &                      &                      \\ \hline
                        & [4]                    & [3,1]                  & [2,2]                  & [2,1,1]                & [1,1,1,1]              &                      &                      \\ \cline{2-8} 
\multicolumn{1}{l}{}    & \multicolumn{1}{l}{} & \multicolumn{1}{l}{} & \multicolumn{1}{l}{} & \multicolumn{1}{l}{} & \multicolumn{1}{l}{} & \multicolumn{1}{l}{} & \multicolumn{1}{l}{} \\
$(4,\infty)$ RGVCS(OR)  & 0.1259                  & 0.0286                 & 0.0185                 & 0.0136                 & 0.0101                 &                      &                      \\
$(4,\infty)$ RGVCS(XOR) & 1                       & 0.1816                 & 0.1109                  & 0.0829                 & 0.0611                 &                      &                      \\
                        &                      &                      &                      &                      &                      &                      &                      \\ \hline
                        & [5]                    & [4,1]                  & [3,2]                  & [3,1,1]                & [2,2,1]                & [2,1,1,1]              & [1,1,1,1,1]            \\ \cline{2-8} 
\multicolumn{1}{l}{}    & \multicolumn{1}{l}{} & \multicolumn{1}{l}{} & \multicolumn{1}{l}{} & \multicolumn{1}{l}{} & \multicolumn{1}{l}{} & \multicolumn{1}{l}{} & \multicolumn{1}{l}{} \\
$(5,\infty)$ RGVCS(OR)  & 0.0625                 & 0.0118                 & 0.0057                & 0.0040                & 0.0039                & 0.0026               & 0.0021                \\
$(5,\infty)$ RGVCS(XOR) & 1                    & 0.1431                  & 0.0680                 & 0.0548                 & 0.0409                 & 0.0323                & 0.0263               \\ \bottomrule[1.2pt]  
\end{tabular}

\end{adjustbox}
\end{table}

\subsection{Experiment Results}
We present the contrast values and the recovered images of the proposed schemes in this subsection. 
The theoretical and experimental contrast values corresponding to different valid partitions in each scheme are listed 
in Table \ref{tab:theoretical_contrast} and Table \ref{tab:experimental_contrast}. 
Additionally, 
the recovered images corresponding to some experimental values in Table \ref{tab:experimental_contrast} 
are shown in Figure \ref{fig:k_2_3}-\ref{fig:xor_schemes_5}.

Figure \ref{fig:secret_image} demonstrates the secret image and 
Figure \ref{fig:k_2_3} illustrates the recovered images for $k = 2$ and $k = 3$. 
When $n$ approaches infinity, the recovered images in (b), (g) 
represent the final recovery effects for $(2,\infty)$ RGVCS and better $(2,\infty)$ VCS, respectively. 
The recovered images in (e), (j) 
represent the final recovery effects for $(3,\infty)$ RGVCS and better $(3,\infty)$ VCS, respectively. 
Both theoretical and experimental contrast values demonstrate the superiority of 
better $(2,\infty)$ and $(3,\infty)$ VCS over $(2,\infty)$ and $(3,\infty)$ RGVCS. 
However, the slight improvement of the two better schemes leads to visually comparable reconstruction quality.

\begin{figure}[t]
  \centering
  \captionsetup[subfloat]{font=small, labelfont=rm} 
\includegraphics[width=0.09\textwidth]{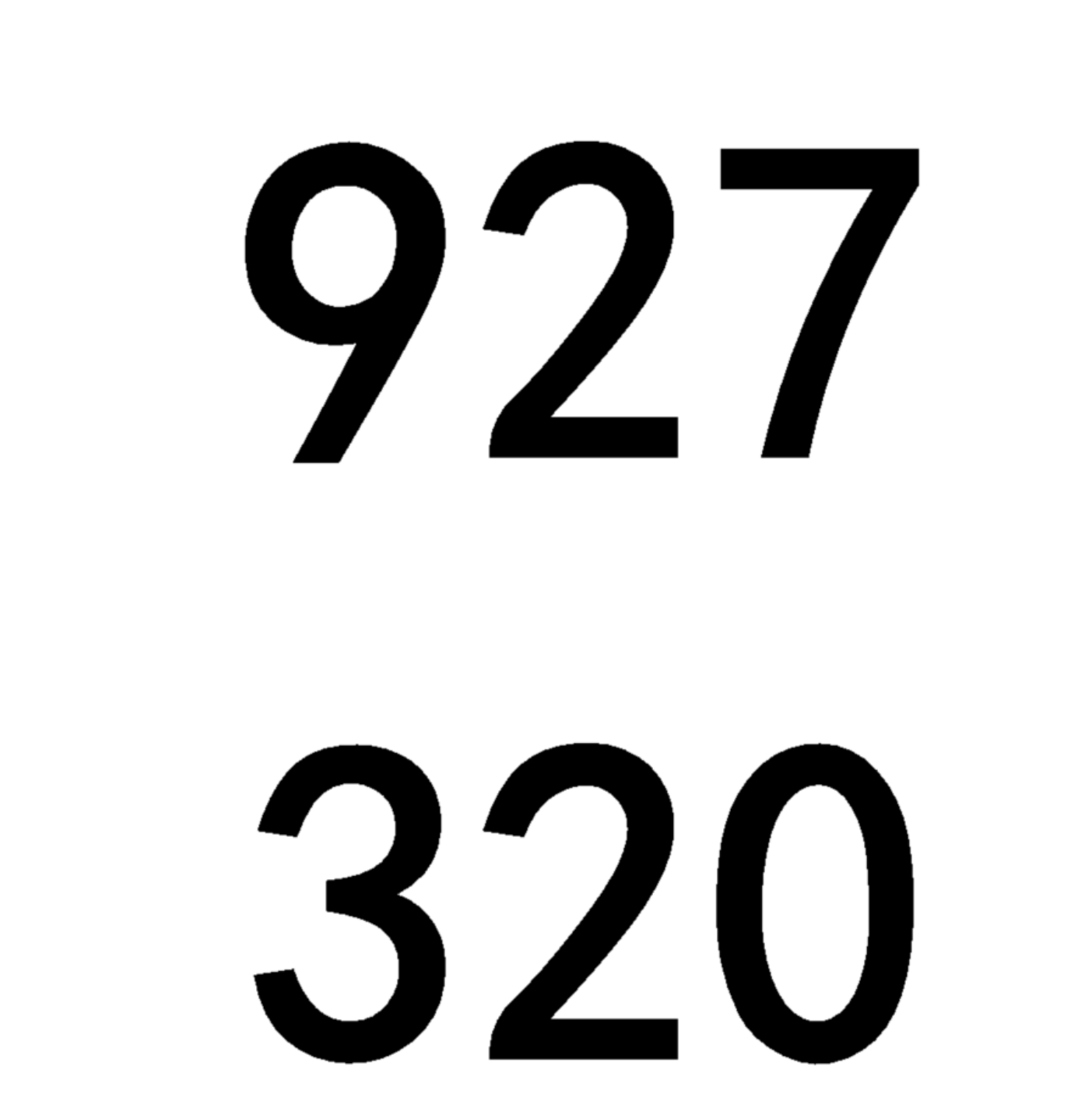}
  \caption{\small{The secret image }}
  \label{fig:secret_image}
\end{figure}

\begin{figure}[t]
  \centering
  \captionsetup[subfloat]{font=small, labelfont=rm} 
  \subfloat[]{
    \includegraphics[width=0.09\textwidth]{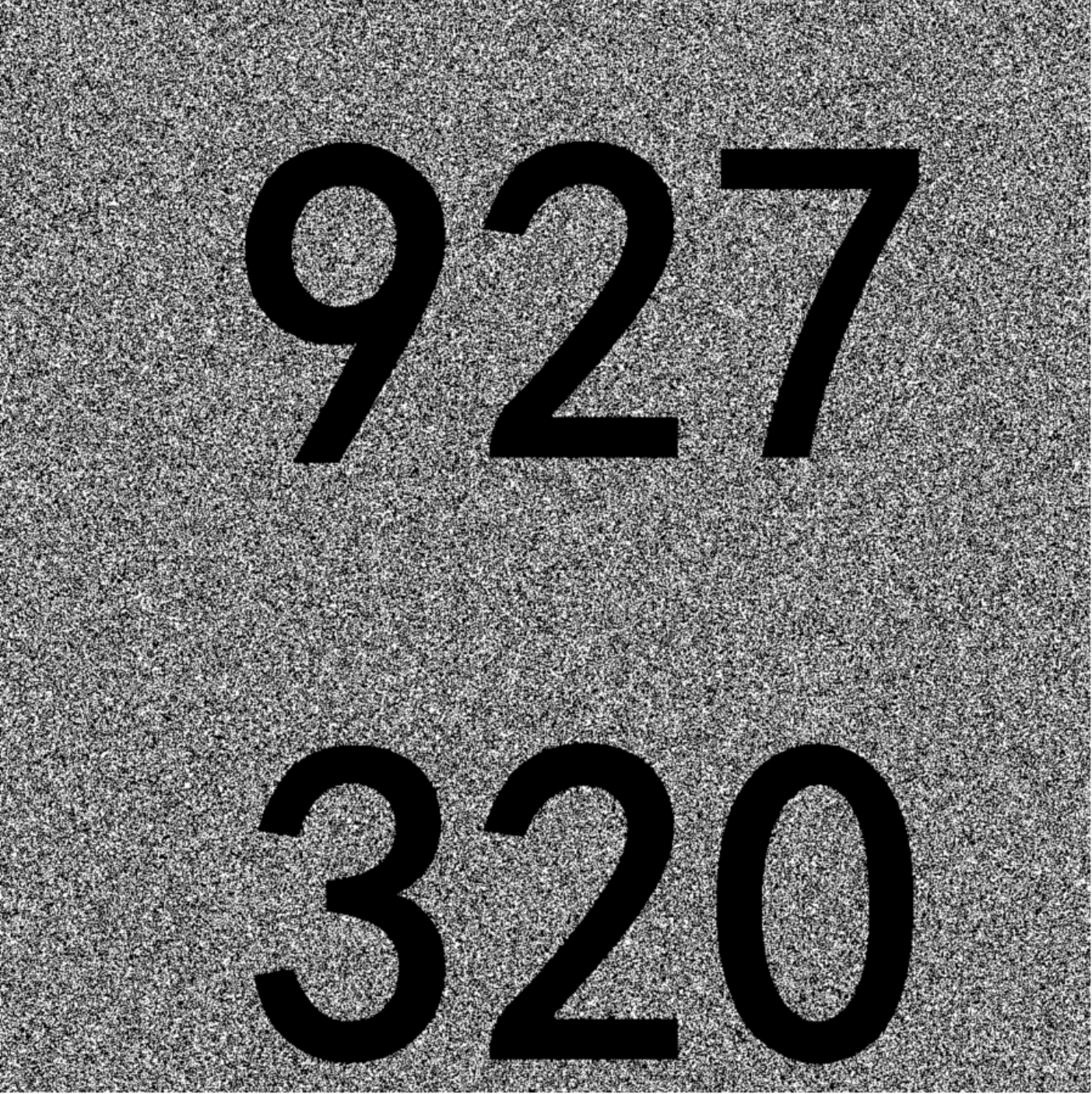}}\hspace{-0.45em}
  \subfloat[]{
      \includegraphics[width=0.09\textwidth]{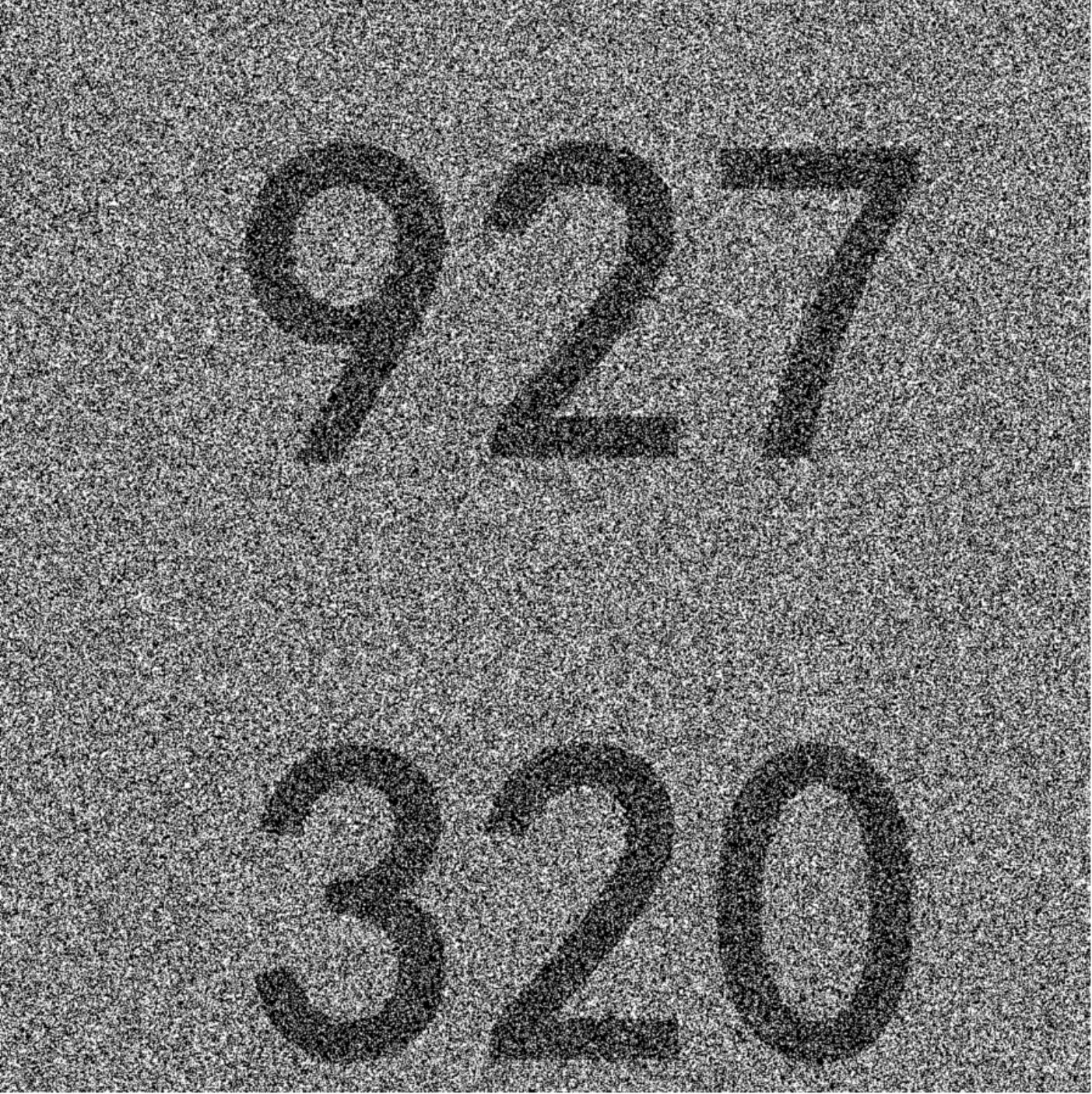}}\hspace{-0.45em}
  \subfloat[]{
      \includegraphics[width=0.09\textwidth]{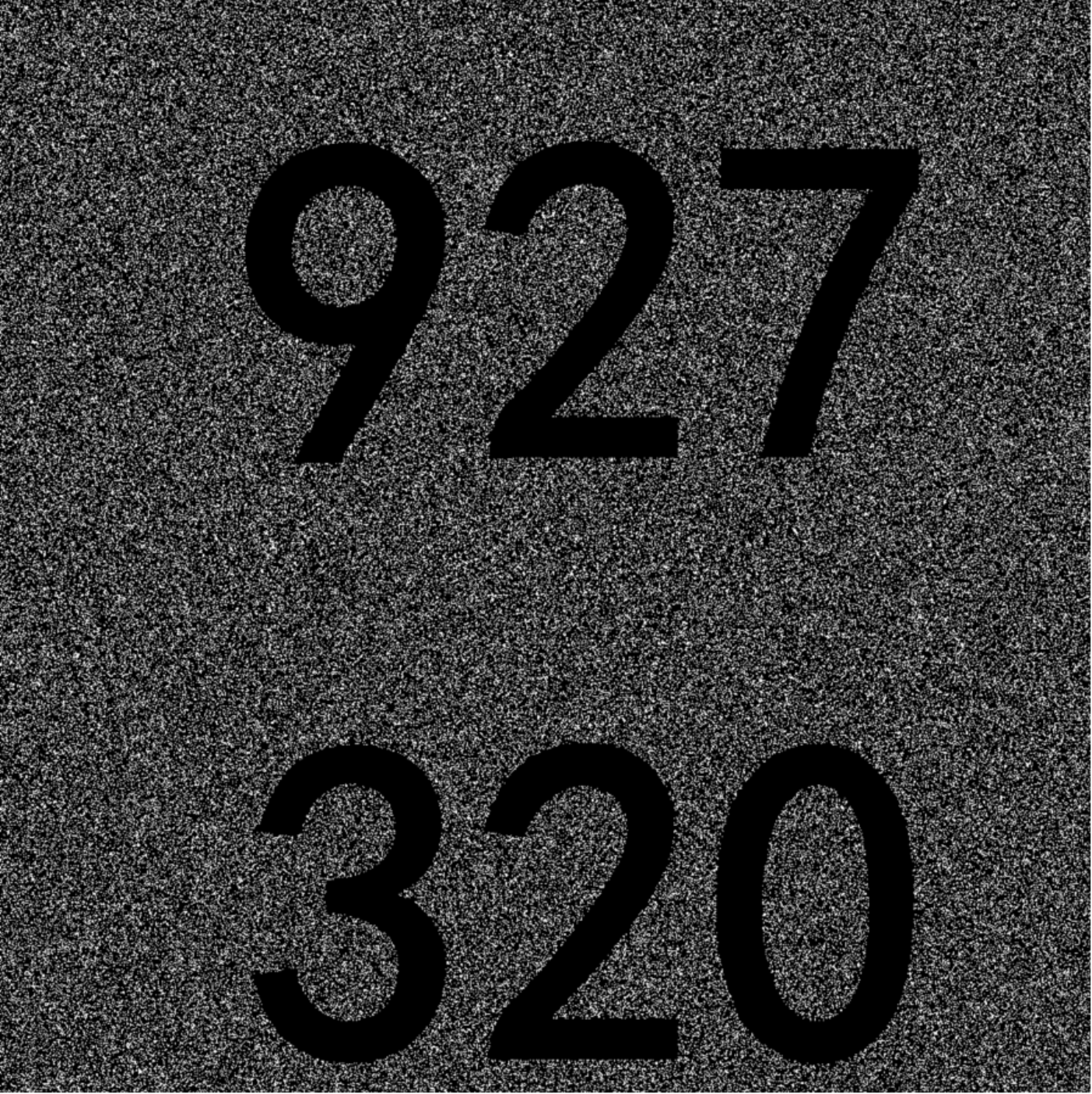}}\hspace{-0.45em}
  \subfloat[]{
      \includegraphics[width=0.09\textwidth]{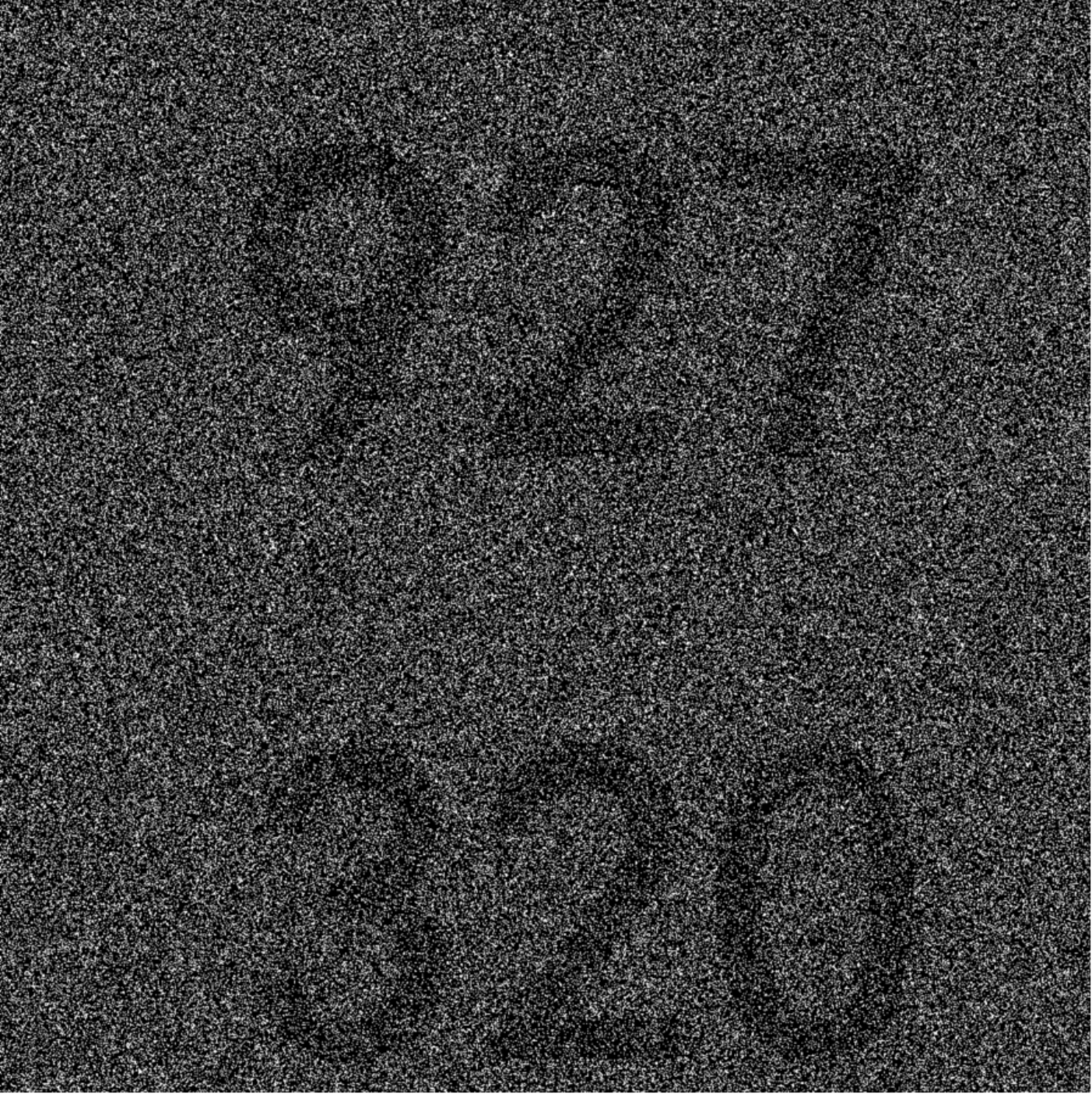}}\hspace{-0.45em}
  \subfloat[]{
      \includegraphics[width=0.09\textwidth]{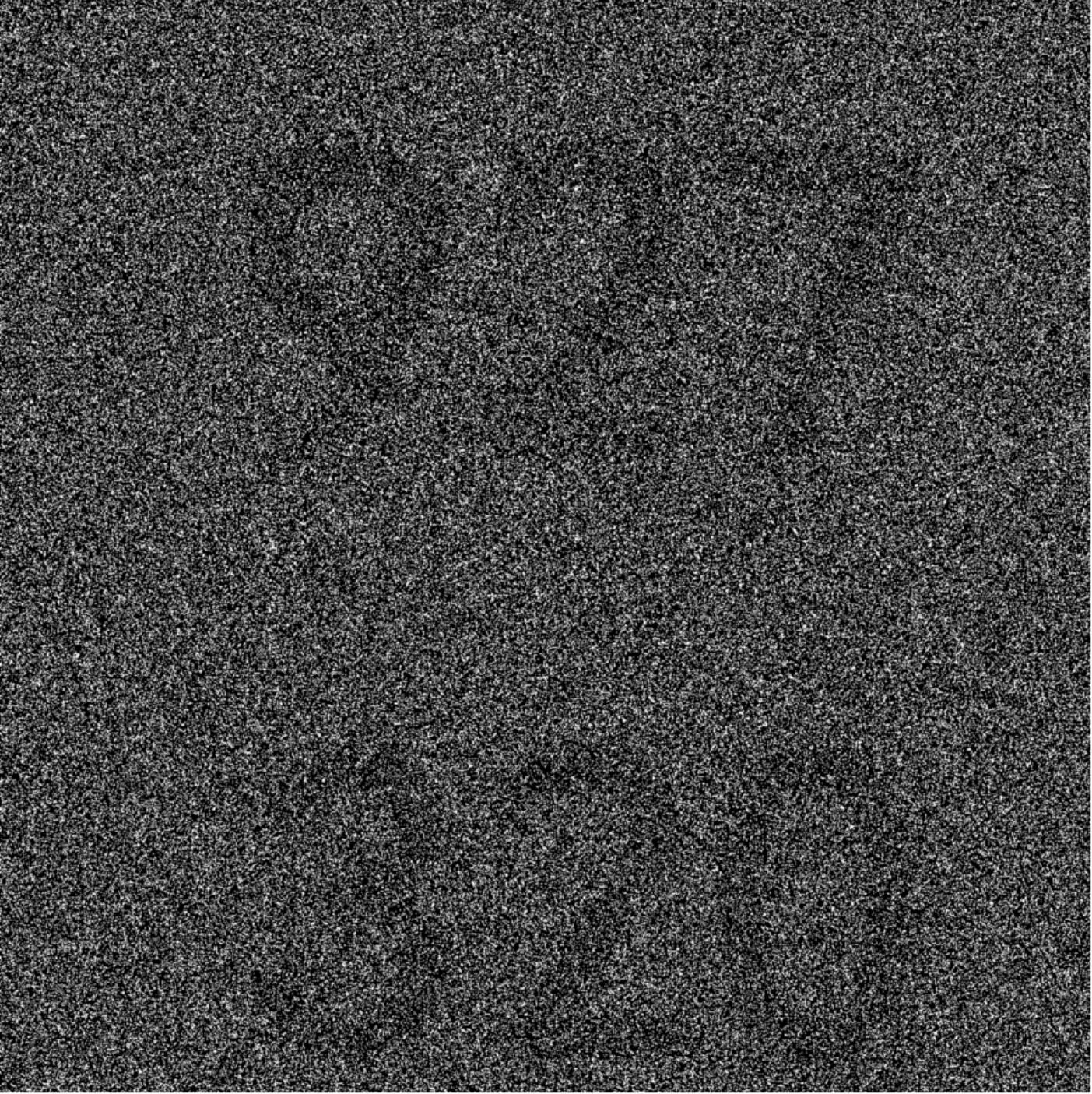}}\hspace{-0.45em}
  \subfloat[]{
    \includegraphics[width=0.09\textwidth]{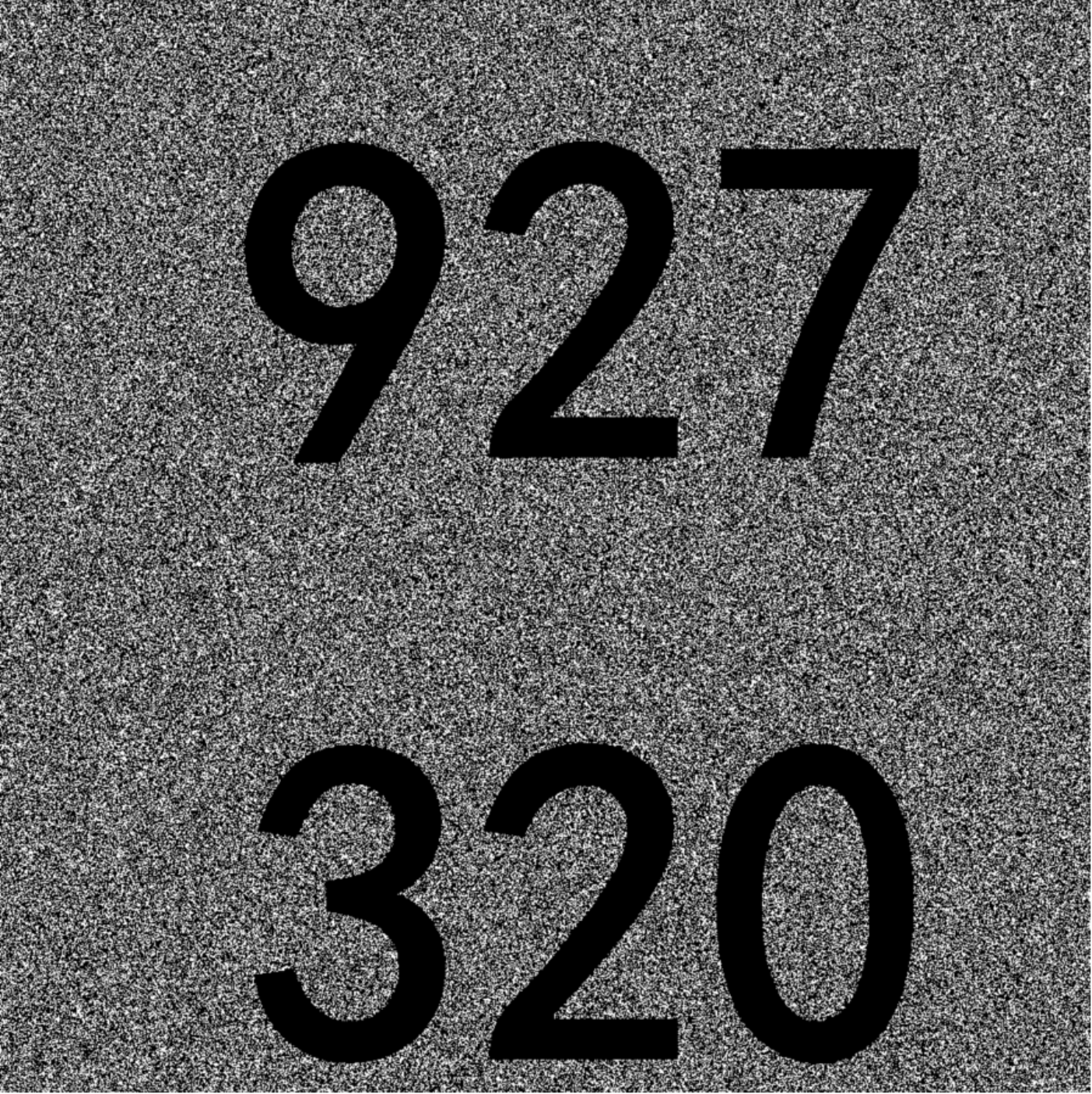}}\hspace{-0.45em}
  \subfloat[]{
    \includegraphics[width=0.09\textwidth]{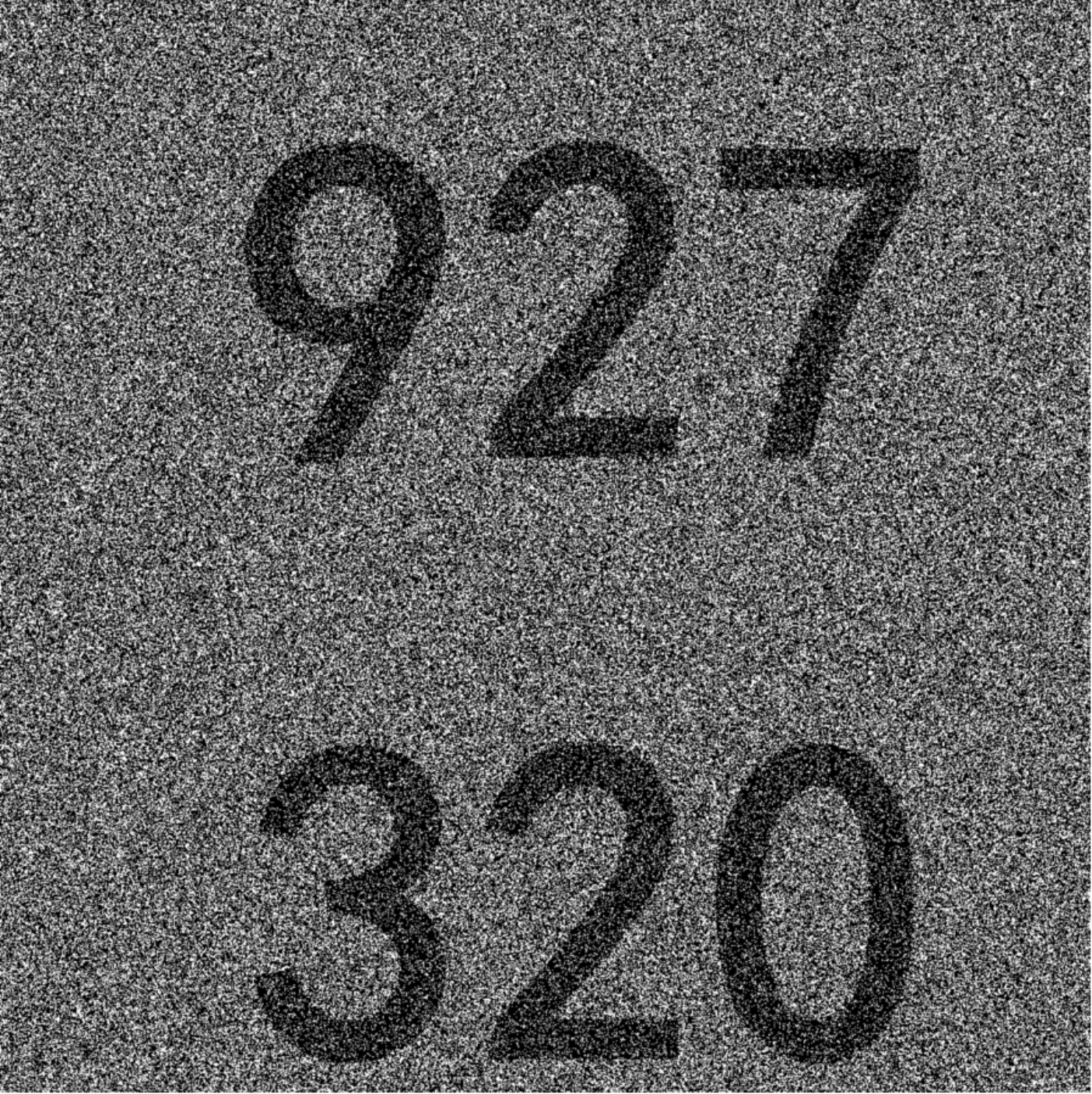}}\hspace{-0.45em}
  \subfloat[]{
      \includegraphics[width=0.09\textwidth]{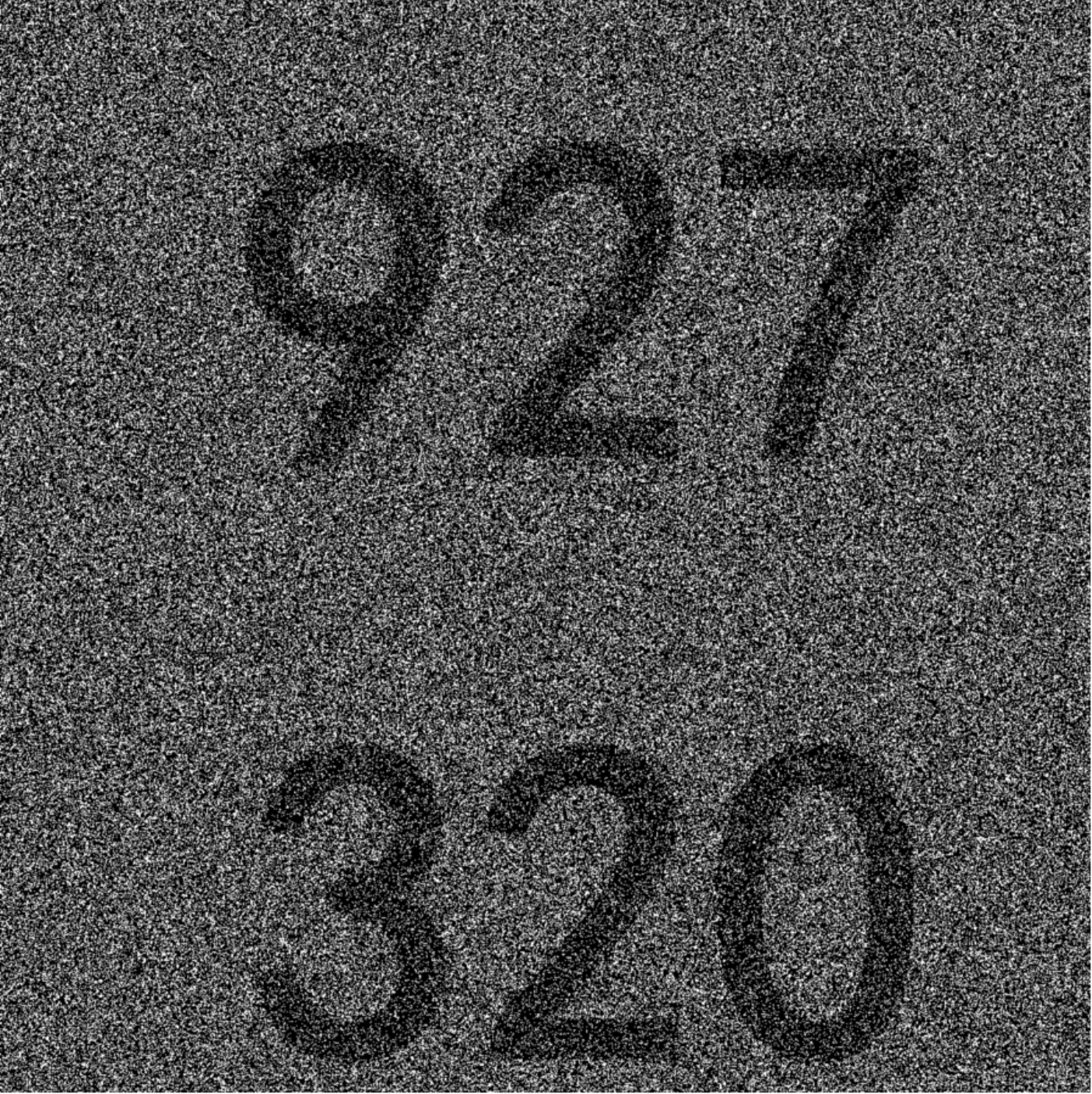}}\hspace{-0.45em}
  \subfloat[]{
      \includegraphics[width=0.09\textwidth]{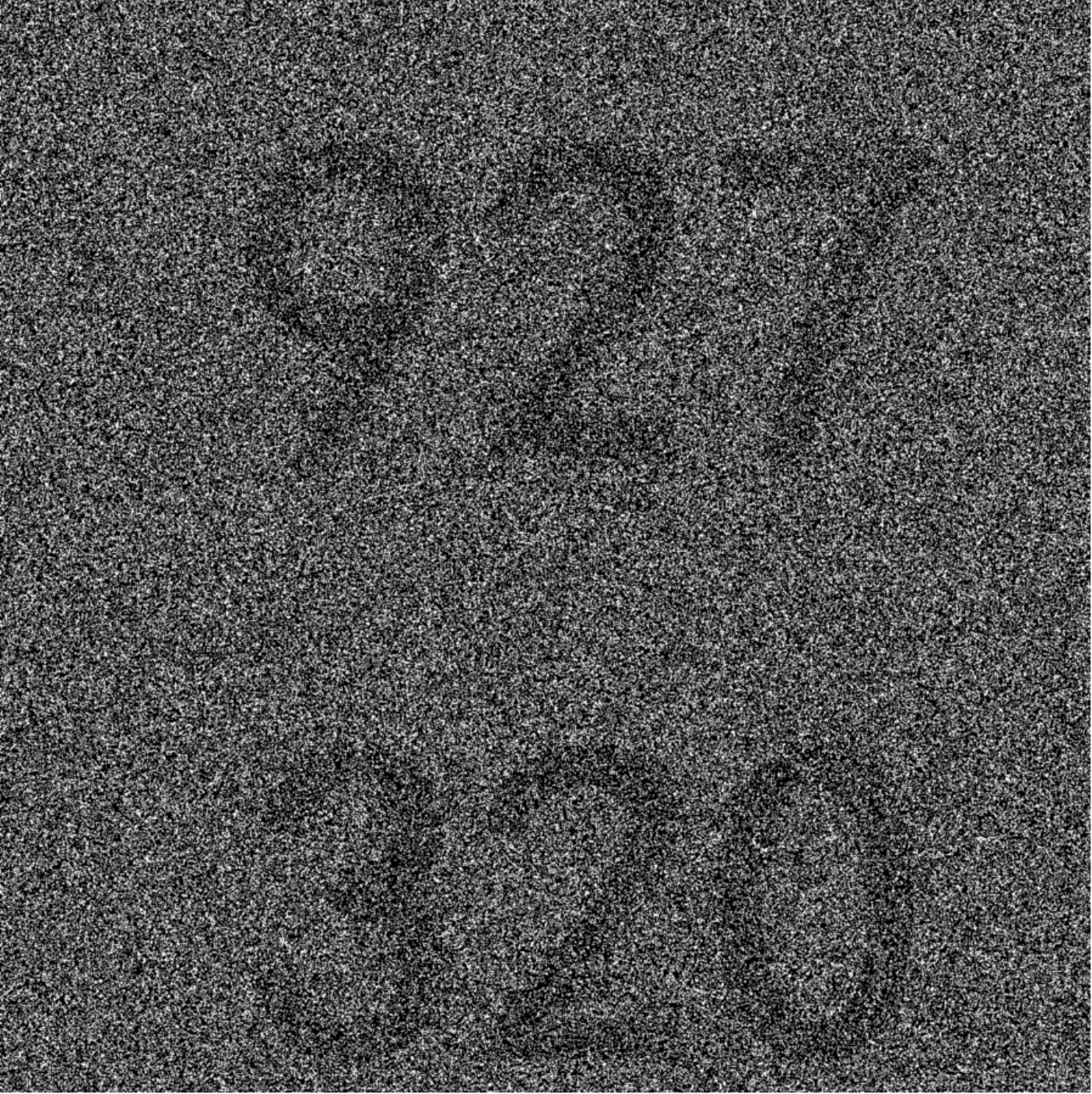}}\hspace{-0.45em}
  \subfloat[]{
      \includegraphics[width=0.09\textwidth]{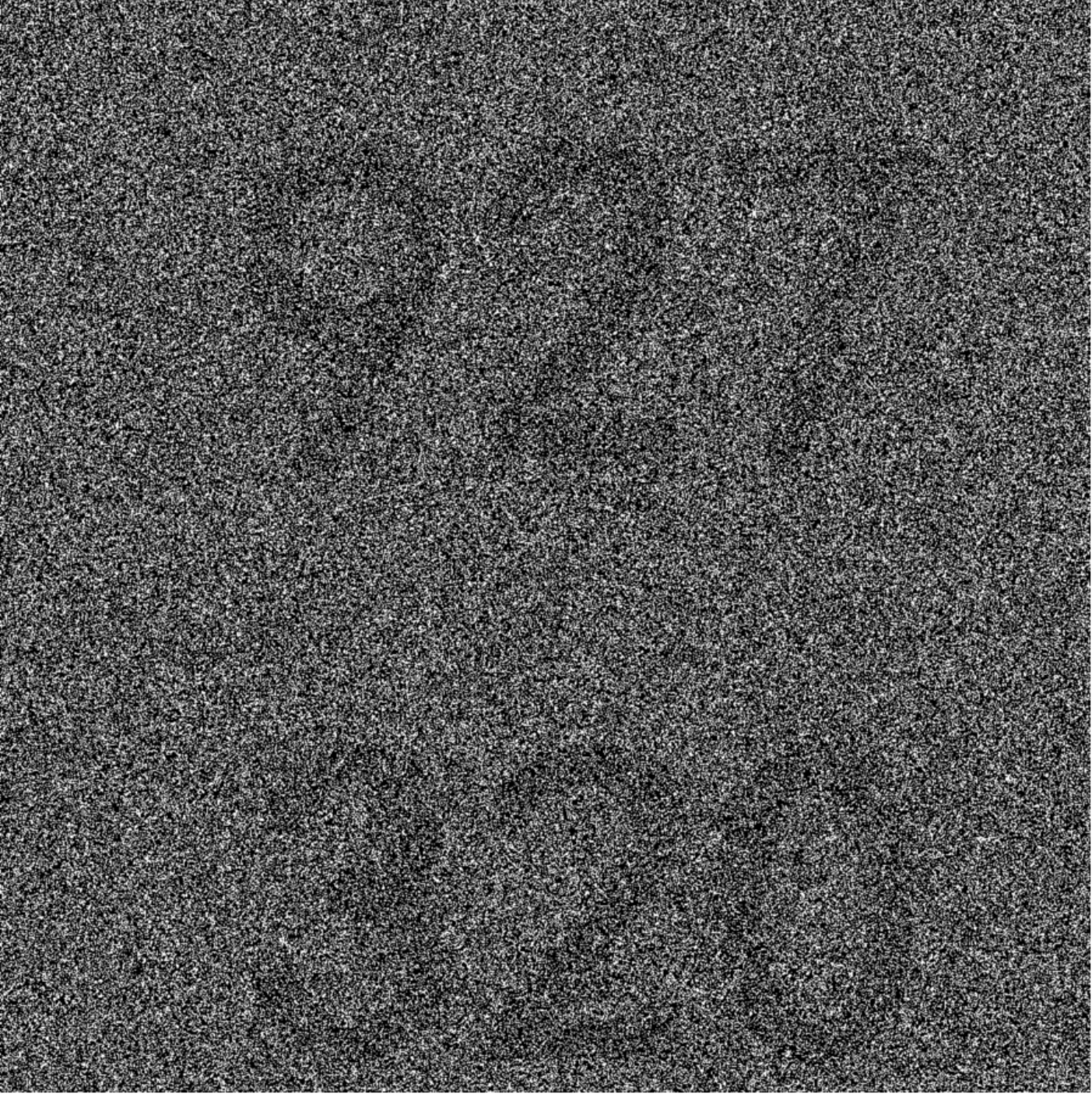}}
  \caption{\small{Recovery performance for $k=2$ and $k=3$. 
  (a) (b) The recovered images corresponding to $[2]$ and $[1,1]$ in $(2,\infty)$ RGVCS;
  (c)-(e) The recovered images corresponding to $[3],[2,1],$ and $[1,1,1]$ in $(3,\infty)$ RGVCS;
  (f) (g) The recovered images corresponding to $[2]$ and $[1,1]$ in better $(2,\infty)$ VCS;
  (h)-(j) The recovered images corresponding to $[3],[2,1],$ and $[1,1,1]$ in better $(3,\infty)$ VCS.} 
  }
  \label{fig:k_2_3}
\end{figure}

\begin{figure}[!t]
    \centering
  \captionsetup[subfloat]{font=small, labelfont=rm}
   \subfloat[]{
      \includegraphics[width=0.09\textwidth]{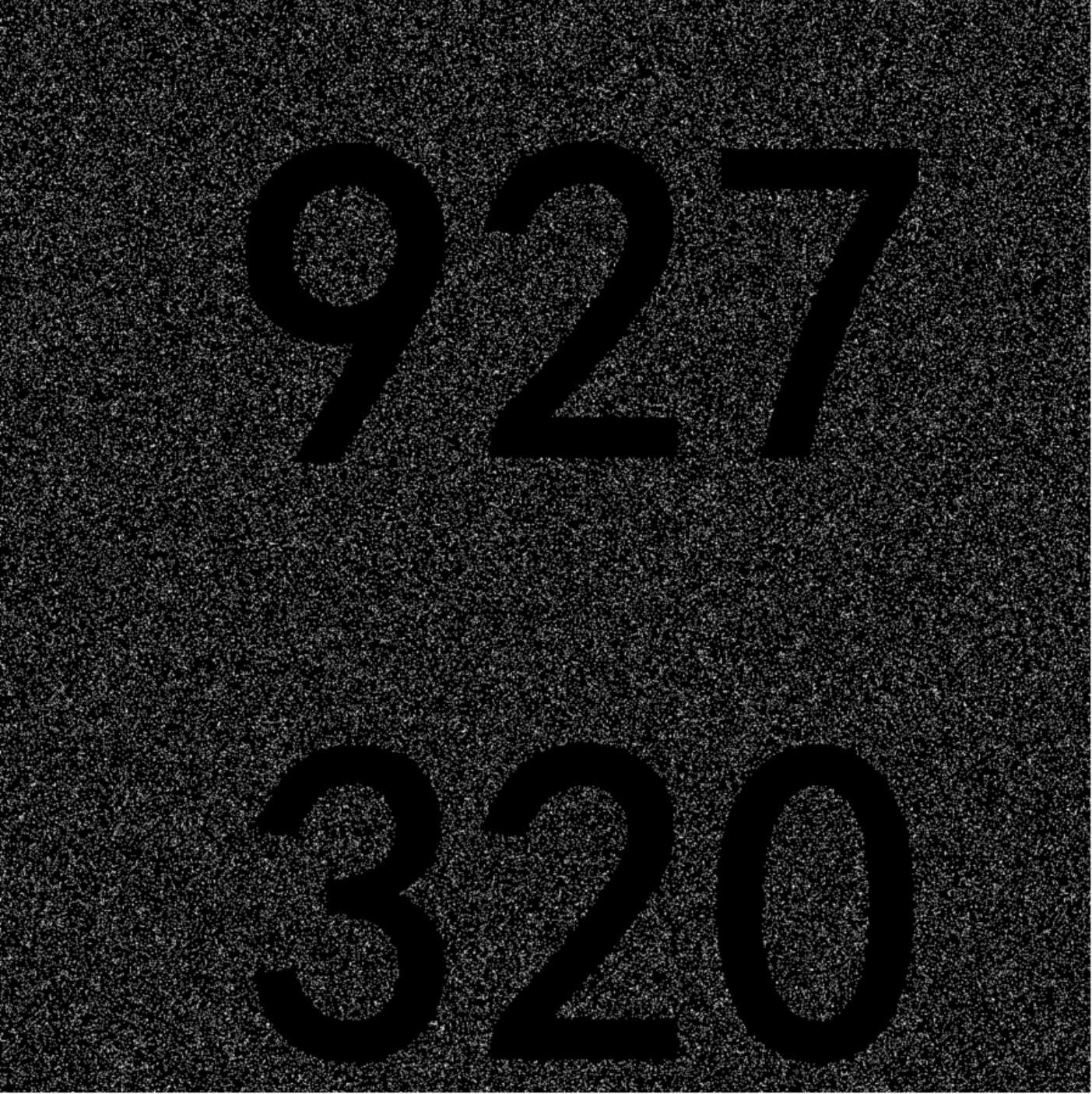}}\hspace{-0.45em}
  \subfloat[]{
      \includegraphics[width=0.09\textwidth]{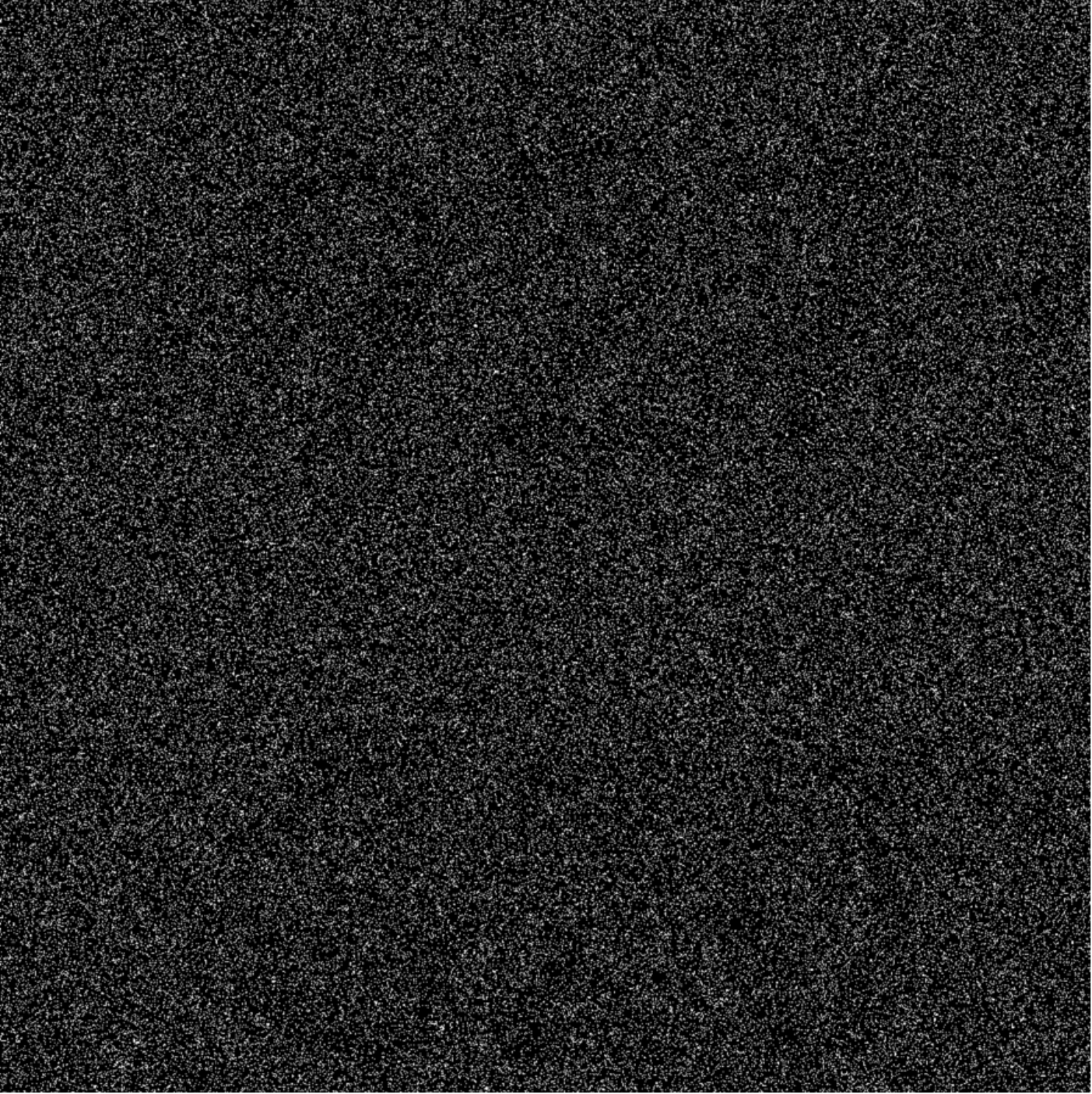}}\hspace{-0.45em}
  \subfloat[]{
      \includegraphics[width=0.09\textwidth]{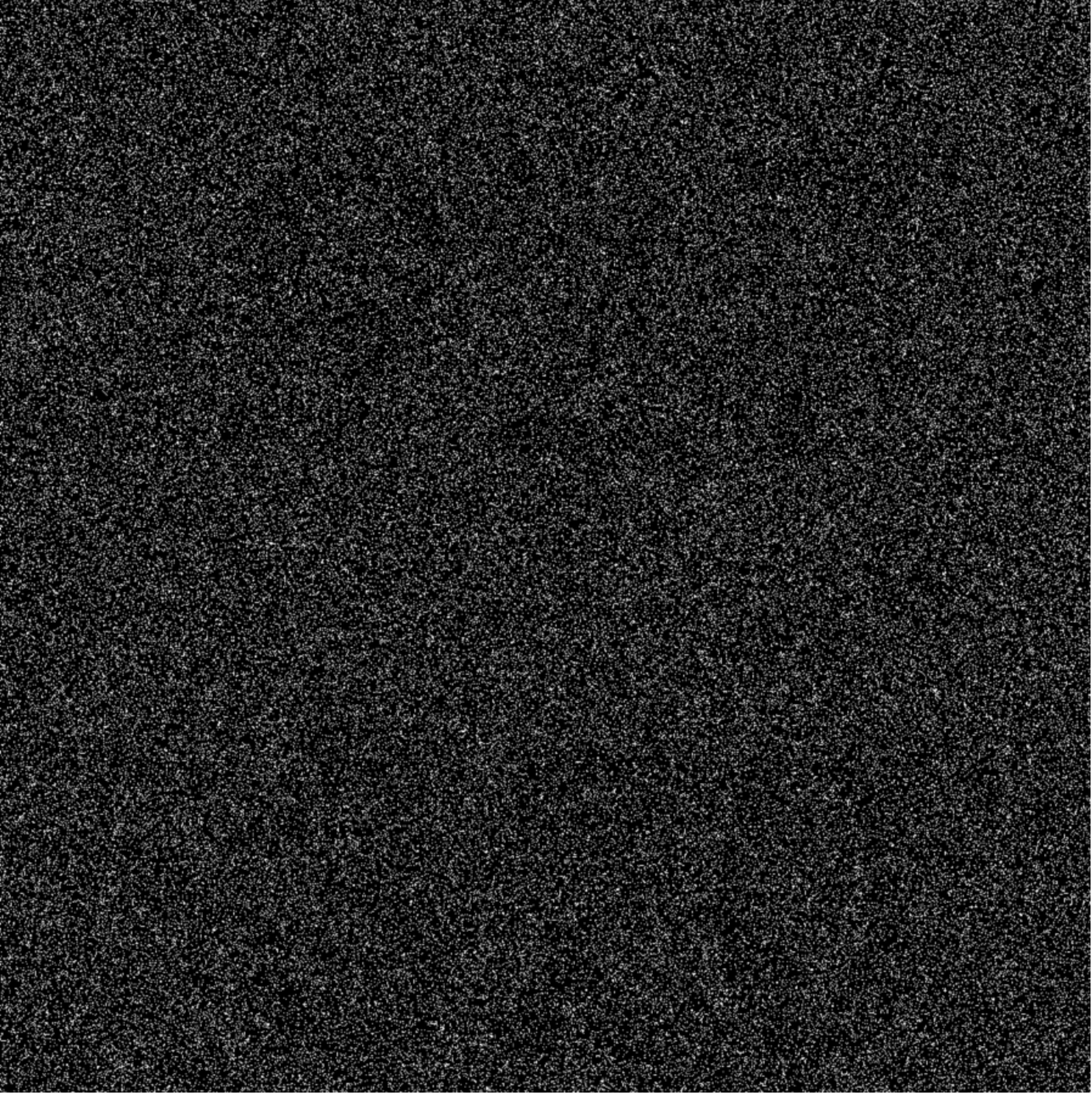}}\hspace{-0.45em}
  \subfloat[]{
      \includegraphics[width=0.09\textwidth]{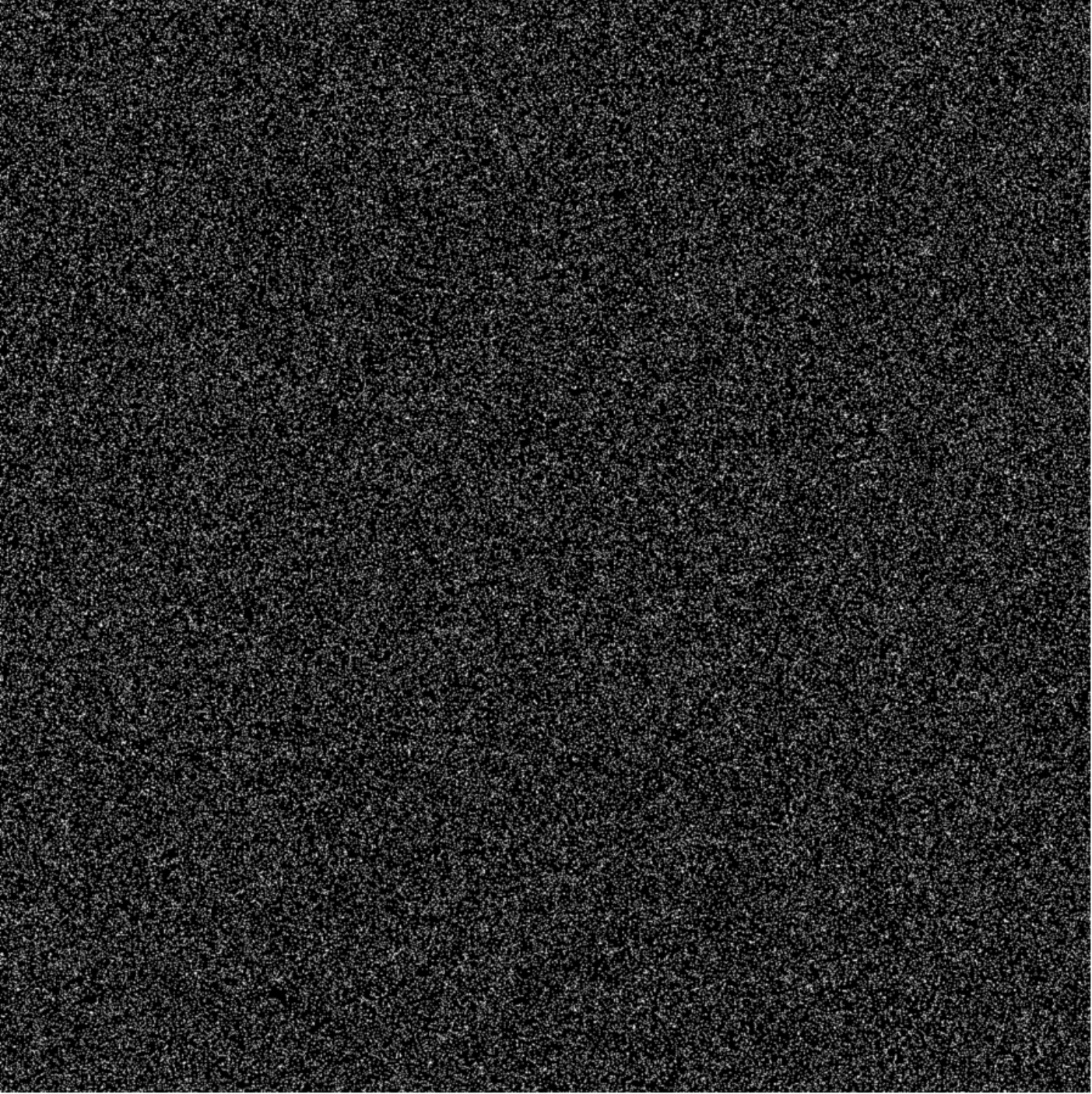}}\hspace{-0.45em}
  \subfloat[]{
      \includegraphics[width=0.09\textwidth]{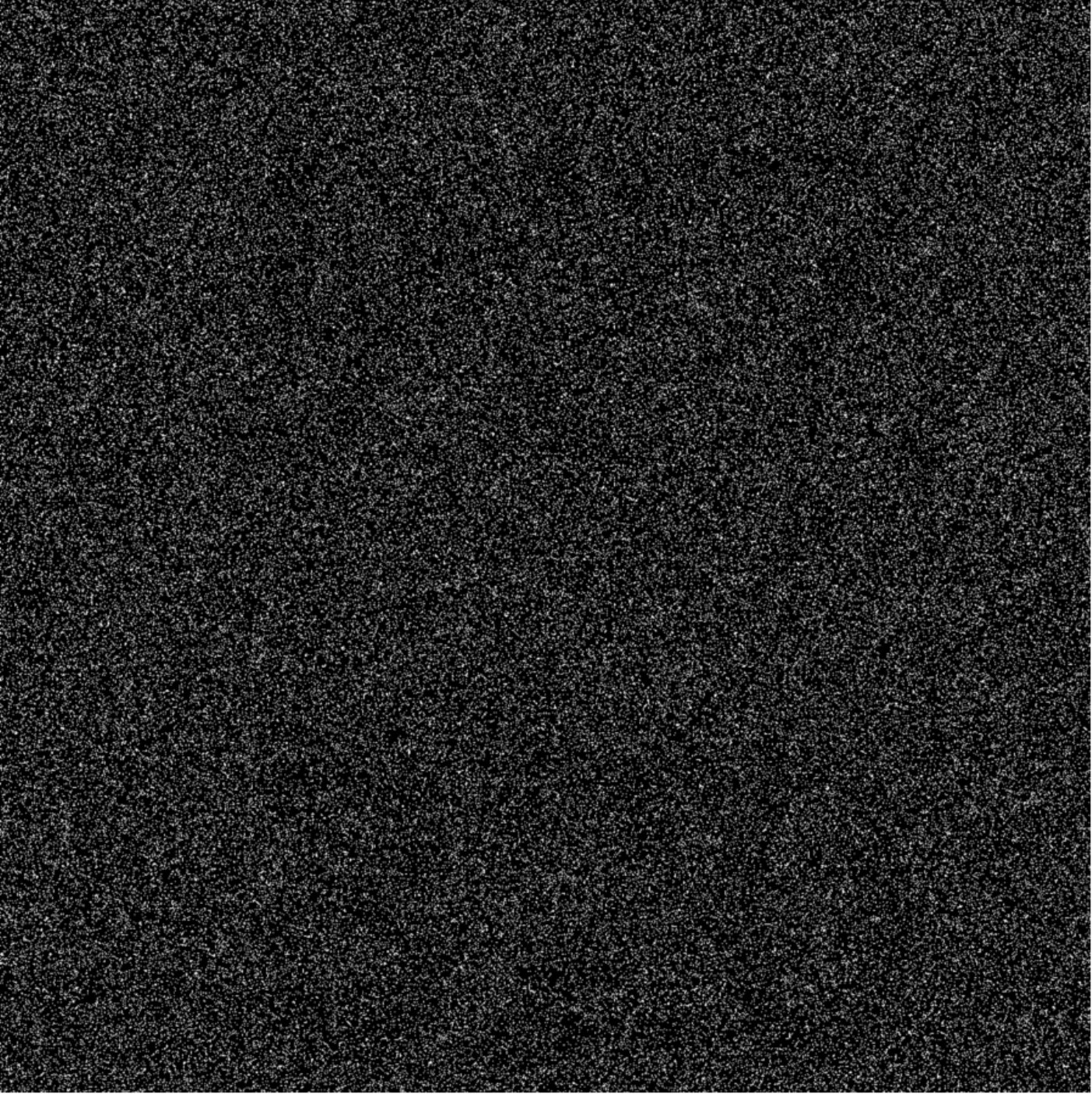}}\hspace{-0.45em}
  \subfloat[]{
    \includegraphics[width=0.09\textwidth]{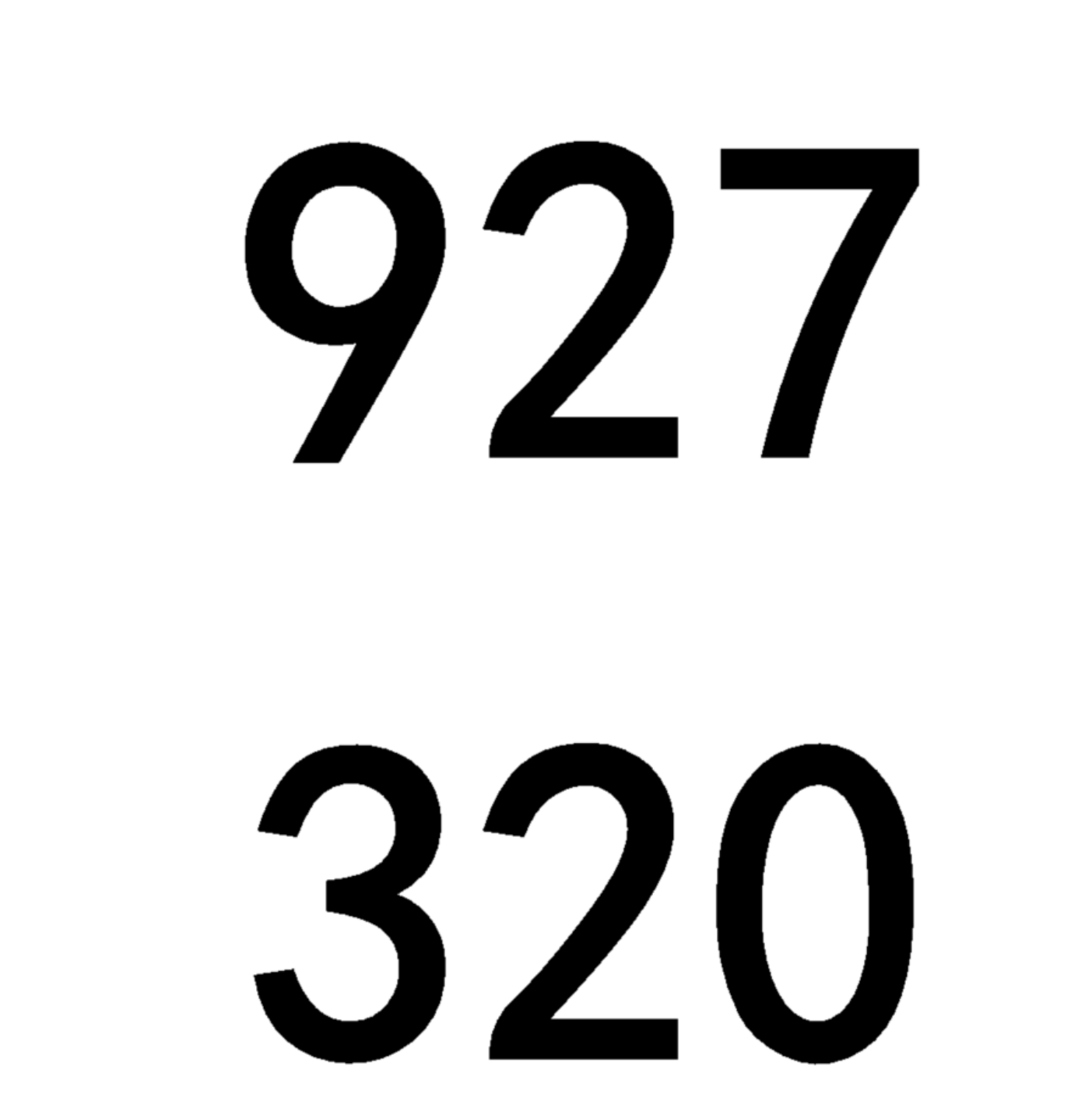}}\hspace{-0.45em}
  \subfloat[]{
    \includegraphics[width=0.09\textwidth]{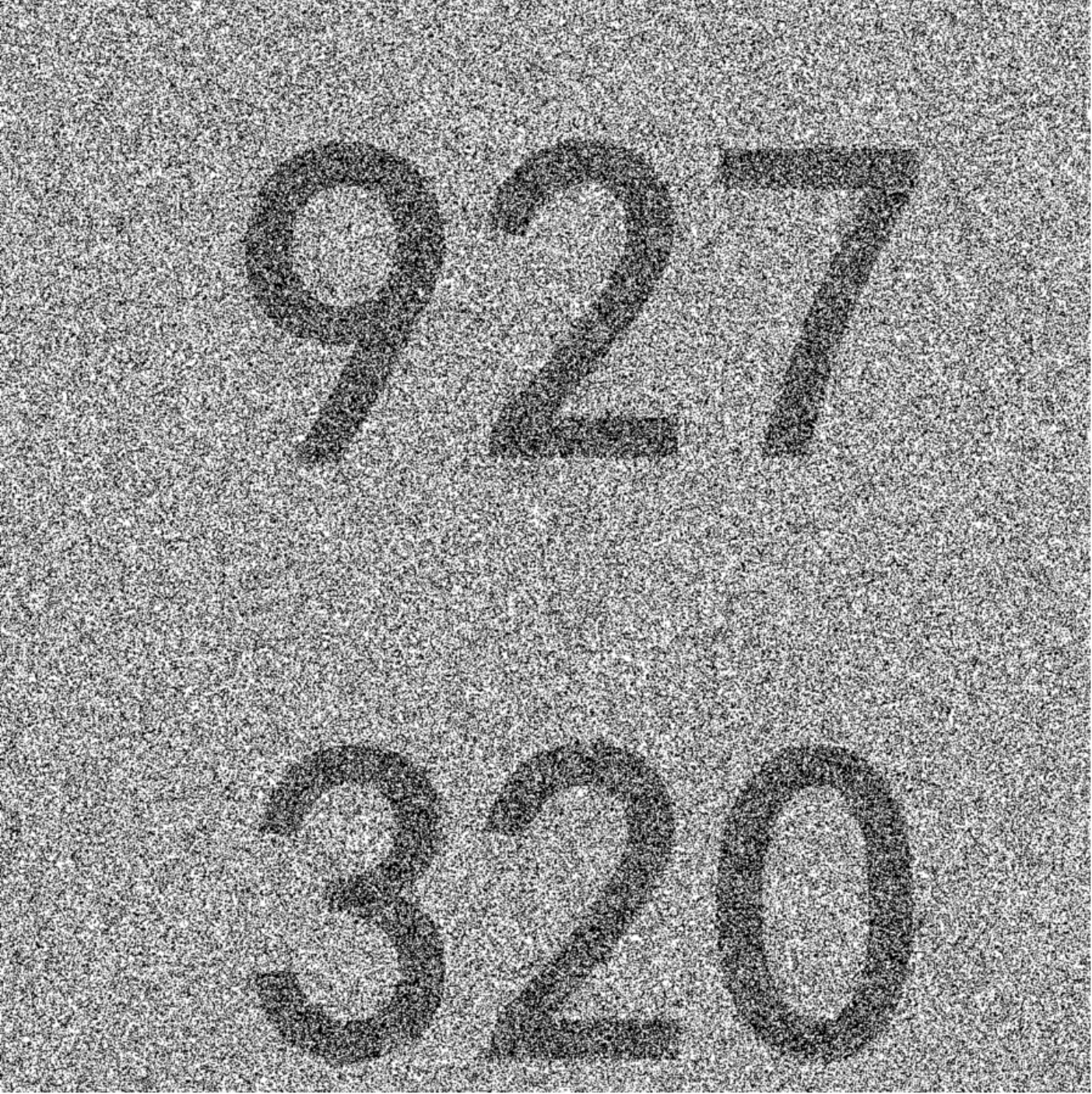}}\hspace{-0.45em}
  \subfloat[]{
      \includegraphics[width=0.09\textwidth]{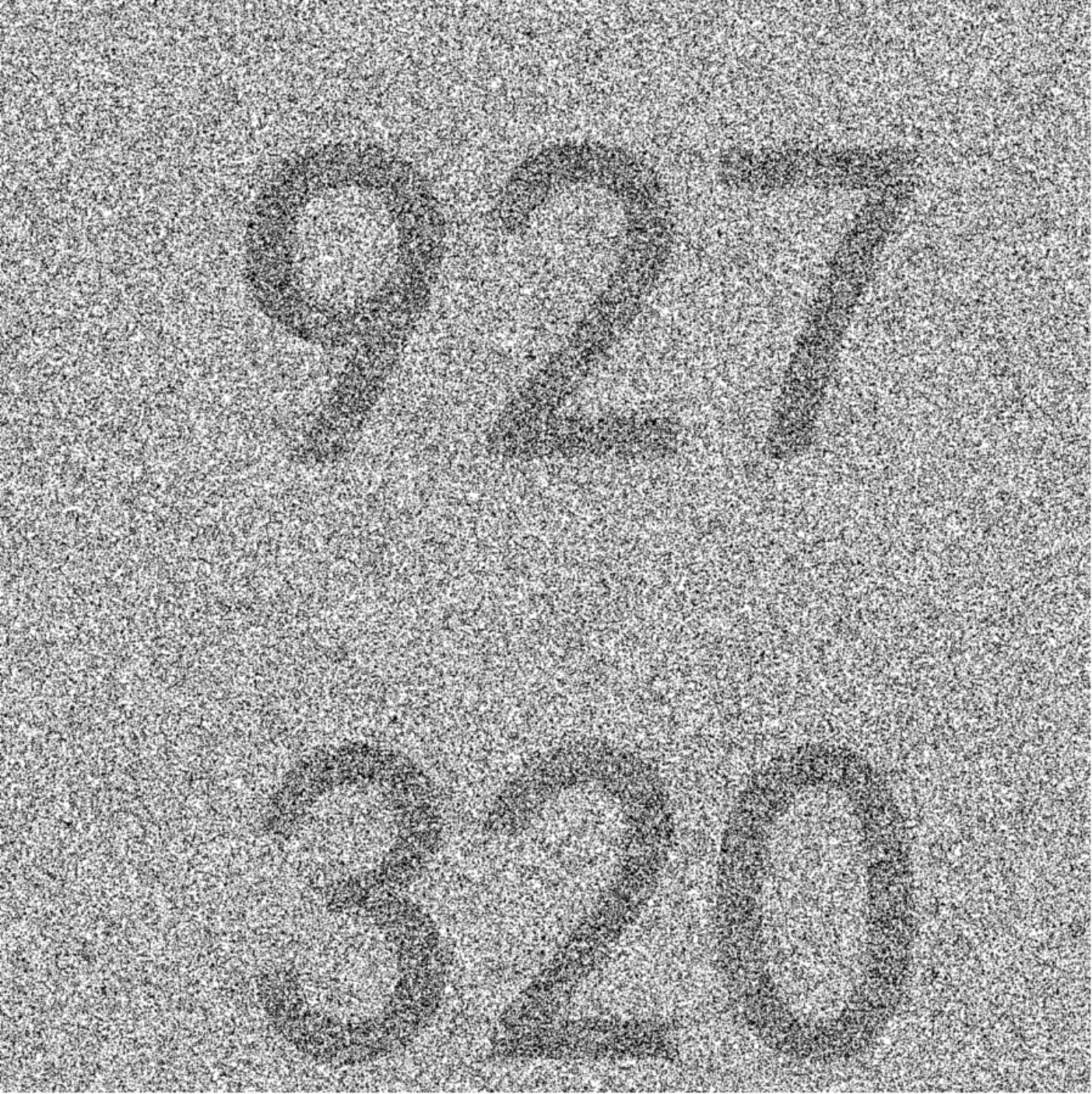}}\hspace{-0.45em}
  \subfloat[]{
      \includegraphics[width=0.09\textwidth]{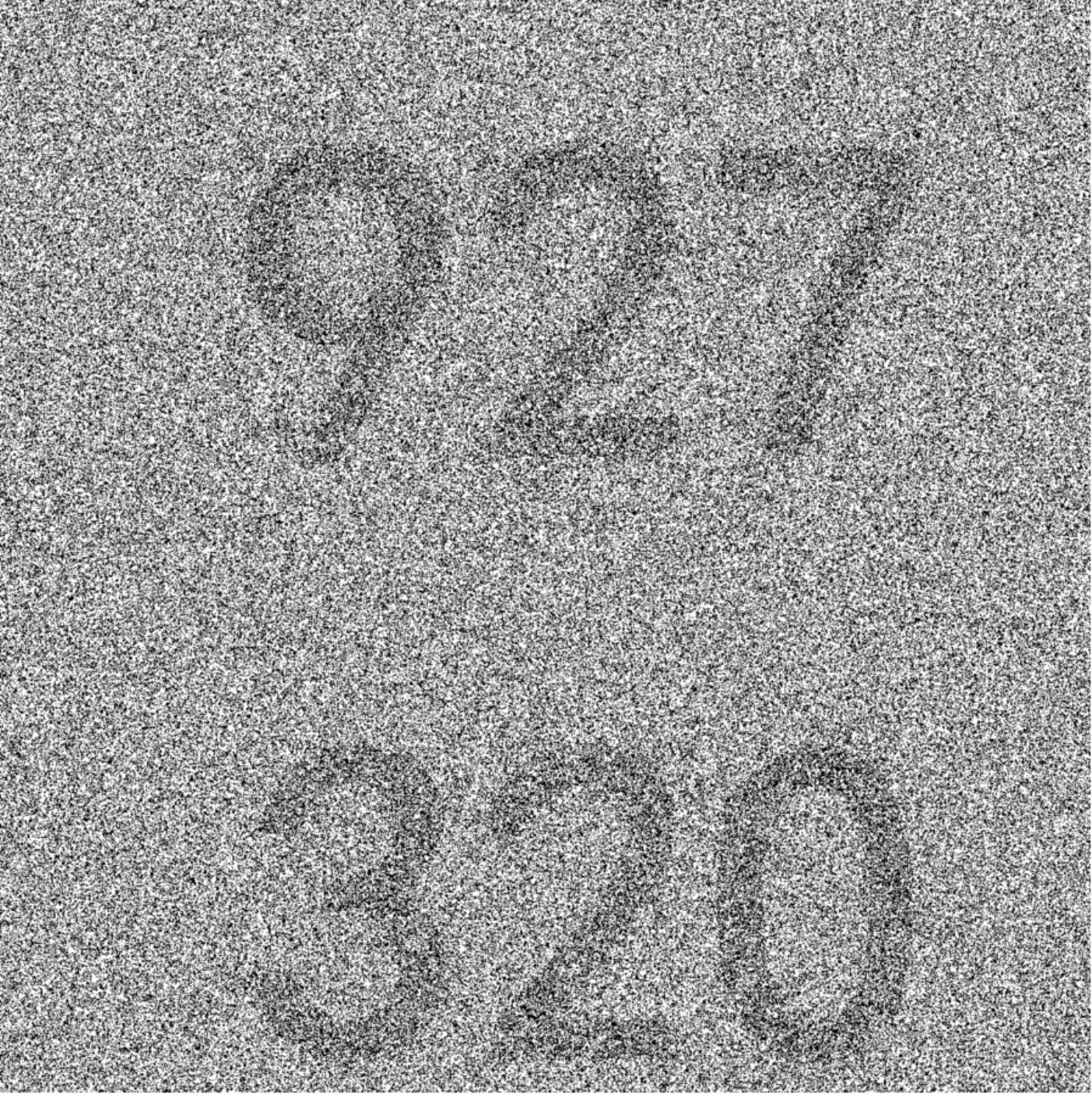}}\hspace{-0.45em}
  \subfloat[]{
      \includegraphics[width=0.09\textwidth]{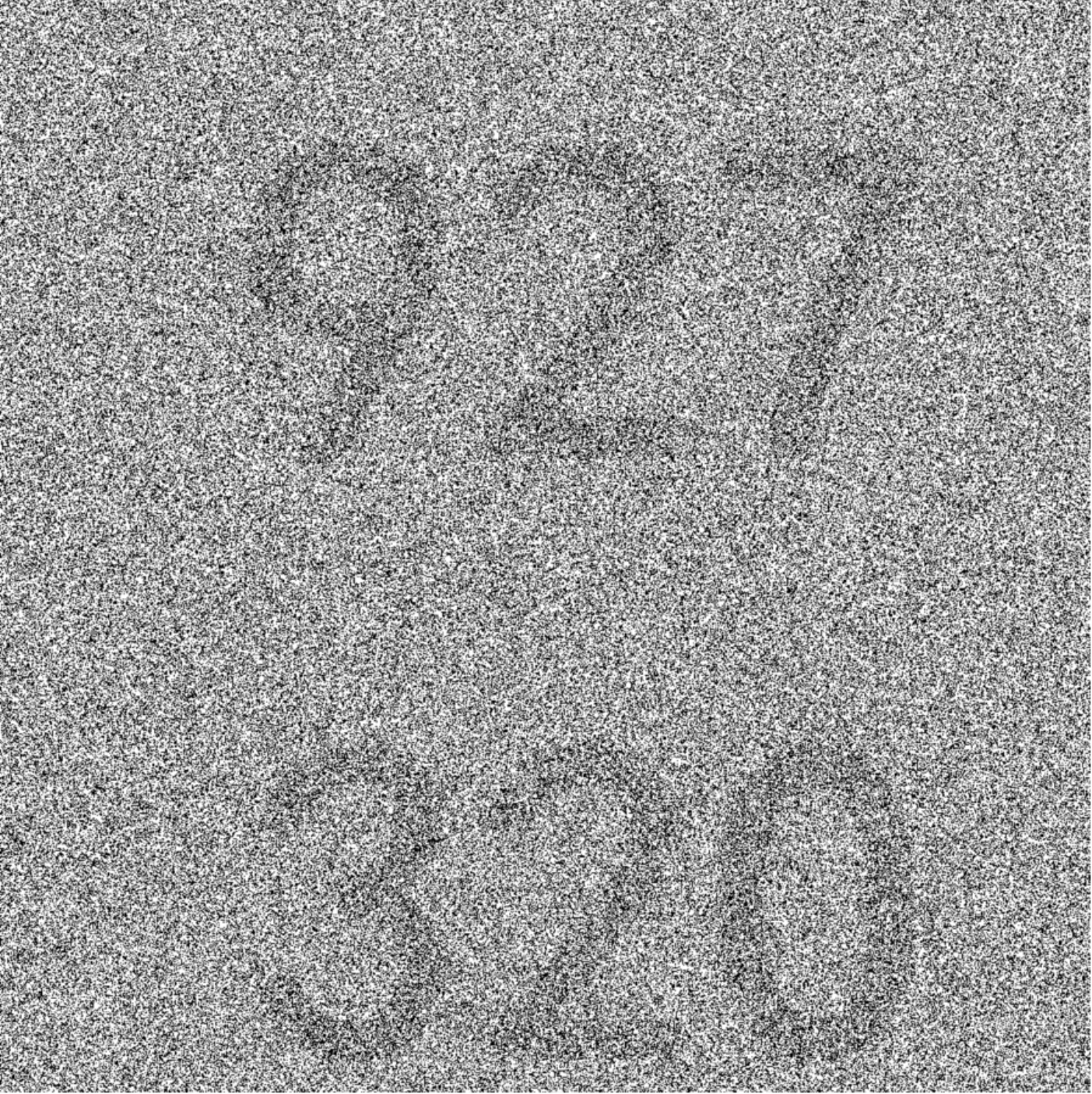}}
  \caption{\small{Recovery performance of $(4,\infty)$ RGVCS. 
  (a)-(e) The recovered images obtained by OR-based recovery corresponding to $[4],[3,1],[2,2],[2,1,1]$ and $[1,1,1,1]$; 
  (f)-(j) The recovered images obtained by XOR-based recovery corresponding to $[4],[3,1],[2,2],[2,1,1]$ and $[1,1,1,1]$.}
  }
  \label{fig:xor_schemes_4}
\end{figure}

\begin{figure}[!t]
  \centering
  \captionsetup[subfloat]{font=small, labelfont=rm} 
  \subfloat[]{
    \includegraphics[width=0.09\textwidth]{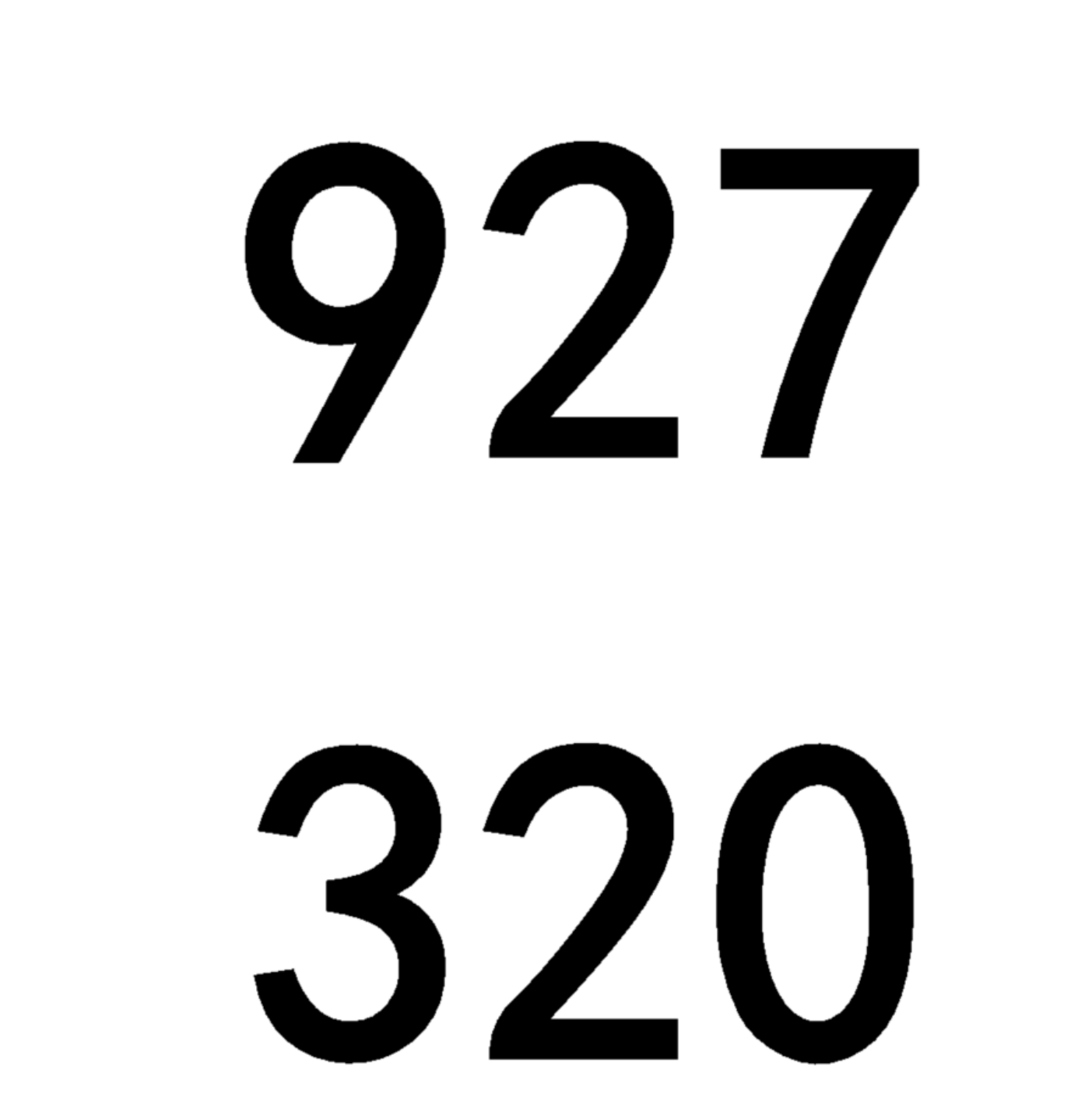}}\hspace{-0.45em}
  \subfloat[]{
    \includegraphics[width=0.09\textwidth]{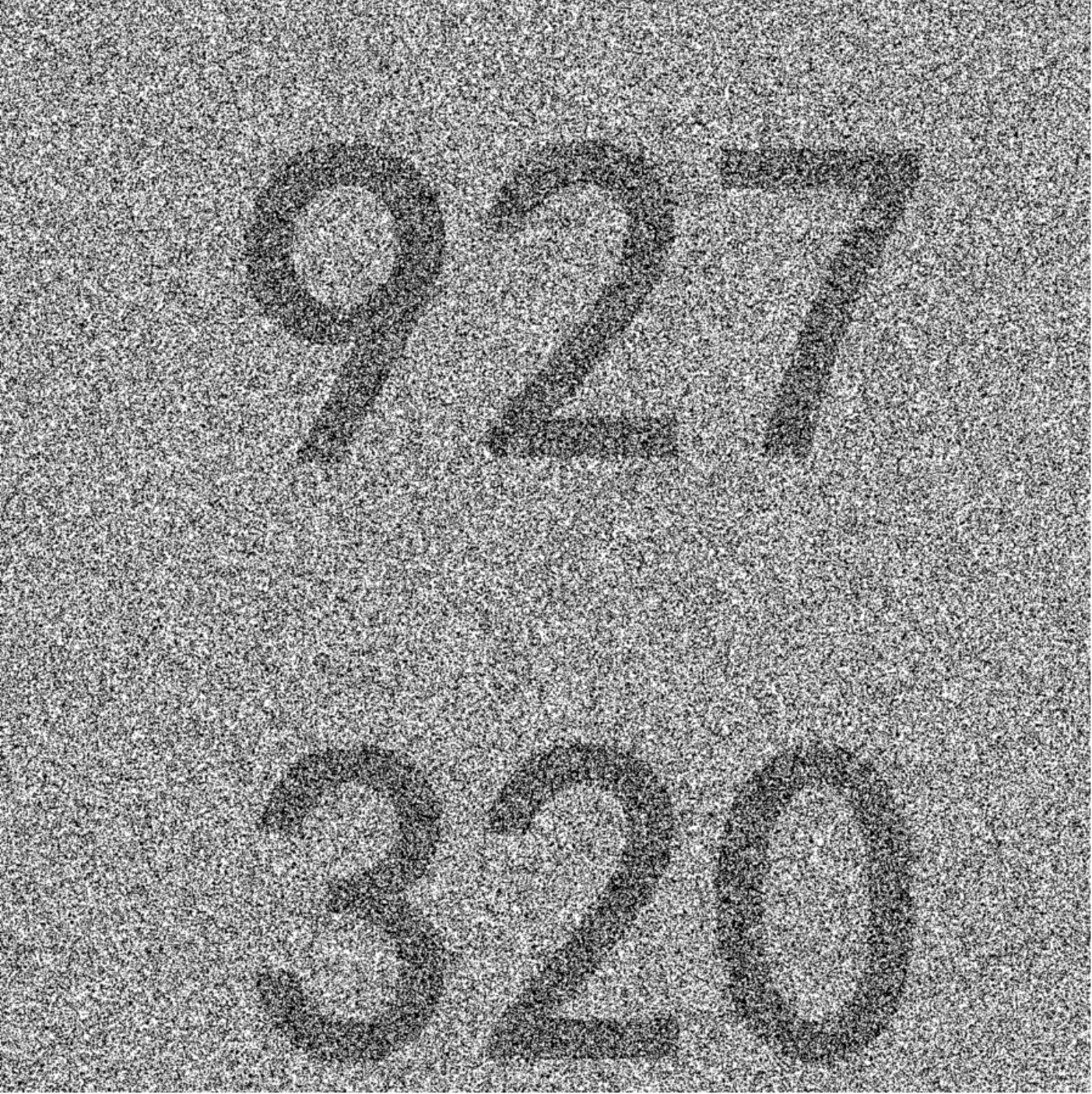}}\hspace{-0.45em}
  \subfloat[]{
      \includegraphics[width=0.09\textwidth]{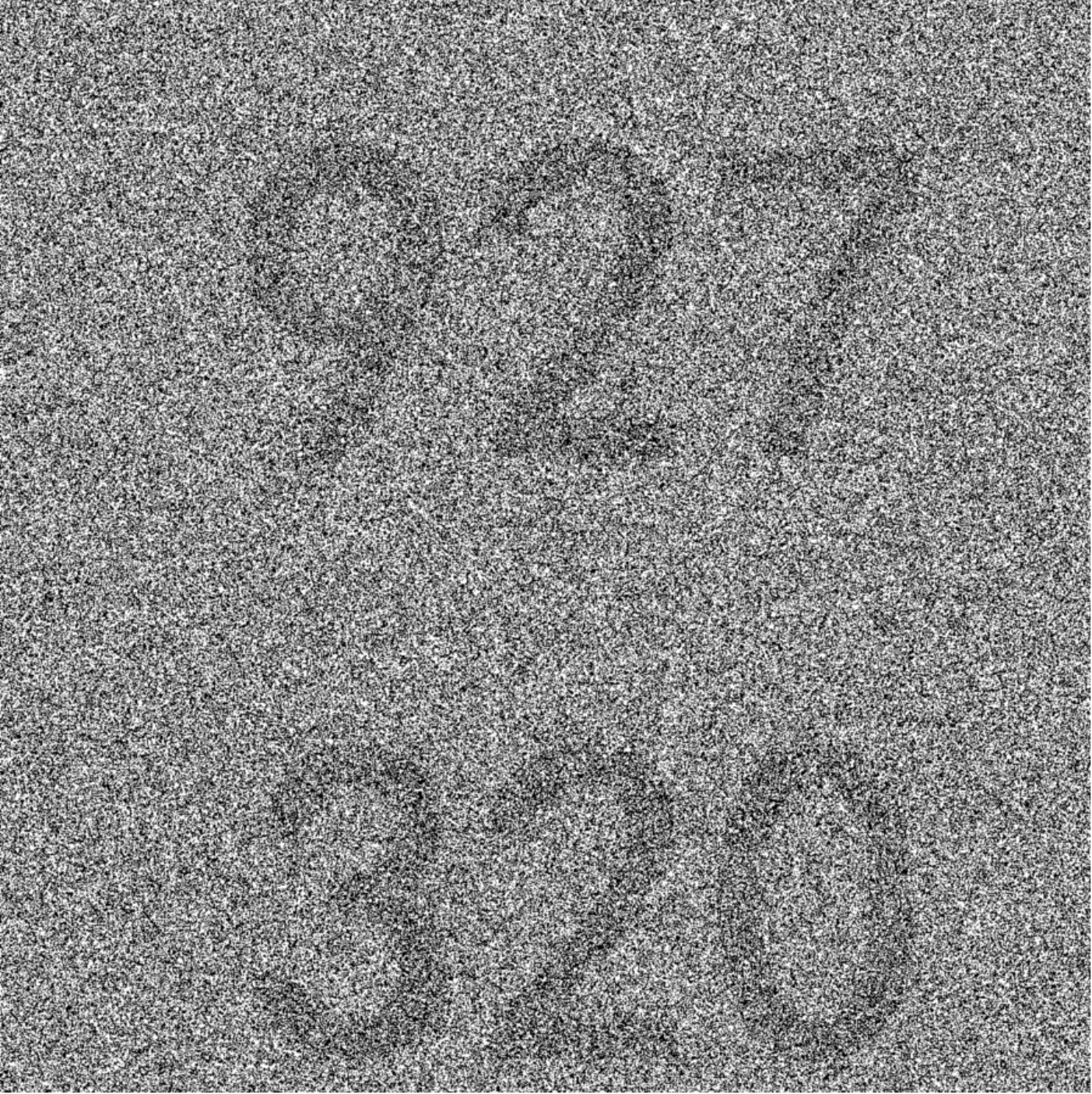}}\hspace{-0.45em}
  \subfloat[]{
      \includegraphics[width=0.09\textwidth]{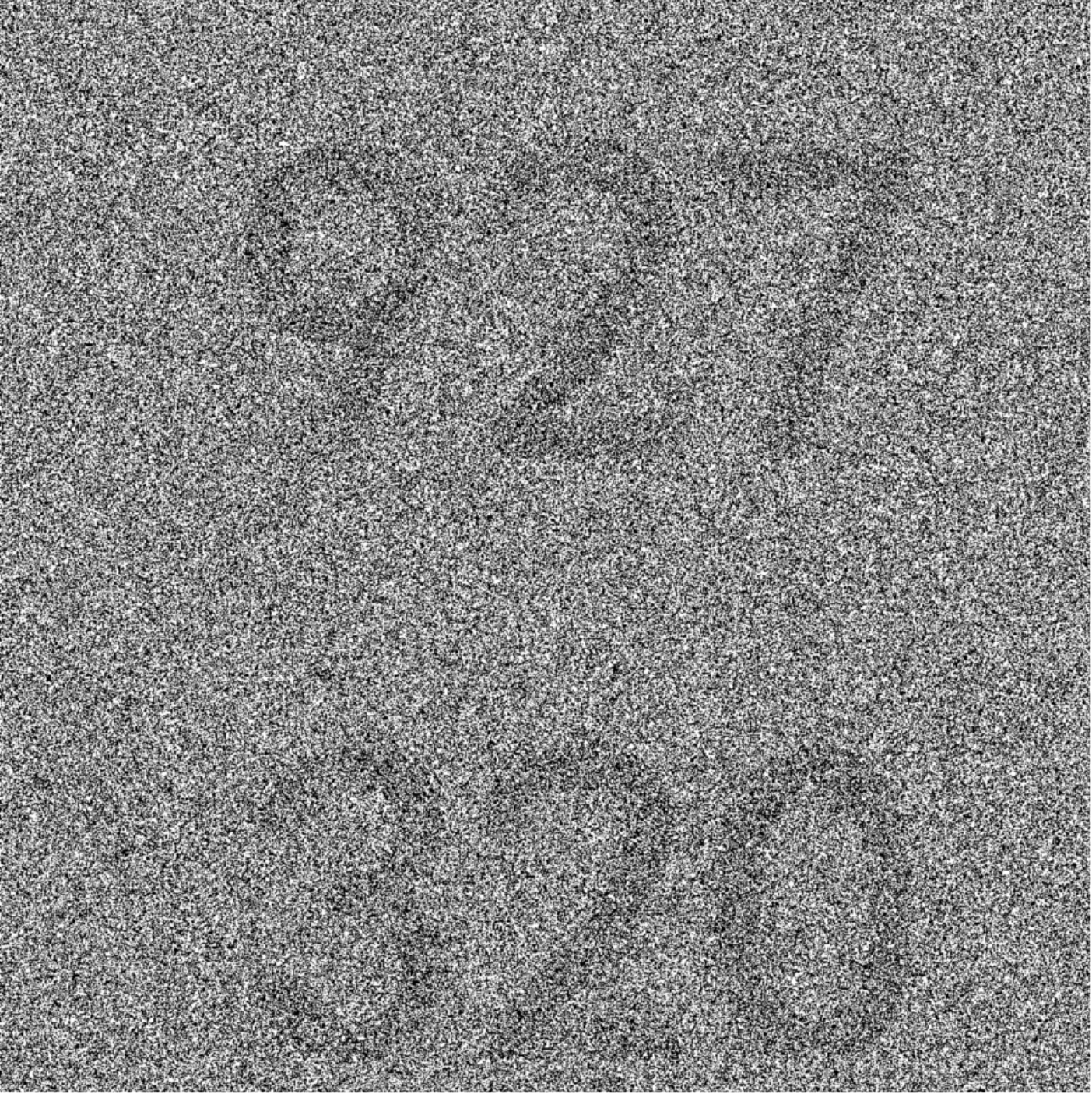}}\hspace{-0.45em}
  \subfloat[]{
      \includegraphics[width=0.09\textwidth]{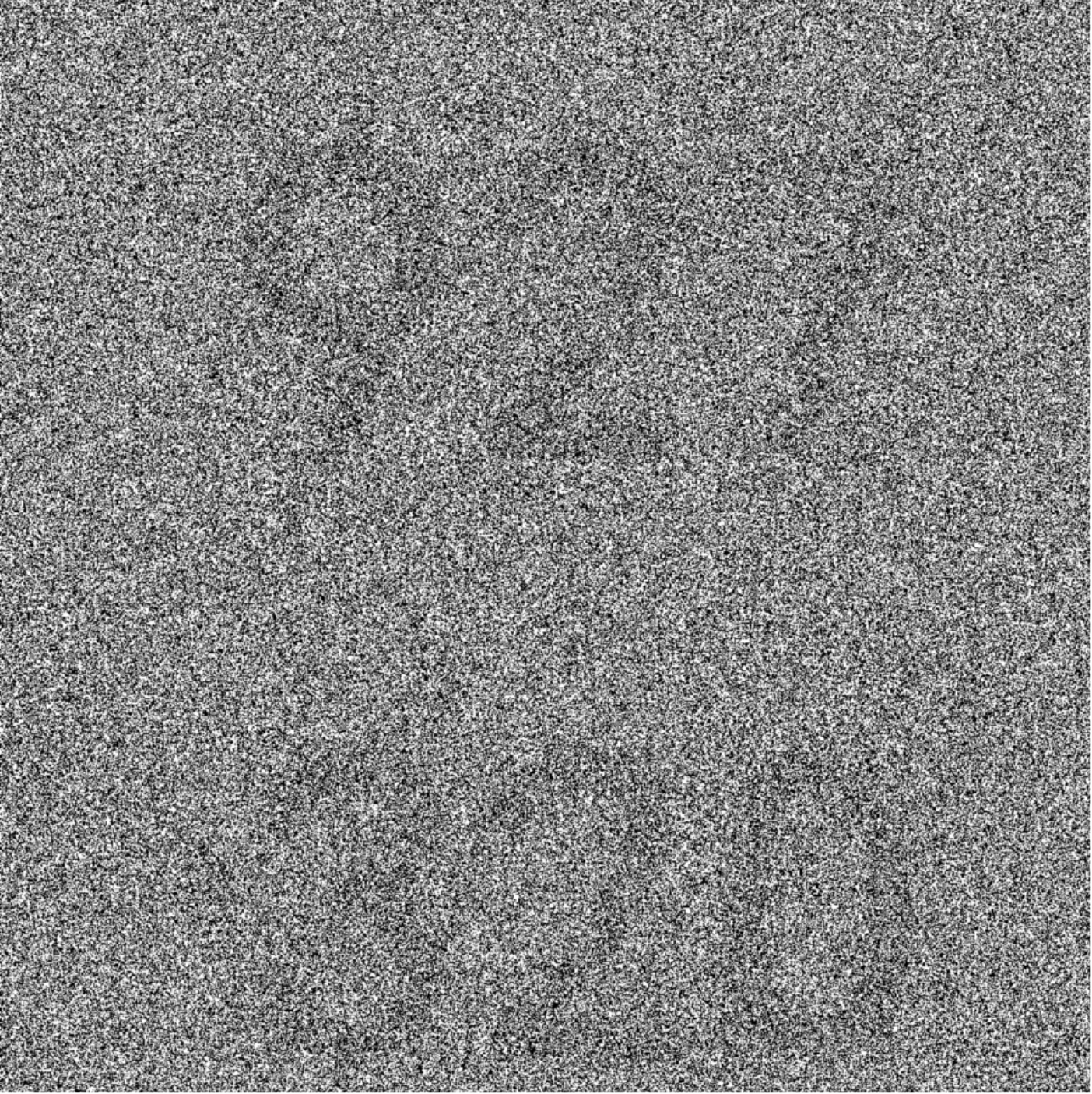}}\hspace{-0.45em}
  \subfloat[]{
      \includegraphics[width=0.09\textwidth]{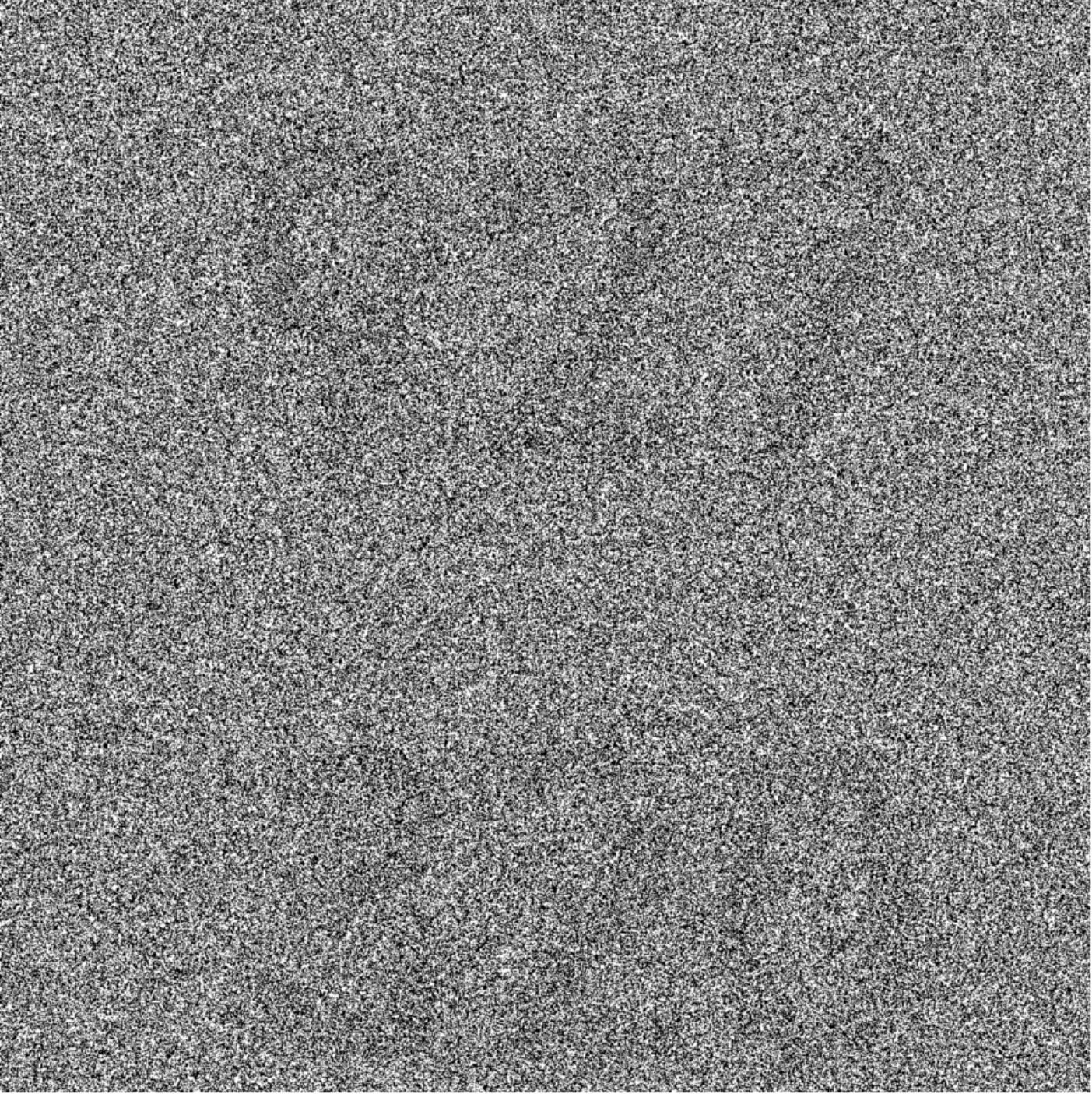}}\hspace{-0.45em}
  \subfloat[]{
      \includegraphics[width=0.09\textwidth]{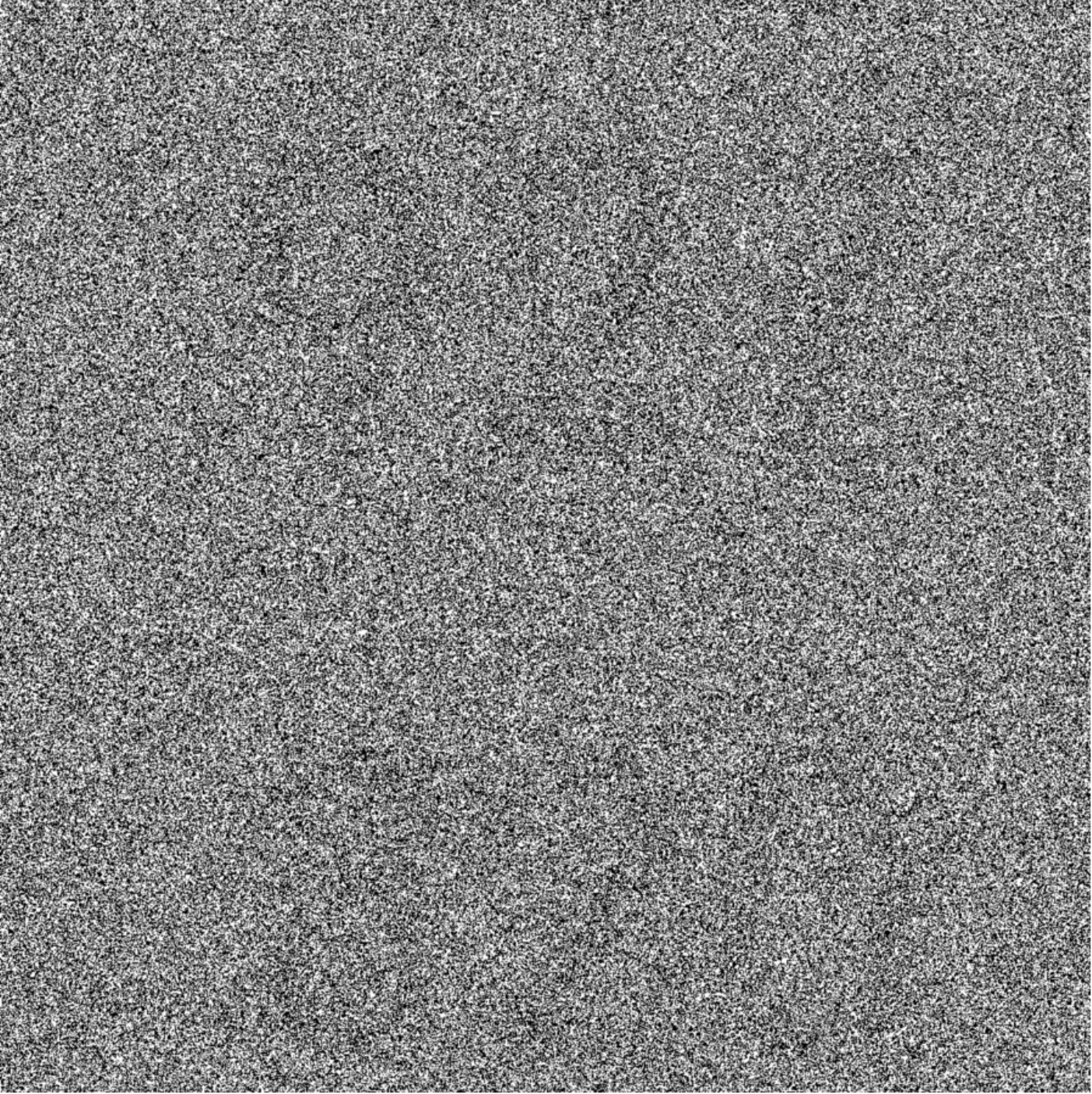}}
  \caption{\small{Recovery performance of $(5,\infty)$ RGVCS with XOR-based recovery. }
  }
  \label{fig:xor_schemes_5}
\end{figure}

\begin{figure}[!t]
  \centering
  \captionsetup[subfloat]{font=small, labelfont=rm}
  \subfloat[]{
    \includegraphics[width=0.09\textwidth]{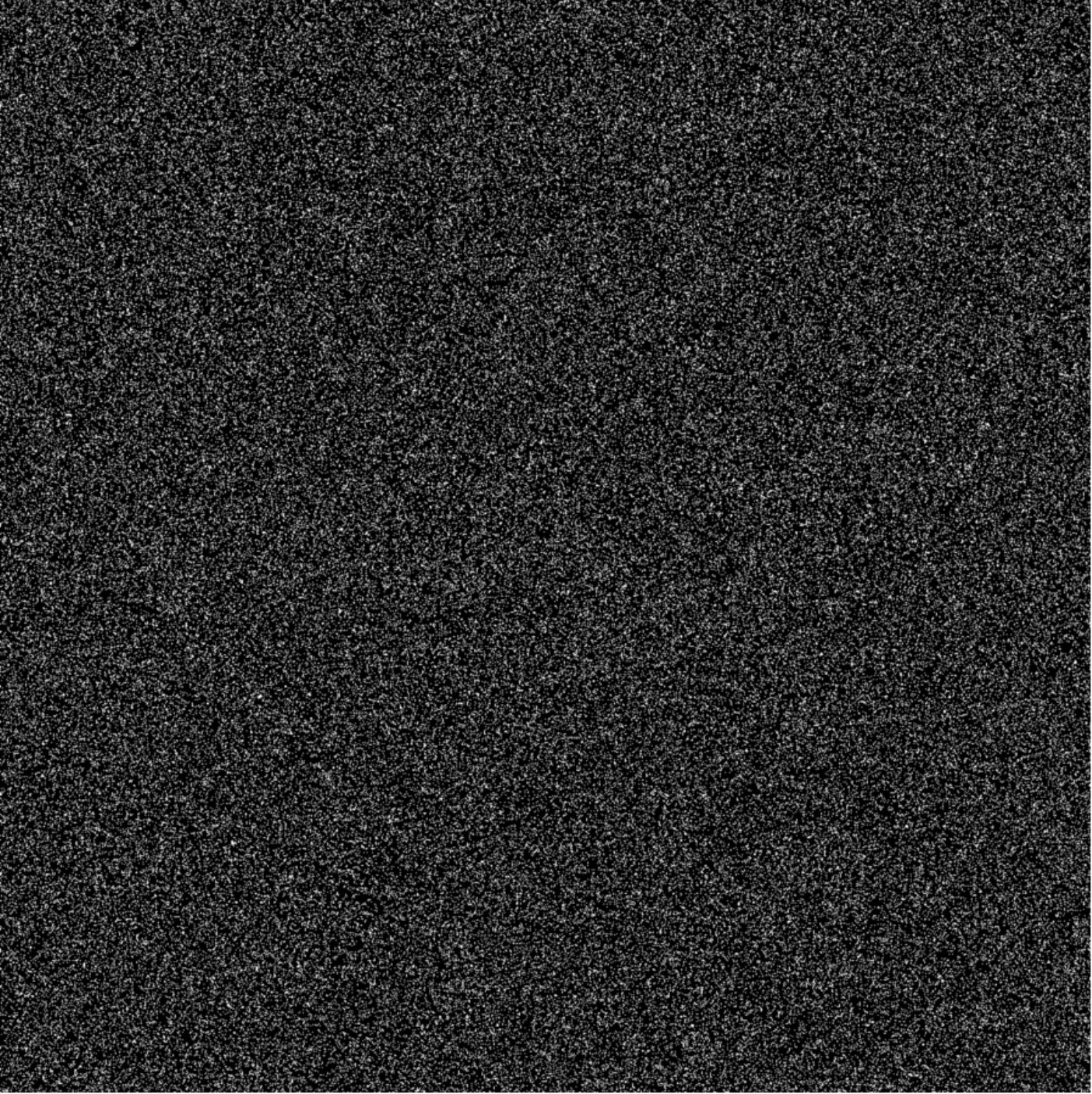}}\hspace{-0.45em}
  \subfloat[]{
    \includegraphics[width=0.09\textwidth]{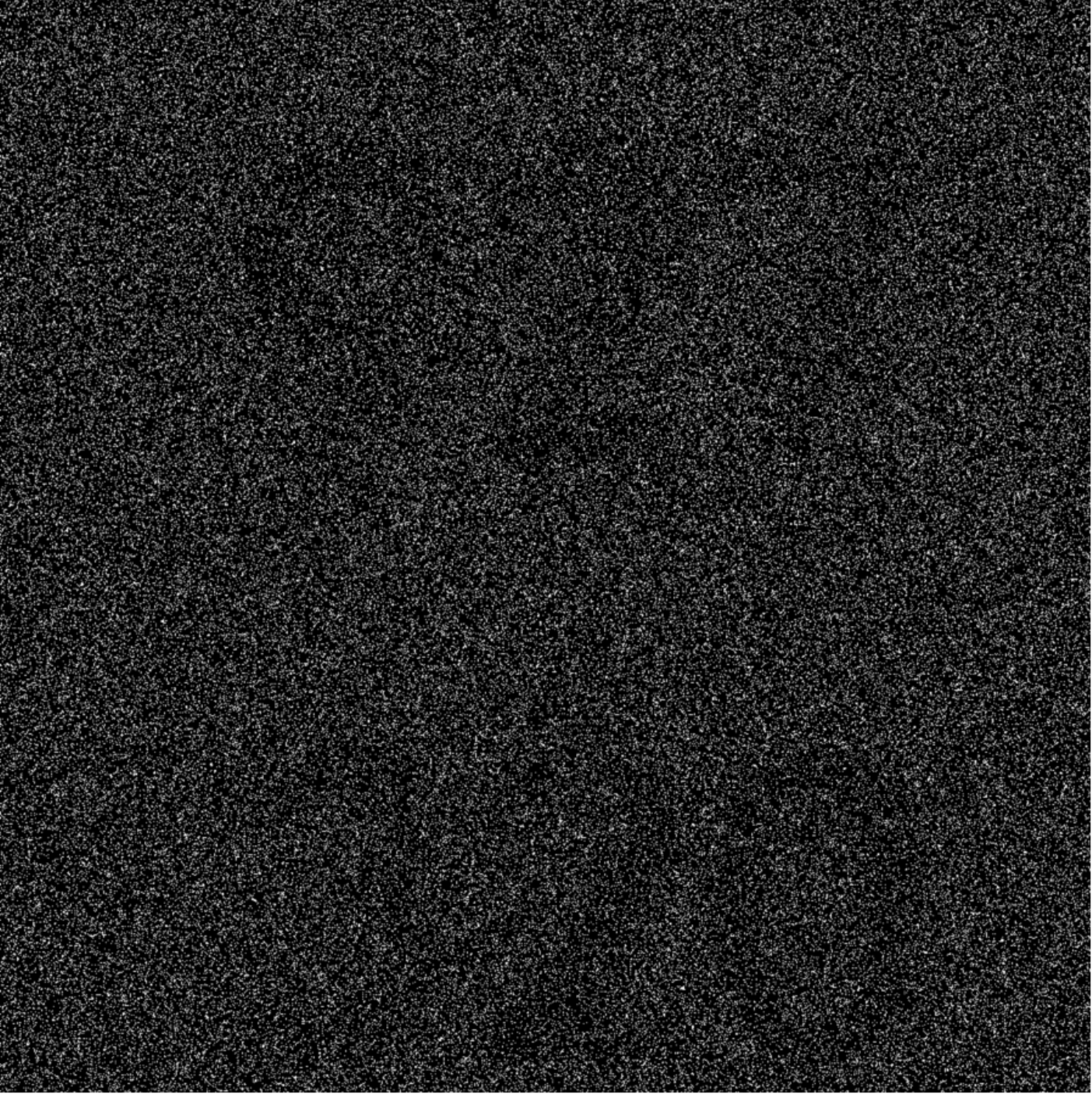}}\hspace{-0.45em}
  \subfloat[]{
      \includegraphics[width=0.09\textwidth]{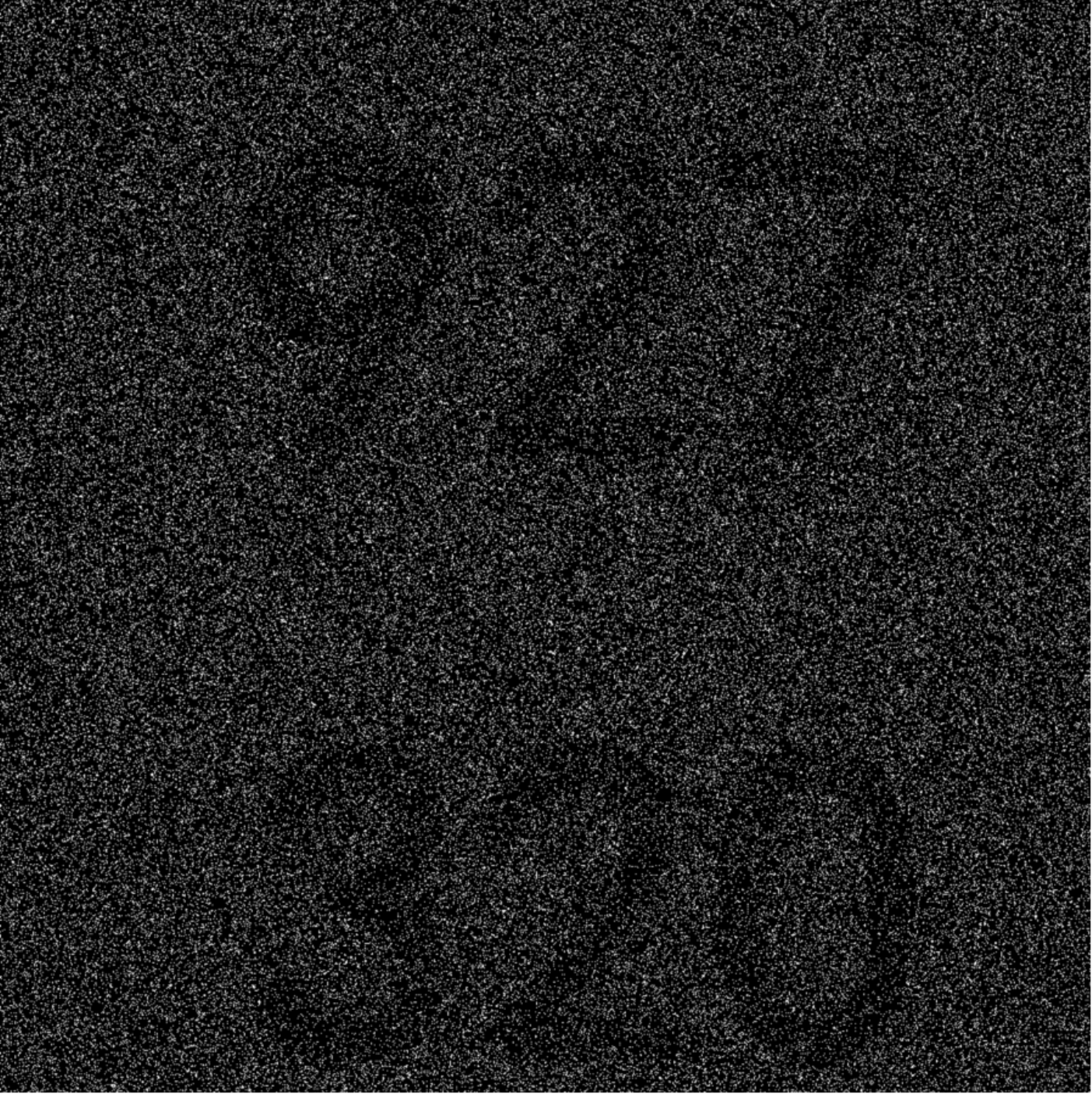}}\hspace{-0.45em}
  \subfloat[]{
      \includegraphics[width=0.09\textwidth]{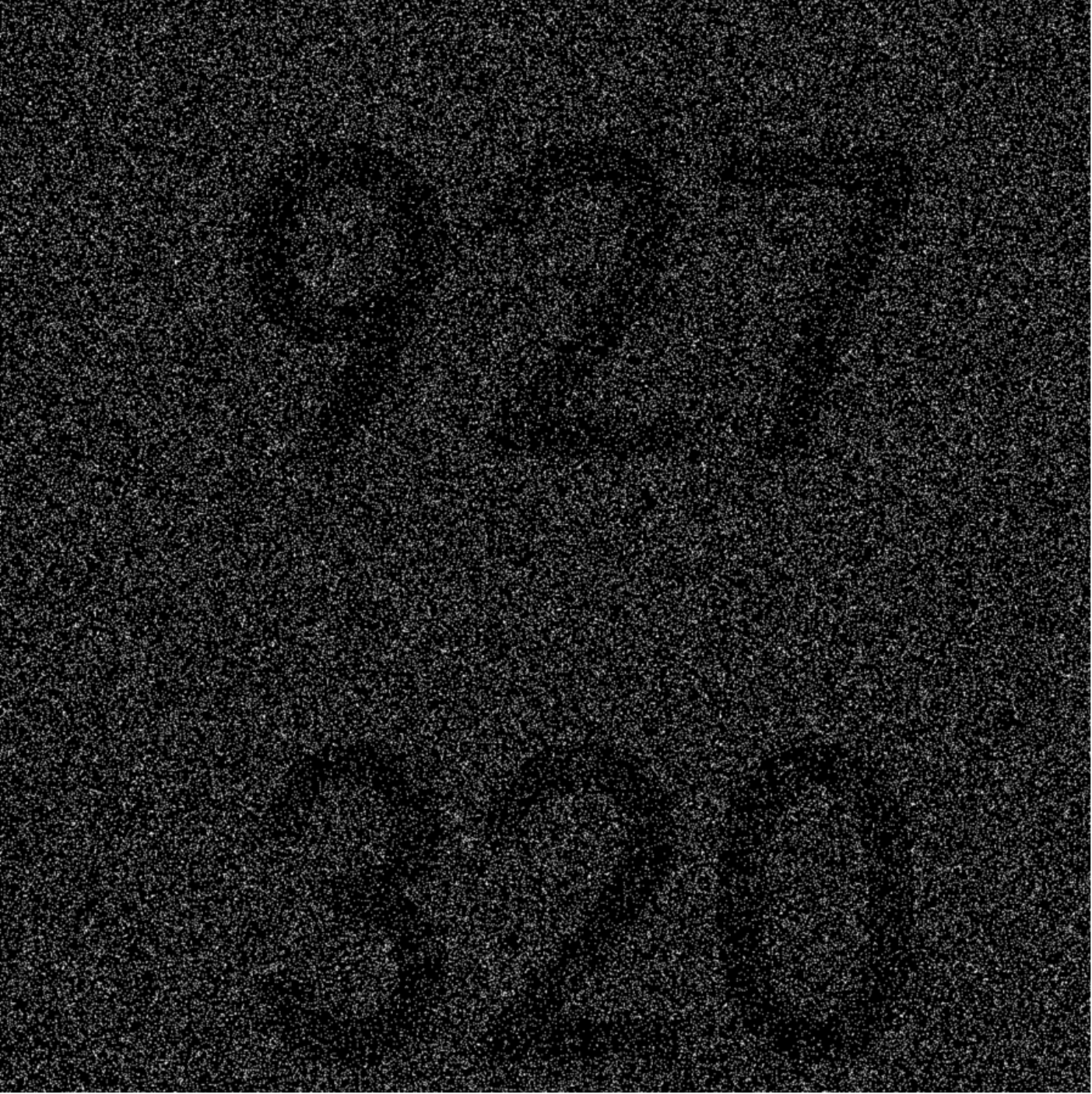}}\hspace{-0.45em}\\
  \subfloat[]{
      \includegraphics[width=0.09\textwidth]{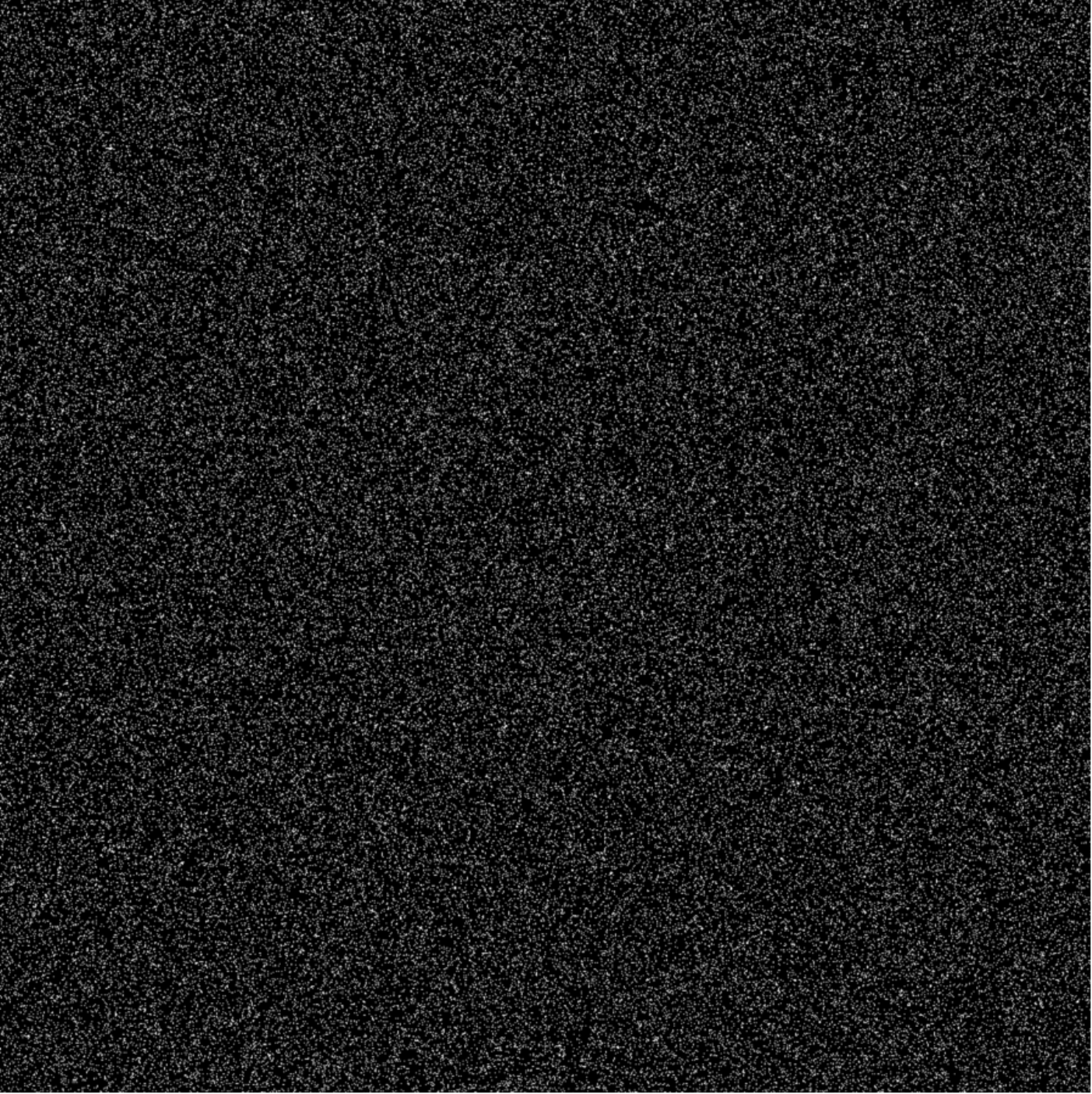}}\hspace{-0.45em}
    \subfloat[]{
        \includegraphics[width=0.09\textwidth]{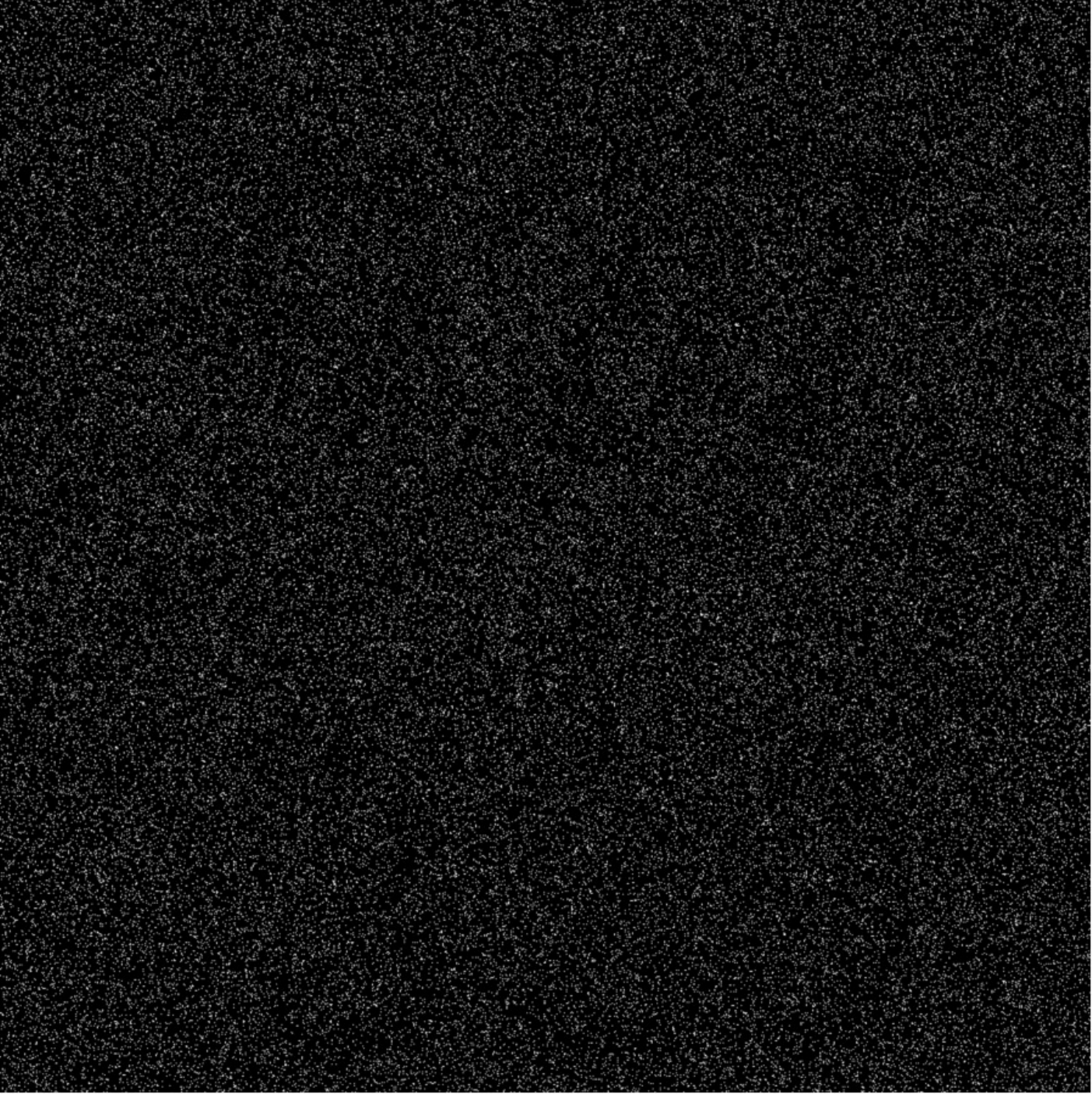}}\hspace{-0.45em}
    \subfloat[]{
        \includegraphics[width=0.09\textwidth]{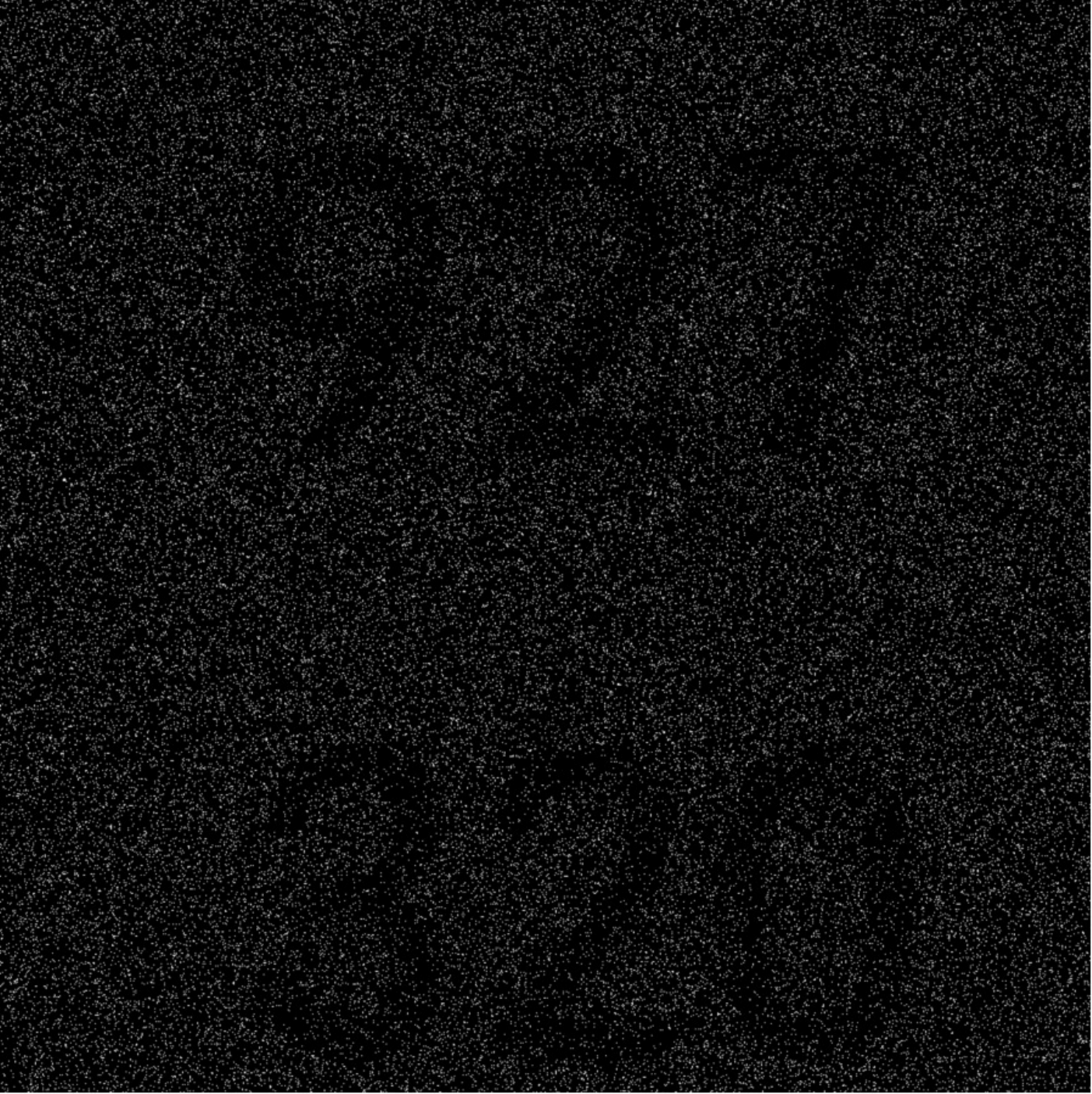}}\hspace{-0.45em}
    \subfloat[]{
      \includegraphics[width=0.09\textwidth]{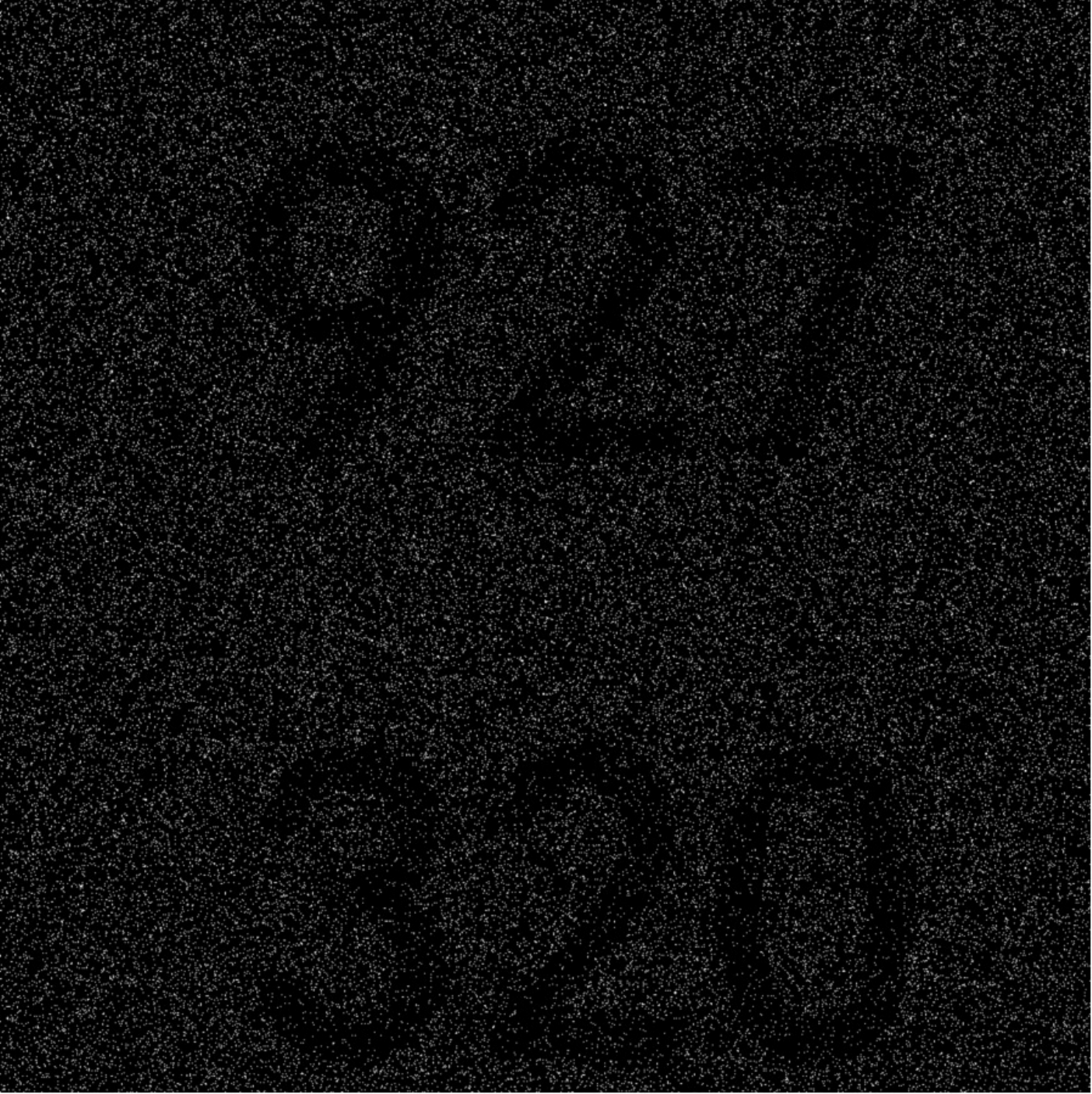}}
  \caption{\small{Recovery performance of $(4,\infty)$ RGVCS and $(5,\infty)$ RGVCS recovered with multiple shadow images.
  (a)-(d) The recovered image obtained by stacking with $4,5,6,7$ shadow images in $(4,\infty)$ RGVCS, 
  their contrast are $0.0106$, $0.0258$, $0.0450$, and $0.0606$, respectively;
  (e)-(h) The recovered image obtained by stacking with $5,7,10,11$ shadow images in $(5,\infty)$ RGVCS, 
  their contrast are $0.0021$, $0.0126$, $0.0312$, and $0.0379$, respectively.
   }
  }
  \label{fig:stacking_t}
\end{figure}

Figure \ref{fig:xor_schemes_4} illustrates the recovery performance of $(4,\infty)$ RGVCS with OR-based and XOR-based recovery, 
and Figure \ref{fig:xor_schemes_5} provides the recovery performance of $(5,\infty)$ RGVCS with XOR-based recovery.  
Notably, the recovered images can't achieve visually acceptable quality of $(4,\infty)$ RGVCS using OR-based recovery. 
Compared with the results shown in Figure \ref{fig:xor_schemes_4} (a)-(e) using OR-based recovery, 
the reconstruction quality of (f)-(j) demonstrate significant improvement. 
Furthermore, Figure \ref{fig:xor_schemes_4} (f) and Figure \ref{fig:xor_schemes_5} (a) can achieve lossless reconstruction. 
Both the contrast values and the image reconstruction results 
consistently demonstrate the superiority of 
XOR-based recovery compared to conventional OR-based recovery.

Figure \ref{fig:stacking_t} presents the recovered images of $(4,\infty)$ RGVCS and $(5,\infty)$ RGVCS 
when stacking multiple shadow images. 
observation reveals that as the number of superimposed shadow images increases, 
the visibility of the recovered image gradually improves. 
In $(4,\infty)$ RGVCS, the information in the recovered image becomes fully recognizable after stacking $7$ shadow images, 
while in $(5,\infty)$ RGVCS, when $11$ shadow images are stacked, 
although the clarity could still be further improved, 
the revealed information in the recovered image can already be discerned by human vision.

\begin{table}[t]
\caption{Values of $n$  }
\label{tab:n_values}
\begin{adjustbox}{center}
    {
    \begin{tabular}{ccccc}
    \hline
    \multirow{3}{*}{Values of $k$} & \multicolumn{4}{c}{Schemes}                                                                       \\ \cline{2-5} 
                                & \multicolumn{2}{c}{$(k,\infty)$ RGVCS} & \multirow{2}{*}{} & \multirow{2}{*}{better $(k,\infty)$ VCS} \\
                                & OR                & XOR                &                   &                                      \\ \hline
    2                              & 61                & 15                &                   & 43                                   \\
    3                              & 34                & 13                &                   & 34                                   \\
    4                              & 17                & 13                 &                   & /                                    \\
    5                              & 11                & 11                 &                   & /                                    \\
    6                              & 8                 & 10                 &                   & /                                    \\ \hline
    \end{tabular}
    }
\end{adjustbox}
\end{table}

\begin{table}[!t]
\caption{Contrast values}
\label{tab:n_contrast_values}
\begin{adjustbox}{center}  
    {
    \begin{tabular}{ccccc}
    \hline
    \multirow{3}{*}{Values of $k$} & \multicolumn{4}{c}{Schemes}                                                                       \\ \cline{2-5} 
                                & \multicolumn{2}{c}{$(k,\infty)$ RGVCS} & \multirow{2}{*}{} & \multirow{2}{*}{better $(k,\infty)$ VCS} \\
                                & OR                & XOR                &                   &                                      \\ \hline
    2                              & 0.2045         & 0.3777                &                   & 0.2111                                   \\
    3                              & 0.0499         & 0.2101                &                   & 0.0540                                   \\
    4                              & 0.0153         & 0.1042                 &                   & /                                    \\
    5                              & 0.0064         & 0.0731                 &                   & /                                    \\
    6                              & 0.0050         & 0.0536                 &                   & /                                    \\ \hline
    \end{tabular}
    }
\end{adjustbox}
\end{table}

To evaluate the experimental contrast performance of the proposed schemes when $n$ tends to infinity, 
we perform convergence analysis on parameter $n$ based on the following criterion: 
\begin{equation*}
    \lvert \alpha(n) - \alpha_{\infty} \rvert < \varepsilon ,
\end{equation*}
where $\alpha(n)$ denotes the theoretical contrast for each scheme at a specific $n$, 
$\alpha_{\infty}$ denotes the theoretical contrast as $n$ approaches infinity, 
and $\varepsilon$ is the convergence threhold quantifying the discrepancy between $\alpha(n)$ and $\alpha_{\infty}$. 
For computational convenience, 
we set the parameter $\varepsilon$ 
for $(k,\infty)$ RGVCS with OR-based recovery, $(k,\infty)$ RGVCS with XOR-based recovery, 
and better $(2,\infty)$, $(3,\infty)$ VCS to $0.005$, $0.05$, and $0.005$, respectively. 
The corresponding values of $n$ and experimental results for each scheme are listed in 
Table \ref{tab:n_values} and Table \ref{tab:n_contrast_values}.

\subsection{Comparison Results}
In this subsection, 
we conduct a comparative analysis between the proposed schemes and existing approaches  
from both contrast values and scheme features. 

\subsubsection{contrast comparison}
Regarding the comparative schemes, 
since \cite{wu2023tmm} involves a trade-off between security and visual quality that may lead to secret leakage, 
and \cite{wu2025evcs} suffers from pixel expansion issues, 
these two approaches are excluded from our comparative analysis. 
We select \cite{chen2012jvcir} and \cite{lin2012tifs} as the baseline methods for performance evaluation.

\begin{table}[!t]
\caption{Theoretical contrast comparison with other schemes under $(k,\infty)$-threshold}
\label{tab:infity_contrast}
\begin{adjustbox}{center}
    {
\begin{tabular}{ccccccccc}
\toprule
\multirow{3}{*}{Values of $k$} & \multicolumn{8}{c}{Schemes}                                                                       \\ \cline{2-9} 
                               & \multirow{2}{*}{Ref. \cite{chen2012jvcir}} & \multirow{2}{*}{} & \multirow{2}{*}{Ref. \cite{lin2012tifs}\footnotemark} & \multirow{2}{*}{} & \multicolumn{2}{c}{$(k,\infty)$ RGVCS} & \multirow{2}{*}{} & \multirow{2}{*}{better $(k,\infty)$ VCS} \\
                               &                          &                   &                          &                   & OR & XOR &                   &                                      \\ \midrule
2                              & $(\sqrt{2}-1)/2$   &                   &1/5             &                   & 1/5     & 1/3        &                   & $(\sqrt{2}-1)/2$               \\[5pt]
3                              & /                        &                   &2/41            &                   & 1/22    & 4/25       &                   & 2/41                       \\[5pt]
4                              & /                        &                   &1/81            &                   & 1/99    & 3/49       &                   & /                                    \\[5pt]
5                              & /                        &                   &2/637           &                   & 1/462   & 16/617     &                   & /                                    \\[5pt]
6                              & /                        &                   &2/2509          &                   & 15/33361& 15/1474    &                   & /                                    \\[2pt] \bottomrule
\end{tabular}
    }
\end{adjustbox}
\end{table}
\footnotetext{
    The contrast in Ref. \cite{lin2012tifs} adopted a subtractive formulation, 
    whereas the values presented here have been converted to a divisive form. 
              }

Theoretical contrast values for various $(k,\infty)$ schemes are tabulated in Table \ref{tab:infity_contrast}. 
The data reveals that: 
the proposed $(k,\infty)$ RGVCS with OR-based recovery demonstrates inferior contrast performance to \cite{lin2012tifs} when $2\leq k \leq 6$, 
but it maintains the advantage of being applicable for arbitrary $k$ values. 
The proposed $(k,\infty)$ RGVCS with XOR-based recovery shows significant contrast improvement, outperforming \cite{lin2012tifs} across all tested cases. 
Additionally, the better $(2,\infty)$ VCS achieves comparable contrast to \cite{chen2012jvcir}, 
and the better $(3,\infty)$ VCS matches \cite{lin2012tifs}'s contrast performance.

Additionally, Figure \ref{fig:better_schemes} presents a theoretical contrast comparison of different schemes 
under the thresholds of $k=2$ and $k=3$. 
As observed from the figure, 
better $(2,\infty)$ VCS is strictly better than Ref. \cite{lin2012tifs}, 
and relatively better than Ref. \cite{chen2012jvcir}, 
and $(2,\infty)$ RGVCS is relatively better than Ref. \cite{lin2012tifs}
according to Definition \ref{defnition:better_definition}. 
For the case of $k = 3$, 
better $(3,\infty)$ VCS is relatively better than Ref. \cite{lin2012tifs}. 
 
\begin{figure}[t]
  \centering
  \captionsetup[subfloat]{font=small, labelfont=rm}
  \subfloat[]{
    \includegraphics[width=0.45\textwidth]{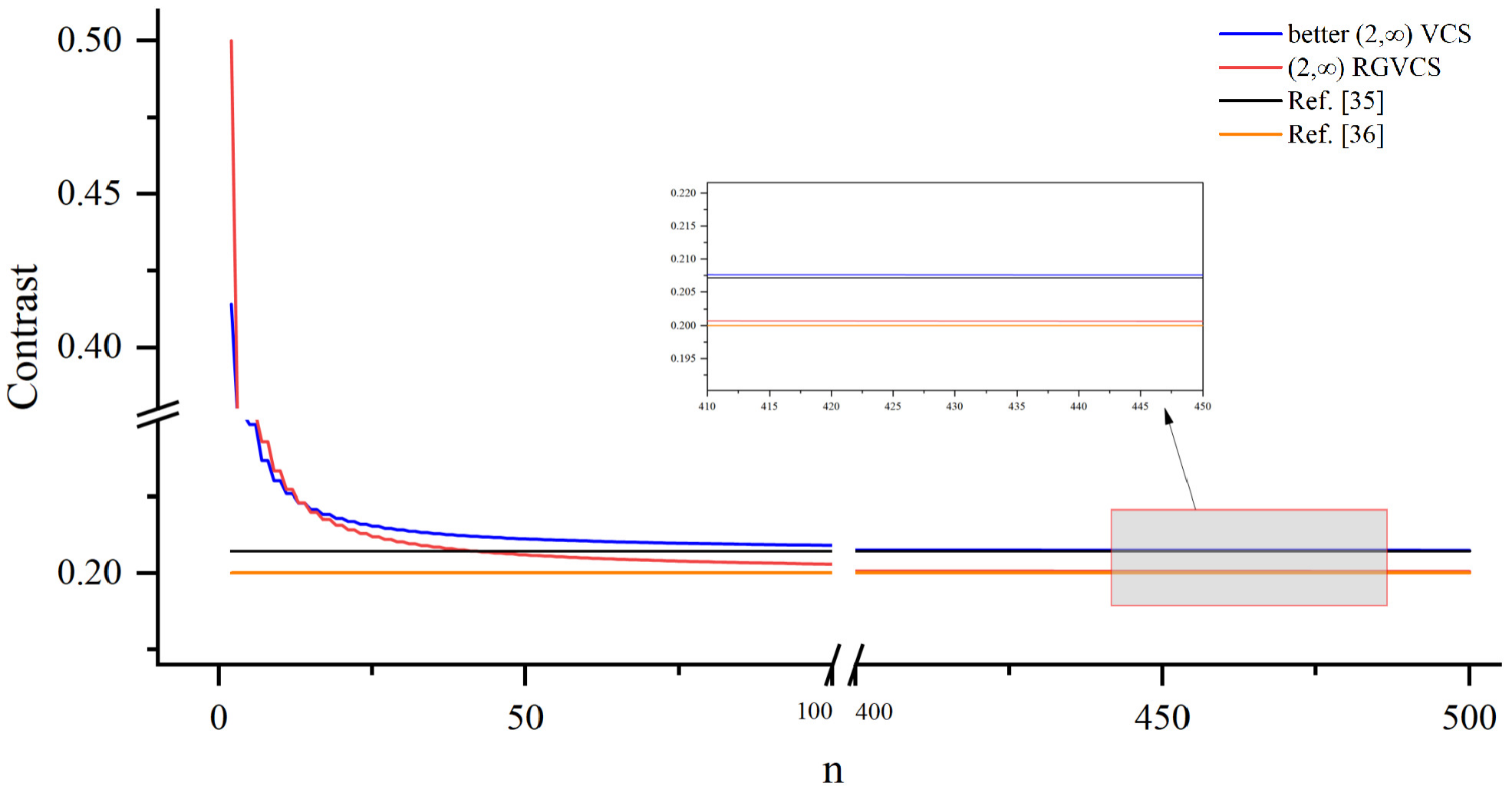}}\hspace{-0.45em}
  \subfloat[]{
    \includegraphics[width=0.48\textwidth]{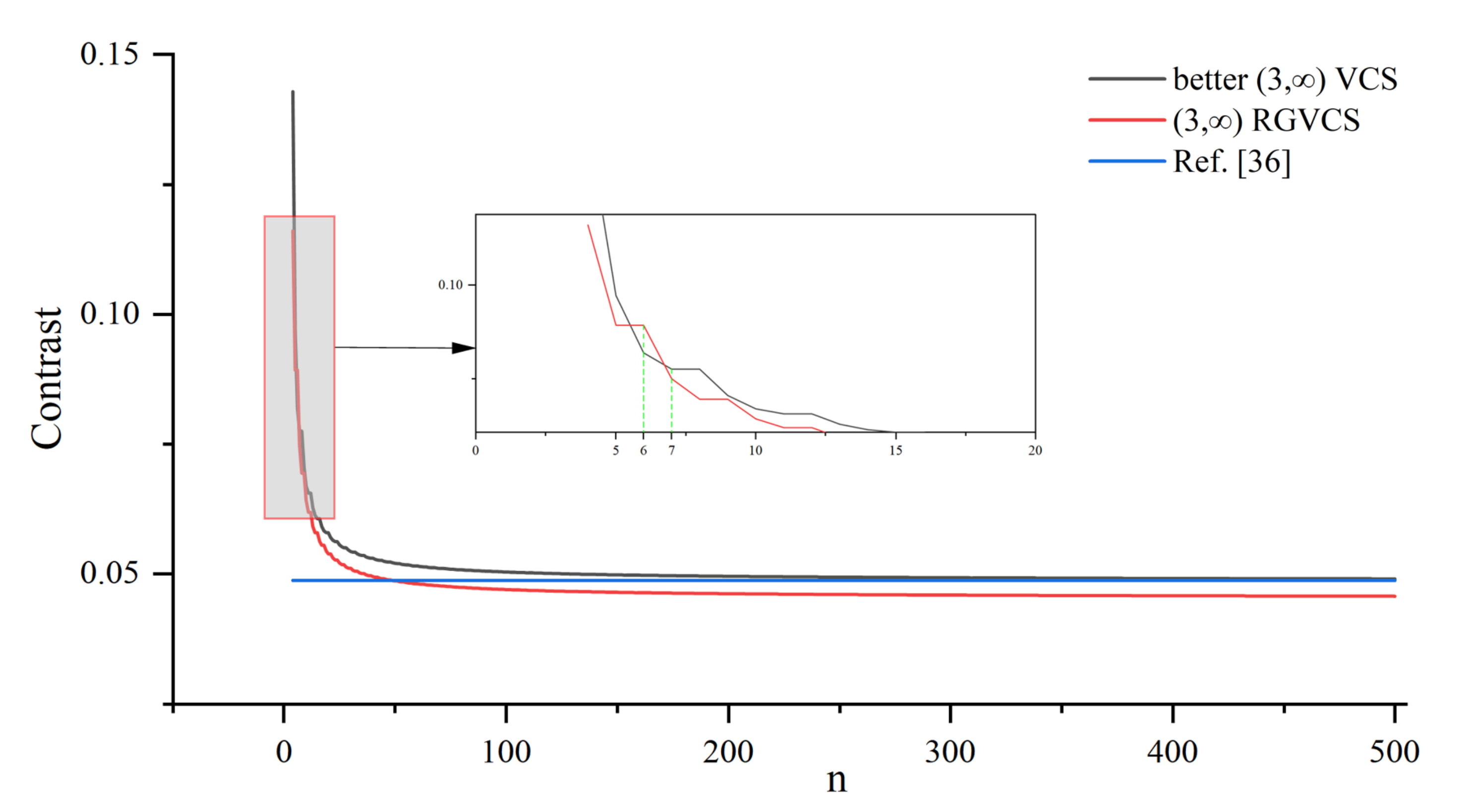}}\hspace{-0.45em}
  \caption{\small{Theoretical contrast comparison between schemes. (a) $k = 2$; (b) $k = 3$.} 
  }
  \label{fig:better_schemes}
\end{figure}

\begin{table}[t]
\caption{Feature comparison with other schemes}
\label{tab:features}
\begin{adjustbox}{center}
\begin{tabular}{ccccccc}
\toprule
\multirow{2}{*}{Scheme}                    & \multicolumn{6}{c}{Feature}                                                       \\ \cline{2-7} 
                                           & Values of $k$     &  & Pixel expansion &  & Decoding way  &     \\ \midrule
Ref. \cite{chen2012jvcir} & $k = 2$           &  & No              &  & Stacking      &                      \\
Ref. \cite{lin2012tifs}   & $2 \leq k \leq 6$ &  & No              &  & Stacking      &  \\
Ref. \cite{wu2023tmm}     & $2 \leq k \leq 7$ &  & No              &  & Stacking; XOR &   \\
Ref. \cite{wu2025evcs}    & Arbitrary         &  & Yes             &  & Stacking; XOR &                      \\
Our $(k,\infty)$ RGVCS                                      & Arbitrary         &  & No              &  & Stacking; XOR &                     \\ \bottomrule
\end{tabular}
\end{adjustbox}
\end{table}

\subsubsection{feature comparison}
We conduct a comparative analysis of the features of existing $(k,\infty)$ schemes on VCS, 
as detailed in Table \ref{tab:features}. 
Compared with existing approaches, 
our proposed $(k,\infty)$ RGVCS demonstrates the following key advantages: 
compatible with arbitrary $k$-values, free from pixel expansion, and capable of OR and XOR recovery.

\section{conclusion}\label{section:conclusion}
In this paper, 
we present a construction of $(k,\infty)$ RGVCS that eliminates pixel expansion while 
maintaining compatibility with arbitrary $k$-values. 
Furthermore, the enhanced-contrast schemes for $k=2$ and $3$ are proposed, 
and contrast-boosting strategies for $k\geq 4$ are developed. 
The future work will focus on 
investigating the upper limit of participants beyond which no further improvement in contrast can be achieved for the proposed schemes, 
as well as 
exploring the minimum number of shadow images required in $(k,\infty)$ RGVCS with OR-based recovery for $k\geq 4$ 
to transition the recovered images from invisible to visible.

\bibliography{reference}
\bibliographystyle{IEEEtran}
\end{document}